\documentclass[11pt]{article}
\usepackage{fullpage,amsthm,amssymb,amsmath}
\usepackage{enumerate}
\usepackage{hyperref}
\usepackage[normalem]{ulem}

\usepackage{tikz} 
\usetikzlibrary{shapes,fit}
\usetikzlibrary{calc}
\usetikzlibrary{decorations.pathmorphing}

\hypersetup{
    pdftitle=   {Generalized feedback vertex set problems on bounded-treewidth graphs: chordality is the key to single-exponential parameterized algorithms},
   pdfauthor=  {\'Edouard Bonnet, Nick Brettell, O{-}joung Kwon, and D\'aniel Marx}
}
\usepackage{authblk}
\usepackage{mathabx}
\usepackage{subcaption}
\newtheorem{theorem}{Theorem}[section]
\newtheorem{lemma}[theorem]{Lemma}
\newtheorem{proposition}[theorem]{Proposition}

\newtheorem{claim}{Claim}
\newenvironment{clproof}{\begin{list}{}{%
              \setlength{\leftmargin}{5mm}%
              } \item {\it Proof.} }{\hfill$\lozenge$\end{list}\medskip}

\usepackage{amssymb}
\usepackage{xspace}
\usepackage{algorithm}
\usepackage{algpseudocode2}

\usepackage{todonotes}
\usepackage{graphicx}
\usepackage{thm-restate}
\usepackage{comment}

\usepackage{hyperref}
\usepackage[capitalise,compress]{cleveref}
\newcommand\tw{\operatorname{tw}}

\newcommand\cC{\mathcal{C}}
\newcommand\cF{\mathcal{F}}

\newcommand\cR{\mathcal{R}}
\newcommand\cP{\mathcal{P}}

\newcommand\cB{\mathcal{B}}

\newcommand\cZ{\mathcal{Z}}
\newcommand\cY{\mathcal{Y}}
\newcommand\cA{\mathcal{A}}
\newcommand\cO{\mathcal{O}}

\newcommand\cM{\mathcal{M}}
\newcommand\cK{\mathcal{K}}
\newcommand\cU{\mathcal{U}}

\newcommand\cX{\mathcal{X}}

\newcommand{\YES}{\textsc{Yes}}
\newcommand{\NO}{\textsc{No}}

\newcommand{\Comp}{\operatorname{Comp}}
\newcommand{\Block}{\operatorname{Block}}
\newcommand{\Part}{\mathbf{Aux}}
\newcommand{\Parti}{\operatorname{Part}}

\newcommand{\Inc}{\mathbf{Inc}}

\newcommand{\FVS}{\textsc{Feedback Vertex Set}\xspace}
\newcommand{\COC}{\textsc{Component Order Connectivity}\xspace}
\newcommand{\MC}{\textsc{Multicolored Clique}\xspace}
\newcommand{\SI}{\textsc{Subgraph Isomorphism}\xspace}

\newcommand{\BPBVD}{\textsc{Bound\-ed $\mathcal{P}$-Block Vertex Deletion}\xspace}
\newcommand{\BPCVD}{\textsc{Bound\-ed $\mathcal{P}$-Component Vertex Deletion}\xspace}

\newcommand{\kkIS}{\textsc{$k \times k$ Independent Set}\xspace}
\newcommand{\PkkIS}{\textsc{Permutation $k \times k$ Independent Set}\xspace}
\newcommand{\PkkC}{\textsc{Permutation $k \times k$ Clique}\xspace}

\renewcommand{\leq}{\leqslant}
\renewcommand{\geq}{\geqslant}
\renewcommand{\le}{\leqslant}
\renewcommand{\ge}{\geqslant}

\newcommand\p[2]{\langle #1,#2 \rangle}
\newcommand\abs[1]{\lvert #1\rvert}

\newcommand{\Problem}[5]{
\ \\
\noindent
\begin{tabular}{|p{0.97\linewidth}|}
  \hline
  #1 \hfill \textbf{Parameter: }#2\\
  \textbf{Input: }#3\\
  \textbf{#5 }#4\\
  \hline
\end{tabular}\\
}

\newcommand{\ProblemQ}[4]{\Problem{#1}{#2}{#3}{#4}{Question:}}

\begin{document}
\title{Generalized feedback vertex set problems on bounded-treewidth graphs: chordality is the key to single-exponential parameterized algorithms\thanks{All authors were supported by ERC Starting Grant PARAMTIGHT (No. 280152).}}

\author[1]{\'Edouard Bonnet}
\author[2]{Nick Brettell}
\author[3]{O-joung Kwon\thanks{Supported by the National Research Foundation of Korea (NRF) grant funded by the Ministry of Education (No. NRF-2018R1D1A1B07050294).}}
\author[4]{D\'aniel Marx\thanks{Supported by ERC Consolidator Grant SYSTEMATICGRAPH (No. 725978).}}

\affil[1]{Univ Lyon, CNRS, ENS de Lyon, Universit\'e Claude Bernard Lyon 1, LIP UMR5668, France}
\affil[2]{Department of Mathematics and Computer Science, Eindhoven University of Technology, The Netherlands}
\affil[3]{Department of Mathematics, Incheon National University, Incheon, South Korea}
\affil[4]{Institute for Computer Science and Control, Hungarian Academy of Sciences, (MTA SZTAKI)}

\date\today
\maketitle

\makeatletter
\def\blfootnote{\gdef\@thefnmark{}\@footnotetext}
\makeatother

\blfootnote{E-mail addresses: \texttt{edouard.bonnet@dauphine.fr} (E.\ Bonnet), \texttt{nbrettell@gmail.com} (N.\ Brettell), \texttt{ojoungkwon@gmail.com} (O.\ Kwon), \texttt{dmarx@cs.bme.hu} (D.\ Marx) \\
 An extended abstract appeared in Proceedings of the 12th International Symposium on Parameterized and Exact Computations, 2017~\cite{BonnetBKM2017}. The corresponding author is O-joung Kwon.}

\begin{abstract}
  It has long been known that \textsc{Feedback Vertex Set} can
  be solved in time $2^{\mathcal{O}(w\log w)}n^{\mathcal{O}(1)}$ on $n$-vertex graphs of treewidth
  $w$, but it was only recently that this running time was improved to
  $2^{\mathcal{O}(w)}n^{\mathcal{O}(1)}$, that is, to single-exponential
  parameterized by treewidth. We investigate which generalizations of
  \textsc{Feedback Vertex Set} can be solved in a similar running time.
  Formally, for a class $\mathcal{P}$ of graphs, the \textsc{Bounded $\mathcal{P}$-Block Vertex Deletion} problem
  asks, given a graph~$G$ on $n$ vertices and positive integers~$k$
  and~$d$, whether $G$ contains a set~$S$ of at most $k$ vertices
  such that each block of $G-S$ has at most $d$
  vertices and is in $\mathcal{P}$. Assuming that $\mathcal{P}$ is
  recognizable in polynomial time and satisfies a certain natural
  hereditary condition, we give a sharp characterization of when
  single-exponential parameterized algorithms are possible for fixed
  values of $d$:
\begin{itemize}
\item if $\mathcal{P}$ consists only of chordal graphs, then the problem can be solved in time $2^{\mathcal{O}(wd^2)} n^{\mathcal{O}(1)}$, 
\item if $\mathcal{P}$ contains a graph with an induced cycle of length $\ell\ge 4$, then the problem is not solvable in time $2^{o(w\log w)} n^{\mathcal{O}(1)}$ 
even for fixed $d=\ell$, unless the ETH fails.
\end{itemize}
We also study a similar problem, called \textsc{Bound\-ed $\mathcal{P}$-Component Vertex Deletion}, where the target graphs have connected components of small size rather than blocks of small size, 
 and we present analogous results. For this problem, we also show that if $d$ is part of the input and $\mathcal{P}$ contains all chordal graphs, 
 then it cannot be solved in time $f(w)n^{o(w)}$ for some function $f$, unless the ETH fails.
  \end{abstract}

\section{Introduction}\label{sec:intro}

Treewidth is a measure of how well a graph accommodates a decomposition into a tree-like structure.
In the field of parameterized complexity, many NP-hard problems have been shown to have FPT algorithms when parameterized by treewidth; for example, \textsc{Coloring}, \textsc{Vertex Cover},  \textsc{Feedback Vertex Set}, and \textsc{Steiner Tree} (see \cite[Section~7]{Cygan15} for further examples). 
In fact, Courcelle~\cite{Courcelle1990} established a meta-theorem that says that every problem definable in MSO$_2$ logic can be solved in linear time on graphs of bounded treewidth. While Courcelle's Theorem is a very general tool for obtaining algorithmic results, for specific problems dynamic programming techniques usually give algorithms where the running time $f(w)n^{\cO(1)}$ has better dependence on treewidth $w$. There is some evidence that a careful implementation of dynamic programming (plus maybe some additional ideas) gives optimal dependence for some problems (see, e.g., \cite{LokshtanovMS11a}). 

For \textsc{Feedback Vertex Set}, standard dynamic programming techniques give $2^{\cO(w\log w)}n^{\cO(1)}$-time algorithms and it was considered plausible that this could be the best possible running time. Hence, it was a remarkable surprise when it turned out that $2^{\cO(w)}n^{\cO(1)}$-time algorithms are also possible for this problem by various techniques:
Cygan et al.~\cite{Cygan2011} obtained a $3^w n^{\mathcal{O}(1)}$-time randomized algorithm by using the so-called Cut \& Count technique, and Bodlaender et al.~\cite{BodlaenderCKN2015} showed there is a deterministic $2^{\mathcal{O}(w)} n^{\mathcal{O}(1)}$-time algorithm by using a rank-based approach and the concept of representative sets. 
This was also later shown in the more general setting of representative sets in matroids by Fomin et al.~\cite{FedorDS2014}.

\paragraph*{Generalized feedback vertex set problems.} In this paper, we explore the extent to which these results apply for generalizations of \textsc{Feedback Vertex Set}. The \textsc{Feedback Vertex Set} problem asks for a set $S$ of at most $k$ vertices such that $G-S$ is acyclic, or in other words, every block of $G-S$ is a single edge or a vertex. %
We consider generalizations where we allow the blocks to be some other type of small graph, such as triangles, small cycles, or small cliques; these generalizations were first studied in \cite{BonnetBKM2016}. 

Formally, we consider the following problem.
Let $\cP$ be a class of graphs. 

\ProblemQ{\BPBVD}{$d$, $w$}{A graph $G$ of treewidth at most $w$, and positive integers $d$ and $k$.}{Is there a set $S$ of at most $k$ vertices in $G$ such that each block of $G-S$ has at most $d$ vertices and is in $\cP$?}

	If $d=1$ or $\mathcal{P}=\{K_1\}$, then this problem is equivalent to the \textsc{Vertex Cover} problem.
	It is well known that \textsc{Vertex Cover} admits a $2^{\cO(w)}n^{\cO(1)}$-time algorithm; see \cite{Cygan15} for instance.
	Moreover, if either ($d=2$ and $\{K_1, K_2\}\subseteq \mathcal{P}$) or ($d\ge 3$ and $\mathcal{P}=\{K_1, K_2\}$), 
	then this problem is equivalent to the  \textsc{Feedback Vertex Set} problem.
	In this case, the result of Bodlaender et al.~\cite{BodlaenderCKN2015} implies that \BPBVD can be solved in time $2^{\mathcal{O}(w)} n^{\mathcal{O}(1)}$.
	Our main question is: when we regard $d$ as a fixed constant,
    for which graph classes $\cP$ can this problem be solved in time $2^{\mathcal{O}(w)}n^{\mathcal{O}(1)}$?

	To obtain a general result, we require some assumptions on the class 
$\cP$. First, in order to ensure that the solution can be checked in 
polynomial time, we assume that $\cP$ can be recognized in polynomial 
time. Second, for deletion problems, it is usually reasonable to assume 
that a superset of a solution $S$ is also a solution: deleting more 
vertices never hurts. If we define $\cC_{\cP}$ to be the class of graphs 
where every block is in $\cP$, then we want to consider deletion 
problems where $\cC_{\cP}$ is \emph{hereditary}; that is, for every 
graph $G\in \cC_{\cP}$ and every induced subgraph $H$ of $G$, we have 
$H\in \cC_{\cP}$. It is easy to see that if $\cP$ is hereditary, then 
$\cC_{\cP}$ is also hereditary. However, for technical reasons, in our 
setting it is more natural to consider a slightly weaker notion. Suppose 
that we want to express the problem "Delete $k$ vertices such that every 
block is a cycle or an edge." We can express this problem by letting 
$\cP$ be the class containing $K_1$, $K_2$, and every cycle. But this 
class is not hereditary: to make $\cP$ hereditary, we would need to add 
every path and disjoint union of paths; but clearly, these 
(non-biconnected) graphs are irrelevant for our problem. Therefore, it 
is natural to require $\cP$ to be \emph{block-hereditary} only: for 
every $G\in \cP$ and every biconnected induced subgraph $H$ of $G$, we 
have $H\in \cP$. The class consisting of $K_1$, $K_2$, and all cycles is 
block-hereditary.

	However, these two conditions are not sufficient to obtain single-exponential algorithms parameterized by treewidth.
	A graph is \emph{chordal} if it has no induced cycles of length at least $4$. 
	The main result of this paper is that the existence of single-exponential algorithms is closely linked to whether the graphs in $\cP$ we are allowing are all chordal or not.
	We show that if $\cP$ consists of all chordal graphs and satisfies the two previously mentioned conditions, then 
	\BPBVD can be solved in single-exponential time.

\begin{restatable}{theorem}{mainthmb}
\label{thm:main1b}
Let $\cP$ be a class of graphs that is block-hereditary, recognizable in polynomial time, and consists of only chordal graphs.
Then \BPBVD can be solved in time $2^{\mathcal{O}(wd^2)}k^2 n$ on graphs with $n$ vertices and treewidth $w$.
\end{restatable}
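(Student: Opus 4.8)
The plan is to follow the now-standard "dynamic programming plus representative sets" framework used to obtain single-exponential algorithms for \FVS, and to adapt it so that we track not merely a forest structure on the boundary but the block structure of the partial solution restricted to the bag. The first step is preprocessing: by standard arguments we may assume we are given a nice tree decomposition of width $\cO(w)$, and (since $d$ is fixed) we may also assume that $G$ has no clique of size larger than $d+1$ and no induced cycle of length $>d$ crossing too few bag vertices --- in fact we can contract the search space by observing that any block of $G-S$ has at most $d$ vertices, so within any bag of the decomposition at most $\cO(wd)$ "block types" can be witnessed. The algorithm processes the tree decomposition bottom-up; at each node $t$ with bag $X_t$ it maintains, for each subset $Z\subseteq X_t$ representing $S\cap X_t$, a collection of \emph{states} describing the interaction of a candidate partial solution with $X_t\setminus Z$, and for each state the minimum number of deleted vertices achieving it.

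**The key design choice** is what a state records. Because every block of $G-S$ has at most $d$ vertices and $\cP$ consists of chordal graphs, the portion of $G-S$ already "seen" below $t$ induces on $X_t\setminus Z$ a graph whose block structure is highly constrained: the blocks containing a boundary vertex are cliques-or-small-chordal-graphs of size $\le d$, and the way boundary vertices are distributed among blocks and cut vertices is describable by a partition-like object on $X_t\setminus Z$ augmented with $\cO(wd)$ bits of extra bookkeeping (which boundary vertices are cut vertices, which small blocks are already "complete" and must not grow, the isomorphism type of each boundary-incident block, etc.). Thus each state is encodable in $\cO(wd^2)$ bits --- the $d^2$ factor coming from needing, for each of the $\le w$ groups of boundary vertices sharing a block, to remember an $\cO(d\log d)$-size description of how the $\le d$ vertices of that block sit together. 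A naive DP over all states would cost $2^{\cO(wd\log(wd))}$, which is not single-exponential; so the crucial ingredient is to show that, for the purpose of extending to a valid global solution, only a \emph{representative subset} of states of size $2^{\cO(wd^2)}$ needs to be kept at each node. This is where we invoke a rank-based / matroid-representative-sets argument in the spirit of Bodlaender, Cygan, Kratsch and Nederlof \cite{BodlaenderCKN2015} and Fomin et al.\ \cite{FedorDS2014}: we define a suitable "connectivity matroid" (here really a product of graphic-matroid-like gadgets, one per potential block, capturing acyclicity-within-a-block) so that two partial solutions are interchangeable whenever their boundary states are equivalent modulo the matroid, reducing the number of states that must be retained to (a polynomial in) the rank, which is $\cO(wd^2)$.

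**The transitions** at introduce, forget, and join nodes are then routine but must be spelled out: at an introduce node we branch on whether the new vertex is in $S$; if not, we update each state by attaching the new vertex to the blocks determined by its edges into the bag, rejecting any update that creates a block exceeding $d$ vertices, fails to lie in $\cP$ (checked in polynomial time by the recognizability hypothesis), or merges two blocks illegally. At a forget node we "close off" any block all of whose vertices have left the bag --- at that moment we do a final membership-in-$\cP$ and size-$\le d$ check --- and project the state down. At a join node we combine two states on the same bag by taking the "union" of their block structures, again rejecting if any resulting block is too big or not in $\cP$ or if the union creates a cycle spanning multiple merged blocks; correctness here uses block-heredity to argue that the pieces we glue remain valid, and uses chordality crucially to guarantee that no \emph{new} long induced cycle is created by the gluing (a non-chordal $\cP$ would allow a long cycle to be assembled from two boundary-consistent halves that are individually fine, which is exactly the phenomenon exploited in the matching lower bound). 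Applying the representative-set reduction after each transition keeps every table at size $2^{\cO(wd^2)}$, so each of the $\cO(n)$ nodes costs $2^{\cO(wd^2)}n^{\cO(1)}$ time, and a standard iterative-compression or direct-threshold wrapper over the parameter $k$ contributes the claimed $k^2$ factor, giving total running time $2^{\cO(wd^2)}k^2 n$.

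**The main obstacle** I anticipate is the representative-sets step: it is not a priori clear that the exchange property we need ("if a partial solution with boundary state $\sigma$ extends to a global solution, then so does any partial solution with a state $\sigma'$ that is $M$-equivalent to $\sigma$ and no larger") actually holds, because extending a partial solution may require \emph{both} that no long induced cycle appears (a global, non-local-to-the-bag condition) \emph{and} that every closed-off block lies in $\cP$. Making this work requires (i) choosing the matroid/equivalence so fine that it captures enough of the block structure to certify both conditions, yet (ii) keeping its rank at $\cO(wd^2)$ so the representative family stays single-exponential --- the chordality hypothesis is precisely what lets us do (i) with a bounded-rank object, since in a chordal graph "no long induced cycle" is equivalent to the purely local statement that every minimal separator is a clique, which is enforceable blockwise. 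Getting the accounting of this trade-off exactly right, and verifying that the join operation's gluing interacts correctly with the representative-family computation (one typically must recompute a representative family \emph{after} each join, and argue the composition of two representative families is again representative), is the technical heart of the proof.
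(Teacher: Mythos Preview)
Your high-level architecture is right --- DP on a nice tree decomposition, states encoding block information in $\cO(wd^2)$ bits, representative sets to tame the partition part, and chordality as the enabler --- and this is indeed the skeleton of the paper's proof. But there is a concrete misidentification of \emph{what} the representative-set machinery is applied to, and the missing idea is exactly the one the paper isolates as its combinatorial core.

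You describe the matroid as ``a product of graphic-matroid-like gadgets, one per potential block, capturing acyclicity-within-a-block.'' This cannot be right: each block has at most $d$ vertices, so its internal structure is encoded \emph{explicitly} in the state (the paper's ``characteristic'' $(g,h)$ records, for every non-trivial block $B$ of the bag, a target $d$-labeled biconnected chordal graph $g(B)\in\cU_d$ and a label set $h(B)\subseteq[d]$ of neighbours of $B$ in its $S$-block). That is where the $2^{\cO(wd^2)}$ factor comes from, and no representative-set reduction is needed or used there. The representative-set step is applied only to the \emph{partition of the connected components of $G[B_t\setminus X]$} according to which component of the partial solution they lie in --- a single partition, not one per block.

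The idea you are missing is why that partition is the right object. The paper shows (Proposition~\ref{prop:nocircuit}) that once every $S$-block of $(G,S)\oplus(H,S)$ is chordal, the sum itself is chordal if and only if the auxiliary bipartite graph $\Part(G,S)\oplus\Part(H,S)$ has no cycles; and it shows (Proposition~\ref{prop:sumofchordalgraphs2}) that under a purely local ``block-wise $Q$-compatibility'' condition on the $h$-values plus this same no-cycle condition, every $S$-block of the sum is again partially label-isomorphic to its target $g(B)$. Both global obstructions --- a chordless cycle appearing, or an $S$-block failing to match its target --- are thus reduced to acyclicity of one bipartite incidence graph on the component partition. That is precisely the setting in which the BCKN rank-based technique applies verbatim (Section~\ref{sec:representative}), and the exchange property you worry about is then Theorem~\ref{thm:equivalence2}. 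Your sentence ``chordality lets us express `no long induced cycle' locally via minimal-separators-are-cliques'' is morally in the right direction, but the operative local statement is Lemma~\ref{lem:chordalseparator} (a bijection between components of $G-X$ and components of $G[N_G(X)]$ for connected $X$), which is what lets the label set $h(B)$ fingerprint the missing parts of an $S$-block; without naming this mechanism, your plan has no way to justify that two partial solutions with the same $(g,h)$ really are interchangeable.
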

\noindent

	We complement this result by showing that if $\cP$ contains a graph that is not chordal, then single-exponential algorithms are not possible (assuming ETH), even for fixed $d$.
	Note that 
	if $\cP$ is block-hereditary and contains a graph that is not chordal, then this graph contains a chordless cycle on $\ell\ge 4$ vertices, and consequently the cycle graph on $\ell$ vertices is also in $\cP$.
	
\begin{theorem}
  	\label{thm:main1c1}
    Let $\cP$ be a block-hereditary class of graphs that is polynomial-time recognizable.
	If $\cP$ contains the cycle graph on $\ell \ge 4$ vertices, then \BPBVD is not solvable in time $2^{o(w\log w)}n^{\cO(1)}$ on graphs with $n$ vertices and treewidth at most $w$ 
	even for fixed $d=\ell$, unless the ETH fails.
\end{theorem}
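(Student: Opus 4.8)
The plan is a polynomial-time reduction from the \kkIS problem: given a graph $H$ on vertex set $[k]\times[k]$, decide whether $H$ has an independent set containing exactly one vertex of each row $\{i\}\times[k]$. By Lokshtanov, Marx and Saurabh~\cite{LokshtanovMS11a}, \kkIS cannot be solved in time $2^{o(k\log k)}n^{\cO(1)}$ unless the ETH fails. From $H$ we construct, in time polynomial in $k$, a graph $G$ with $\tw(G)=\cO(k)$ and an integer $k'$ such that $G$ admits a set $S$ of at most $k'$ vertices with every block of $G-S$ on at most $\ell$ vertices and lying in $\cP$ if and only if $H$ has the required independent set. Since $d=\ell$ is a fixed constant and $\tw(G)=\cO(k)$, a $2^{o(w\log w)}n^{\cO(1)}$-time algorithm for \BPBVD would give a $2^{o(k\log k)}k^{\cO(1)}$-time algorithm for \kkIS, contradicting the ETH.

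First, the membership condition ``in $\cP$'' will be made essentially vacuous. We build $G$ on a skeleton of treewidth $\cO(k)$ (for instance a long grid-like graph of height $\cO(k)$, with edges suitably subdivided), into which all gadgets are glued, and we arrange that $G$ has girth exactly $\ell$. Then every $2$-connected subgraph of $G$ on at most $\ell$ vertices is $K_2$ or a copy of $C_\ell$: a $2$-connected graph on at least $3$ and at most $\ell$ vertices whose girth is at least $\ell$ has a Hamilton cycle and no chord, hence is $C_\ell$. Since $\cP$ is block-hereditary and contains $C_\ell$, it contains $K_1$, $K_2$ and $C_\ell$, so every block of $G-S$ on at most $\ell$ vertices is $K_1$, $K_2$ or $C_\ell$ and hence automatically lies in $\cP$; the only real constraint on $S$ is therefore that every block of $G-S$ has at most $\ell$ vertices. (This is exactly the feature distinguishing the present case from \cref{thm:main1b}: allowing $C_\ell\in\cP$ lets us keep long cycles as blocks, and it is this extra freedom that the reduction exploits.)

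The construction uses three kinds of gadgets. For each row $i$ there is a \emph{choice gadget} carrying designated vertices $a_{i,1},\dots,a_{i,k}$ that, under a tight local budget, can be repaired into a graph with all blocks of size at most $\ell$ in exactly $k$ cheapest ways, the $j$-th of which keeps $a_{i,j}$ and deletes every other $a_{i,j'}$; thus a budget-optimal solution \emph{selects} exactly one column per row, namely the one whose designated vertex survives. The selection is carried along $k$ \emph{wires}, one per row, threaded through the skeleton; to keep the treewidth $\cO(k)$ the selected column of a row is encoded not by which bag vertices survive but by how $G-S$ links a distinguished wire vertex to a bounded shared pool of ``anchor'' vertices, so that bags of size $\cO(k)$ still suffice. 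Finally, for each edge $e=\{(i,j),(i',j')\}$ of $H$ there is a \emph{conflict gadget} containing a long cycle $Y_e$ running through $a_{i,j}$ and $a_{i',j'}$ whose only cheap break points are these two vertices; hence $Y_e$ is repairable within its local budget exactly when at least one of $a_{i,j},a_{i',j'}$ has been deleted, i.e.\ exactly when columns $j$ and $j'$ are not both selected for rows $i$ and $i'$ --- precisely the constraint that the edge $e$ puts on an independent set. Setting $k'$ to the sum of all the (tight) local budgets, a solution of size at most $k'$ must repair every gadget within budget and so selects one column per row meeting no edge of $H$, i.e.\ a valid independent set of $H$; conversely such an independent set yields a solution of size exactly $k'$.

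The direction ``independent set $\Rightarrow$ solution'' is routine. The main obstacle is the converse, which calls for a rigidity analysis: one must design the choice and conflict gadgets so that \emph{every} set of at most $k'$ vertices that repairs $G$ does so, gadget by gadget, by one of the intended cheapest repairs --- it may not, for instance, partially repair a choice gadget and spend the saved budget elsewhere, cut a wire so as to decouple a conflict gadget from the column it should read, or exploit some unintended biconnected subgraph. Achieving this rigidity using only the permitted ingredients while simultaneously keeping $k'$ tight enough to force the intended behaviour and keeping $\tw(G)=\cO(k)$ is the delicate part of the construction; the girth condition is what keeps the associated bookkeeping tractable.
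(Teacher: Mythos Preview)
Your framework is sound and aligned with the paper: reduce from a $k\times k$ independent-set problem, aim for treewidth $\Theta(k)$, and make the ``$\in\cP$'' constraint automatic. Your girth-$\ell$ observation is correct and is morally the same trick the paper uses (the paper arranges that after deletion the remaining graph decomposes into copies of $C_d$ and $K_2$, which forces every block into $\cP$).

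However, what you have written is an outline, not a proof, and the part you defer is exactly the part that carries all the content. The difficulty of this reduction is \emph{not} designing choice and conflict gadgets in the abstract---that is standard---but doing so while keeping $\tw(G)=\cO(k)$. Naively, a choice gadget for row $i$ must expose which of $k$ columns was selected, and $k$ such gadgets plus conflict gadgets testing arbitrary pairs would already push the treewidth to $\Omega(k)$ without leaving room to route the conflict tests; your sentence about encoding the selection ``by how $G-S$ links a distinguished wire vertex to a bounded shared pool of anchor vertices'' is where the real construction would have to go, and as stated it is not a construction. Nothing in the proposal explains how a conflict gadget for an edge $\{(i,j),(i',j')\}$ can \emph{read} the choices made in rows $i$ and $i'$ through an interface of size $\cO(1)$ per row, nor how the budget is set up so that every size-$k'$ solution must respect gadget boundaries.

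For comparison, the paper resolves this with a specific ``stretching'' construction: it makes $m=|E(H)|$ near-identical copies of a vertex-selection gadget (each copy containing, for every vertex of $H$, a $(3d-2)$-vertex block with two disjoint $C_d$'s), strings them into a cycle via $2k$-vertex row/column selectors, and embeds exactly one edge of $H$ into each copy. The budget is tight enough that in every copy exactly one vertex gadget per row survives intact; the selectors then force the surviving choice to be the same across all copies; and the single embedded edge per copy enforces the independence constraint locally. The path decomposition then has bags of size $\cO(k)$ because each bag needs only one copy plus the two adjacent selectors. This per-edge stretching, together with the precise counting argument that pins down $|H^e(i)\cap S|$ exactly, is the substance of the proof, and your proposal does not supply an analogue.
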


	Baste, Sau, and Thilikos~\cite{BasteST17} recently studied the complexity of a similar problem, where the task is to find a set of vertices whose deletion results in a graph with no minor in a given collection of graphs $\mathcal{F}$, parameterized by treewidth.  When $\mathcal{F} = \{C_4\}$, this is equivalent to \BPBVD where $\cP = \{K_1, K_2,K_3\}$, and the complexity they obtain in this case is consistent with our result.
	
    Whether this lower bound of \cref{thm:main1c1} is best possible when $\cP$ contains a cycle on $\ell\ge 4$ vertices remains open.  
	However, as partial positive evidence towards this, we note that when $\cP$ contains all graphs, the result by 
	Baste, Sau, and Thilikos~\cite{BasteST17} implies that that \BPBVD can be solved in time $2^{\mathcal{O}(w\log w)}n^{\mathcal{O}(1)}$ when $d$ is fixed, 
	as the minor obstruction set $\mathcal{F}$ consists of $2$-connected graphs with $d+1$ vertices, 
	and contains a planar graph: the cycle graph of length $d+1$.

\paragraph*{Bounded-size components.} 
Using a similar technique, we can obtain analogous results for a simpler problem, which we call \BPCVD, where we want to remove at most $k$ vertices such that each connected component of the resulting graph has at most $d$ vertices and belongs to $\cP$. If we have only the size constraint (i.e., $\cP$ contains every graph), then this problem is known as \COC~\cite{DrangeDV2014}.

Let $\mathcal{P}$ be a class of graphs.

\ProblemQ{\BPCVD}{$d$, $w$}{A graph $G$ of treewidth at most $w$, and positive integers $d$ and $k$.}{Is there a set $S$ of at most $k$ vertices in $G$ such that each connected component of $G-S$ has at most $d$ vertices and is in $\cP$?}

Drange, Dregi, and van 't Hof~\cite{DrangeDV2014} studied the parameterized complexity of a weighted variant of the \COC problem; their results imply, in particular, that \COC can be solved in time $2^{\mathcal{O}(k\log d)}n$, but is $W[1]$-hard parameterized by only $k$ or $d$.
The corresponding edge-deletion problem, parameterized by treewidth, was studied by Enright and Meeks~\cite{Enright2015}.
For general classes $\cP$, we prove results that are analogous to those for \BPBVD.

\begin{restatable}{theorem}{mainthma}
\label{thm:main1a}
Let $\cP$ be a class of graphs that is hereditary, recognizable in polynomial time, and consists of only chordal graphs.
Then \BPCVD can be solved in time $2^{\mathcal{O}(wd^2)}k^2 n$ on graphs with $n$ vertices and treewidth $w$.
\end{restatable}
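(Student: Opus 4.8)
I would prove this by dynamic programming over a nice tree decomposition of~$G$, with chordality used to keep the number of table entries single-exponential in~$w$.

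The first point is that the target is a \emph{finite} family: since every graph in~$\cP$ is chordal and we only care about components on at most~$d$ vertices, the set~$\cF$ of connected graphs in~$\cP$ with at most~$d$ vertices is finite and computable (as~$\cP$ is polynomial-time recognizable), and~$S$ is a solution exactly when every connected component of $G-S$ is isomorphic to some member of~$\cF$. Fix a rooted nice tree decomposition $(T,\{X_t\})$ of width~$w$ with~$T$ of maximum degree~$3$, and for a node~$t$ let~$G_t$ be the subgraph induced by the union of the bags in the subtree below~$t$. Recall that every connected component of $G-S$ meets the bags along a connected subtree and that~$X_t$ separates $V(G_t)$ from the rest; hence a connected component of $G_t-S_t$ avoiding~$X_t$ can never acquire new vertices, and must therefore already be a legal component.

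The signature of a partial solution $S_t\subseteq V(G_t)$ records (a)~the set $S_t\cap X_t$, and (b)~for each surviving bag vertex $v\in X_t\setminus S_t$, the isomorphism type of the component of $G_t-S_t$ containing~$v$, regarded as a \emph{boundaried} graph on at most~$d$ vertices in which the vertices lying in~$X_t$ are individually labelled by the corresponding elements of~$X_t$; we keep only signatures in which every component avoiding~$X_t$ already lies in~$\cF$ and every active component is still completable into~$\cF$ by attaching vertices whose neighbourhood is confined to the component's labelled part (a bounded check, as~$\cF$ is finite), and for each surviving signature we store the minimum of~$|S_t|$ over partial solutions realizing it. Since~$\cF$ is chordal, each boundaried active component is a connected chordal graph on at most~$d$ vertices with at most~$d$ labelled vertices, so locally there are only $2^{\mathcal{O}(d^2)}$ possibilities. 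The crux---and the step I expect to be the main obstacle---is to bound the number of globally consistent signatures at a node by $2^{\mathcal{O}(wd^2)}$ rather than $2^{\Omega(w\log w)}$: a priori one must also record which of the superexponentially many partitions of $X_t\setminus S_t$ among the active components is realized. Chordality is exactly what rules this out: because the members of~$\cF$ are chordal and bounded, minimal separators inside an active component are cliques, which restricts how bag vertices can be grouped enough that it suffices to maintain a representative collection of $2^{\mathcal{O}(wd^2)}$ signatures per node---whereas, when~$\cP$ contains an induced cycle of length $\ell\ge 4$, no such compact representation exists, which is the source of the $2^{\Omega(w\log w)}$ lower bounds (cf.~\cref{thm:main1c1}).

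The transitions are then routine. At an introduce-vertex node the new surviving vertex forms its own one-vertex active component (or goes into~$S_t$). At an introduce-edge node on~$uv$ with $u,v\notin S_t$ we either merge the active components of~$u$ and~$v$ or add the edge~$uv$ inside one active component, discarding the signature if the result exceeds~$d$ vertices or is no longer completable into~$\cF$; this test depends only on~$d$ since~$\cF$ is finite. At a forget node on~$v$ we drop~$v$'s label, verify that~$v$'s component lies in~$\cF$ if it has become finished, and, if $v\in S_t$, increment the stored value and discard the signature if it exceeds~$k$. At a join node we combine two signatures over the same bag with the same $S\cap X_t$ by gluing, bag vertex by bag vertex, the two boundaried active components along their common labelled vertices, discarding if some component grows past~$d$ vertices or becomes non-completable into~$\cF$, and adding the stored values after subtracting $|S\cap X_t|$. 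The input is a \YES-instance iff the root carries an accepting signature (empty bag, every component finished and in~$\cF$) with stored value at most~$k$. Correctness is the usual exchange argument for tree-decomposition dynamic programming, using that~$\cP$ is hereditary so that connected induced subgraphs of legal components are again legal; the running time is $2^{\mathcal{O}(wd^2)}k^2n$ since there are $2^{\mathcal{O}(wd^2)}$ signatures per node, $\mathcal{O}(n)$ nodes, and each join combines $\mathcal{O}(k^2)$ pairs of stored values.
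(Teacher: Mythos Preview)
Your overall plan is sound, and you correctly isolate the crux: the partition of $X_t\setminus S_t$ into active components is a priori one of $2^{\Omega(w\log w)}$ possibilities, and this must be tamed. However, your resolution of that crux is a genuine gap. The claim that clique separators in chordal components ``restrict how bag vertices can be grouped'' does not do the work you need: even when every active component is chordal and of size at most~$d$, the bag vertices can form an independent set distributed arbitrarily among the components, so all Bell-number-many partitions of $X_t\setminus S_t$ can arise from legitimate partial solutions. Chordality of the components imposes no combinatorial bound on this partition, and no ``representative collection'' falls out of the clique-separator property alone.

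The paper's mechanism is different and has two pieces. First, it does \emph{not} bound the number of partitions; instead, for each fixed local characteristic (roughly your per-component boundaried data), it maintains a \emph{family} of partitions of the set of components of $G[X_t\setminus S_t]$, and prunes this family to size $2^{\mathcal{O}(w)}$ using the rank-based representative-set technique of Bodlaender--Cygan--Kratsch--Nederlof (see \cref{thm:representativeset} and \cref{prop:repalgorithm}). That technique handles acyclicity-type constraints on partitions and does not itself require chordality. Second---and this is where chordality is actually used---the paper shows (\cref{prop:nocircuit}, together with the equivalence theorem \cref{thm:equivalence2}) that once two compatible boundaried graphs agree on the local characteristic, their sum is a legal $\cP$-graph \emph{if and only if} the auxiliary bipartite graph $\Part(G,S)\oplus\Part(H,S)$ is acyclic. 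This reduction of the global constraint to a pure acyclicity condition is exactly what makes the representative-set machinery applicable, and it is precisely what fails when~$\cP$ contains a long induced cycle. So chordality is not used to shrink the set of partitions directly; it is used to certify that acyclicity of the partition graph is the \emph{only} remaining global obstruction, after which representative sets finish the job. Your proposal is missing both the representative-set step and this reduction-to-acyclicity argument.
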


\begin{theorem}
  \label{thm:main1c2}
Let $\cP$ be a hereditary class of graphs that is polynomial-time recognizable.
If $\cP$ contains the cycle graph on $\ell \ge 4$ vertices, then \BPCVD is not solvable in time $2^{o(w\log w)}n^{\cO(1)}$ on graphs with $n$ vertices and treewidth at most $w$ even for fixed $d=\ell$, unless the ETH fails.
\end{theorem}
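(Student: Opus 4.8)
The plan is to give a polynomial-time reduction from a $k \times k$-style problem for which a $2^{o(k\log k)}$-time algorithm would refute the ETH; a convenient choice is \PkkC (related problems such as \PkkIS or \kkIS would serve equally well), which by the work of Lokshtanov, Marx, and Saurabh~\cite{LokshtanovMS11a} cannot be solved in time $2^{o(k\log k)}|I|^{\cO(1)}$ unless the ETH fails, where $|I|$ denotes the instance size. From an instance on a $k\times k$ board I would build, in time polynomial in $|I|$, a graph $G$, the fixed integer $d=\ell$, and a budget $k'$ polynomial in $k$, such that (i)~$G$ has $n=|I|^{\cO(1)}$ vertices and $\tw(G)=\cO(k)$, and (ii)~the board instance is a yes-instance if and only if $(G,d,k')$ is a yes-instance of \BPCVD. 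An algorithm for \BPCVD running in time $2^{o(w\log w)}n^{\cO(1)}$ would then, since $w=\cO(k)$, solve \PkkC in time $2^{o(k\log k)}|I|^{\cO(1)}$, contradicting the ETH.

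The wrinkle specific to this statement is that $\cP$ is given only through the promises ``hereditary'' and ``contains $C_\ell$'', so the reduction must work uniformly for every such $\cP$, about which we may assume only that it contains every induced subgraph of $C_\ell$. I would handle this by decoupling the two directions of the equivalence. For completeness, the deletion set $S$ I produce leaves $G-S$ with \emph{every} component an induced subgraph of $C_\ell$ (a disjoint union of short paths, or $C_\ell$ itself); by heredity these belong to $\cP$ whatever the admissible $\cP$ is, so $(G,d,k')$ is a yes-instance. For soundness I would use only the constraint ``every component of $G-S$ has at most $\ell$ vertices'', never invoking membership in $\cP$ (which is a hypothesis, so discarding it only makes the argument easier). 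Thus the heart of the proof is a lower bound for the pure bounded-component-size problem with threshold $\ell$, which then transfers to \BPCVD for every admissible $\cP$.

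The building blocks are gadgets assembled from copies of $C_\ell$, glued at cut vertices and joined by edges, engineered so that every cheapest deletion meeting the size threshold inside a gadget leaves only induced subgraphs of $C_\ell$, while any other cut is strictly more expensive or produces a component on more than $\ell$ vertices. Each row $i$ gets a selection gadget that fixes a column $\sigma(i)\in[k]$, whose cheapest deletions are in bijection with the $k$ possible choices, the choice being readable at port vertices; each pair $(i,j)$ of rows gets a verification gadget whose minimum deletion cost is attained exactly when the selections at its two ports are compatible in the sense required by \PkkC (an edge of the board, which also forces distinct columns); and a consistency structure forces, for every $i$, all occurrences of the selection for row $i$ to agree. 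Taking $k'$ to be the sum of the per-gadget minima, a counting argument over the gadgets (using only the size threshold) shows that a deletion set of size at most $k'$ must realise one column per row, globally consistent and passing every verification, and conversely any such choice yields a deletion set of size exactly $k'$ whose components are all induced subgraphs of $C_\ell$. It is likely cleaner to think of the whole construction in terms of an explicit matching: the components of $G-S$ must link $\Theta(k)$ ``left ports'' to $\Theta(k)$ ``right ports'' by short paths, this matching being the permutation $\sigma$, which makes transparent both why the relevant dynamic-programming state is a partition of $\Theta(k)$ objects (hence $k^{\Theta(k)}=2^{\Theta(k\log k)}$ states) and why the whole thing fits into width $\cO(k)$. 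For the treewidth bound I would lay the $\binom{k}{2}$ verification gadgets along a spine in the order $(1,2),(1,3),\dots,(1,k),(2,3),\dots$, with the consistency structure for row $i$ running only along the portion of the spine containing the gadgets involving $i$; at every point at most $k$ such structures are in flight and each gadget has bounded size, so a path decomposition sweeping the spine has bags of size $\cO(k)$.

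The bulk of the work, and the main obstacle, is the gadget engineering. In \BPCVD the natural cheap move---deleting a single cut vertex to split off a component---makes it delicate to force a genuine choice from a set of size $k$ at a fixed cost; one must arrange the gadget so that the $k$ cheapest deletions are precisely the intended ones and expose the choice at ports, while simultaneously (a) supporting the pairwise compatibility test, (b) contributing only $\cO(1)$ to the width even though the state the gadget passes along the spine must encode a permutation-type matching on $\Theta(k)$ objects, and (c) being rigid enough that the soundness argument---permitted to use only the size threshold---really does pin the deletion pattern down to the intended decoding, excluding cheap deletions that do something unintended yet still leave every component on at most $\ell$ vertices. These gadgets are the analogue, for connected components rather than blocks, of the construction for \cref{thm:main1c1}; if anything the component setting is the cleaner of the two, since one need not also control the $2$-connected pieces of $G-S$.
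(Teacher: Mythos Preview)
Your high-level strategy matches the paper's exactly: reduce from a $k\times k$ permutation problem (the paper uses \PkkIS), build an instance with treewidth $\cO(k)$, and decouple the two directions so that completeness produces components that are all isomorphic to $C_\ell$ (hence in $\cP$ by heredity) while soundness uses only the component-size bound~$\ell$. Your identification of the crux---gadget engineering that forces one-of-$k$ choices rigidly under the size constraint alone---is spot on.

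The one substantive difference is the construction layout. You sketch a spine carrying $\binom{k}{2}$ pairwise verification gadgets with $\cO(k)$ thin consistency tracks running alongside. The paper instead uses the Lokshtanov--Marx--Saurabh ``stretching'' pattern: it lays out $m=|E(G)|$ nearly identical copies of a vertex-encoding gadget in a cycle, each copy embedding exactly one edge of the source graph, with $2k$-vertex row/column selector gadgets $S^e$ between consecutive copies to propagate the chosen permutation. Concretely, each vertex $\langle i,j\rangle$ becomes a $(3d-2)$-vertex gadget containing two disjoint $C_d$'s plus a short path and an extra vertex; the complete-bipartite edges within a row force the soundness counting (``too many survivors give a large $2$-connected piece''), and the selector vertices link adjacent copies so that the surviving choice must be the same across all $m$ copies. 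The treewidth bound follows because each bag holds at most three selector sets ($6k$ vertices), one row of one copy ($\cO(k)$ vertices), and the two endpoint gadgets of the current edge ($\cO(1)$ vertices). Your layout should also work, but the paper's has the advantage that the soundness argument becomes a clean exact-count: the budget equals $(3d-2)(k-1)$ per row per copy, so every row of every copy must retain exactly one full vertex gadget, and the cyclic selector structure then pins the permutation.
\qed
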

	
	Similar to \BPBVD, the result of Baste, Sau, and Thilikos~\cite{BasteST17} implies that
	when $\cP$ contains all graphs, \BPCVD can be solved in time $2^{\mathcal{O}(w\log w)}n^{\mathcal{O}(1)}$ when $d$ is fixed.

	When $d$ is not fixed, one might ask whether \BPCVD admits an $f(w)n^{\cO(1)}$-time algorithm; that is, an FPT algorithm parameterized only by treewidth.
	We provide a negative answer, showing that the problem is $W[1]$-hard when $\cP$ contains all chordal graphs, even parameterized by both treewidth and $k$.
	We further prove two stronger lower bound results assuming the ETH holds.

\begin{theorem}
  \label{hardness-intro}
  Let $\cP$ be a hereditary class containing all chordal graphs.  Then \BPCVD
  is $W[1]$-hard parameterized by the combined parameter $(w,k)$.
  Moreover, unless the ETH fails, this problem
  \begin{enumerate}
  \item has no $f(w)n^{o(w)}$-time algorithm; and 
  \item has no $f(k')n^{o(k'/\log k')}$-time algorithm, where $k' = w + k$.
  \end{enumerate}
\end{theorem}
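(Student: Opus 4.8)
The plan is to derive all three statements from one reduction scheme, namely a reduction from \textsc{Partitioned Subgraph Isomorphism} that is applied to two different pattern graphs. Recall that in \textsc{Partitioned Subgraph Isomorphism} we are given a pattern $H$ on $k$ vertices $u_1,\dots,u_k$ and a host graph $G$ with vertex partition $V_1,\dots,V_k$, and we must decide whether there are $v_\ell\in V_\ell$ with $v_\ell v_{\ell'}\in E(G)$ whenever $u_\ell u_{\ell'}\in E(H)$. Two classical ETH lower bounds will be used: when $H$ is a bounded-degree graph on $k$ vertices of treewidth $\Theta(k)$ (an expander), the problem is $W[1]$-hard parameterized by $k$ and, by Marx's theorem, has no $f(k)n^{o(k/\log k)}$-time algorithm unless the ETH fails; and when $H$ is the $k\times k$ grid (treewidth $\Theta(k)$ but only $\Theta(k^2)$ edges), the special case is essentially \textsc{Grid Tiling}, which is $W[1]$-hard and has no $f(k)n^{o(k)}$-time algorithm unless the ETH fails. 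The point of keeping both patterns in play is that $|E(H)|$ will control our deletion budget while $\tw(H)$ will control treewidth, and the two instantiations optimise these quantities differently.

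The reduction turns an instance $(H,G,V_1,\dots,V_k)$ into a \BPCVD instance $(G',d,k')$ in which the only nontrivial constraint exploited is component size, not membership in $\cP$; this is what makes it work uniformly for every hereditary $\cP\supseteq\{\text{chordal graphs}\}$. Fix an injection $\operatorname{enc}\colon V(G)\to[N]$ with $N=|V(G)|+O(1)$, and set $d$ to a suitable multiple of $N$. For each $u_\ell$ we place $\deg_H(u_\ell)$ copies of a \emph{choice gadget}, each a path on $\Theta(N)$ vertices with a designated end attached to the rest of $G'$; deleting one interior vertex leaves a subpath of a chosen length $\operatorname{enc}(v_\ell)$ attached to that end (the ``signal'') while the remaining suffix, of size $<d$, becomes a harmless leftover component. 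Consecutive copies of $u_\ell$'s gadget are linked by \emph{equality gadgets}; for each $u_\ell u_{\ell'}\in E(H)$ we insert an \emph{edge‑check gadget} receiving a signal of size $\operatorname{enc}(v_\ell)$ from one copy of $u_\ell$, a signal of size $\operatorname{enc}(v_{\ell'})$ from one copy of $u_{\ell'}$, together with their ``complements'' $N-\operatorname{enc}(v_\ell)$, $N-\operatorname{enc}(v_{\ell'})$. An equality or edge‑check gadget realises a test ``the pair of incoming sizes lies in a relation $R\subseteq[N]^2$'': for each $(a^\ast,b^\ast)\in R$ it contains small fragments of sizes $d-1-a^\ast$, $d-1-b^\ast$ (into which the signals $a,b$ must be routed by local deletions, so the resulting components are $\le d$ iff $a\le a^\ast$ and $b\le b^\ast$) together with mirrored fragments fed by the complements that force $a\ge a^\ast$, $b\ge b^\ast$; thus some fragment absorbs the signals iff $(a,b)\in R$, and otherwise every routing overflows a component past $d$. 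For equality gadgets $R$ is the diagonal, and for the edge‑check at $u_\ell u_{\ell'}$ it is $\{(\operatorname{enc}(v),\operatorname{enc}(v'))\colon vv'\in E(G),\,v\in V_\ell,\,v'\in V_{\ell'}\}$. All gadgets are glued along a fixed tree decomposition of $H$, so that only the $O(\tw(H))$ signals of one bag are ``active'' at a time; hence $\tw(G')=O(\tw(H))$, while the deletion budget $k'$ is the canonical number of split/routing deletions, which is $O(|E(H)|)$.

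For correctness, from a solution $v_1,\dots,v_k$ of \textsc{Partitioned Subgraph Isomorphism} we delete exactly the canonical split vertex in each choice‑gadget copy and the canonical routing vertices in each equality and edge‑check gadget; every remaining component is a path or (by design of the fragments) a small clique of at most $d$ vertices, hence chordal and therefore in $\cP$, and the number of deletions is exactly $k'$. Conversely, a counting argument shows that any set of at most $k'$ vertices whose deletion leaves every component of size at most $d$ must spend exactly one deletion, at a valid split point, in each choice‑gadget copy and the canonical number in each equality/edge‑check gadget; feasibility of the equality gadgets then forces all copies of each $u_\ell$ to encode one common $v_\ell$, and feasibility of the edge‑check gadgets forces $v_\ell v_{\ell'}\in E(G)$ for every $u_\ell u_{\ell'}\in E(H)$ — a solution of the source instance. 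This converse uses only the size bound $d$, so it holds for every hereditary $\cP$ containing all chordal graphs. Instantiating with $H$ the $k\times k$ grid gives $\tw(G')=O(k)$, $k'_{\mathrm{del}}=O(k^2)$, and $|V(G')|=\operatorname{poly}(|V(G)|)$: since $w+k'_{\mathrm{del}}=O(k^2)$ is bounded in terms of $k$, $W[1]$-hardness of \textsc{Grid Tiling} gives $W[1]$-hardness of \BPCVD parameterized by $(w,k)$, and its $n^{o(k)}$ ETH lower bound rules out an $f(w)n^{o(w)}$-time algorithm, which is part~(1). Instantiating with $H$ a bounded-degree expander on $k$ vertices gives $\tw(G')=O(k)$ and $k'_{\mathrm{del}}=O(k)$, so $k'=w+k=O(k)$, and Marx's $n^{o(k/\log k)}$ lower bound for this case of \textsc{Partitioned Subgraph Isomorphism} rules out an $f(k')n^{o(k'/\log k')}$-time algorithm, which is part~(2).

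The main obstacle is the rigidity (counting) half of the correctness argument: the whole construction must be calibrated so that the budget $k'$ is tight to the point that every hypothetical solution is forced to delete precisely the canonical vertices in every gadget, with no slack that could let it ``cheat'' a feasibility test. Closely tied to this is the arithmetic of the sizes — the lengths of choice paths, signals, complements and receptor fragments and the value of $d$ — which must be chosen so that ``component size $\le d$'' encodes exactly the conjunction $a\le a^\ast \wedge b\le b^\ast$ on both the direct and the mirrored side, so that leftover paths can never exceed $d$, and so that an unrouted signal is never an acceptable component. These are routine once set up, but getting all the inequalities to line up simultaneously across the equality gadgets, the edge‑check gadgets, and the global budget is the delicate part; the treewidth bound $O(\tw(H))$, by contrast, falls out immediately from building $G'$ along a tree decomposition of $H$ with $O(1)$-boundary gadgets.
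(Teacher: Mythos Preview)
Your overall strategy is sound and in the same spirit as the paper's, but the execution diverges in a way worth noting, and the gadget descriptions are too schematic to verify the crucial tightness claims.

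The paper does not use separate equality and edge-check gadgets with an internal ``routing among fragments'' mechanism. Instead, it exploits a single arithmetic trick: the pair $(a,b)\in[t]^2$ is encoded by the integer $\phi(a,b)=3t(a-1)+3b$, so that one component-size constraint simultaneously pins down both coordinates (recovered by reading the size modulo $t$ and dividing by $t$). Every gadget --- the edge gadget $G_{i,j}$ and the two ``propagators'' $H_j$ and $\tilde{H}_i$ operating at scales $3$ and $3t$ respectively --- is then just a thickened path with designated cut-points $u_q$, and deleting a single $u_q$ per gadget is the only freedom. The rigidity argument is a clean counting: the total vertex count is exactly $(2d+3)$ times the number of ``triangles'' of gadgets, so with the stated budget every component must have size \emph{exactly} $d$, which forces each gadget to receive exactly one deletion at a canonical cut-point. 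No routing, no fragment-per-tuple structure, and no separate equality gadgets are needed (propagation is forced by the arithmetic). For the second ETH bound the paper simply drops the $G_{i,j}$ for pairs $ij\notin E(H)$ and reduces from \textsc{Subgraph Isomorphism} directly; the treewidth bound remains $O(|V(H)|)$, not $O(\tw(H))$.

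Your plan could be made to work, but the part you flag as ``routine once set up'' is precisely where the paper invests its effort, and your description leaves real gaps. In particular: your edge-check gadget contains a fragment for every tuple in $R$, yet you budget $O(1)$ deletions per gadget; for this to be consistent the fragments must sit along a single path so that one cut selects one tuple (as in the paper), but then the ``complement'' signals and the two-sided inequalities you describe do not obviously fit that layout. You also need the rigidity argument to show that no deletion can land outside the canonical cut-points --- the paper handles this by making the thickened paths $2$-connected enough that a stray deletion cannot separate them, and by an exact global vertex count; your sketch asserts this but does not supply the mechanism. Finally, your claim $\tw(G')=O(\tw(H))$ is stronger than what the paper proves or needs (it gets $O(k)$ via an explicit path decomposition of the whole construction), and would require laying out the equality chains along a tree decomposition of $H$ rather than ``consecutive copies''; this is plausible but not what you wrote.
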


\paragraph*{Techniques for positive results.} 
	We sketch the proof of Theorem~\ref{thm:main1b}.
	Let $\mathcal{P}$ be a class of graphs that is block-hereditary and consists of chordal graphs. 
    A pair $(G,S)$ consisting of a graph $G$ and a subset $S$ of its vertex set will be called a \emph{boundaried graph}.

	The key lemma can be briefly described as follows.
	Suppose there are two boundaried graphs $(G,S)$ and $(H,S)$ with $G[S]=H[S]$, and 
	we want to know whether 
	\begin{itemize}
	\item[$(\ast)$] the graph obtained from $G$ and $H$ by identifying vertices in $S$ has at most $d$ vertices and its blocks are in $\cP$.
	\end{itemize}
	In the dynamic programming algorithm, we consider one part $(G,S)$ as a partial solution, and $(H,S)$ has a role in the hypothetical complementary solution.
	We will show that we can guarantee the statement $(\ast)$ if 
	\begin{enumerate}[(i)]
	\item $G$ and $H$ each have at most $d$ vertices and their blocks are in $\cP$,
	\item for each non-trivial block $B$ of $G[S]$, the block of $G$ containing $B$ and the block of $H$ containing $B$ have no conflict near $B$ (we explain this below), and
	\item if we make an auxiliary bipartite graph with bipartition $(\mathcal{A}, \mathcal{B})$ where 
	\begin{itemize}
		\item $\mathcal{A}$ is the set of connected components of $G[S]$, 
		\item $\mathcal{B}$ is the union of the set of connected components of $G$ and the set of connected components of $H$,
		\item $X\in \mathcal{A}$ is adjacent to $Y\in \mathcal{B}$ if $X$ is contained in $Y$,
	\end{itemize}
	then this bipartite graph has no cycles.
	\end{enumerate}
	Section~\ref{sec:chordalsum} is devoted to showing a simplified version of this statement (Proposition~\ref{prop:sumofchordalgraphs2}).
	
	To establish the condition (ii), 
	we guess a graph $g(B)$ for each non-trivial block $B$ of $G[S]$, where $g(B)$ is the block containing $B$ after combining $G$ and $H$.
	Note that this target graph $g(B)$ must be a biconnected chordal graph with at most $d$ vertices. 
	So we consider $g(B)$ to be a biconnected chordal graph with distinct labels from $\{1, \ldots, d\}$.
	The necessary local information described in (ii) will be the set of labels of neighbors of $B$ (with fixed labels on $B$) in the block of $G$ containing $B$.
	We will store this as $h(B)$.
	The important point is that for a chordal graph $F$ and a connected vertex set $Z$, 
	there is an one-to-one correspondence between the connected components of $F-Z$  and the connected components of the neighborhood of $Z$ in $F$ (see Lemma~\ref{lem:chordalseparator}).
	Therefore, the neighbors of $B$ provide information about which connected components currently exist around $B$.
	The meaning of ``having no conflict'' in (ii) is that the neighbors of $B$ in the block of $G$ and in the block of $H$ have disjoint sets of labels.
	The pair $(g,h)$ will be considered as an index of the table of our dynamic programming algorithm.

	Once we have considered (i) and (ii), we need to deal with the auxiliary bipartite graph in (iii). 
	For the $(G,S)$ part, it is sufficient to know the auxiliary bipartite graph with components of $G$.
	This can be stored as a partition of the set of connected components of $G[S]$.
	As the size of $S$ corresponds to the treewidth of the given graph, 
	to obtain a single-exponential algorithm parameterized by treewidth, we need to efficiently deal with these partitions corresponding to partial solutions.
	This part can be dealt with in a similar manner to the single-exponential time algorithm for \FVS, using representative-set techniques.
	We recall the representative-set technique in Section~\ref{sec:representative}, and prove a variant that is fit for our case.
	
	In the algorithm, for each bag $B_t$ of the tree decomposition, 
	we guess a deletion set $X$ in $B_t$, and guess $(g,h)$ for blocks in $B_t\setminus X$.
	Whenever there is a partial solution corresponding to these information, 
	we keep the corresponding partition of the set of connected components on the boundary $B_t\setminus X$.
	As we take a representative set after partial solutions are updated, we can solve the problem in time $2^{\mathcal{O}(w)}n^{\mathcal{O}(1)}$.

\paragraph*{Lower bounds.}
Theorem~\ref{thm:main1c2} is obtained by a reduction from \PkkIS, the problem of finding an independent set of size $k$ in a graph with $k^2$ vertices and $O(k^4)$ edges.
One can think of those vertices as forming a $k$-by-$k$ grid, where one should select exactly one vertex per row and per column.
This problem cannot be solved in time $2^{o(k \log k)}k^{\cO(1)}$, unless the ETH fails \cite{LokshtanovMS11}.
The crucial point is that the treewidth of the equivalent instances of \BPCVD and \BPBVD should be in $\Theta(k)$.
We achieve this by stretching the information into a chain of $O(k^4)$ almost identical pieces, each encoding one edge of the initial graph.
The pieces are linked by small separators of size $2k$ that propagate the row and column indices of each of the $k$ choices for the independent set. 

For Theorem~\ref{hardness-intro}, we propose a reduction from \MC for the first item, and more or less the same reduction but from \SI for the second.
Again, the crux of the construction is obtaining an instance with low treewidth.
This time, we rely on an injective mapping of edges into integers, which is a folklore trick.
Vertices of the initial graph are encoded as a collection of candidate places where the constructed graph can be disconnected, regularly positioned on two \emph{paths}, one with a small weight and one with a larger weight.
The edge gadget is similarly realized with certain vertices that are candidates for removal, as they can disconnect the constructed graph, each corresponding to a specific edge.

\paragraph*{Organization.}
The paper is organized as follows. Section~\ref{sec:preliminaries} introduces the necessary notions including labelings, treewidth, and boundaried graphs. 
In Section~\ref{sec:chordalsum},
we prove structural lemmas about $S$-blocks, and in Section~\ref{sec:representative}, we discuss representative sets for acyclicity.
In Section~\ref{sec:BPBVD}, we prove Theorems~\ref{thm:main1b} and \ref{thm:main1a}.
Section~\ref{sec:lowerbound} shows that if $\cP$ contains the cycle graph on $d$ vertices, then both problems are not solvable in time $2^{o(w\log w)}n^{\cO(1)}$ on graphs of treewidth at most $w$,
unless the ETH fails.
In Section~\ref{sec:w1hardness}, we further show that if $d$ is not fixed and $\cP$ contains all chordal graphs, then 
\BPCVD is $W[1]$-hard when parameterized by both $k$ and $w$.

\section{Preliminaries}\label{sec:preliminaries}

Let $G$ be a graph.
We denote the vertex set and the edge set of $G$ by $V(G)$ and $E(G)$, respectively.
For a vertex $v$ in $G$, we denote by $G-v$ the graph obtained by removing $v$ and its incident edges, and
for $X\subseteq V(G)$, we denote by $G-X$ the graph obtained by removing all vertices in $X$ and their incident edges.
For $X\subseteq V(G)$, we denote by $G[X]$ the subgraph induced by the vertex set $X$.
A subgraph $H$ of $G$ is an \emph{induced subgraph} of $G$ if $H=G[X]$ for some vertex subset $X$ of $G$.
For two graphs $G_1$ and $G_2$,  $G_1\cup G_2$ is the graph with the vertex set $V(G_1) \cup V(G_2)$ and the edge set $E(G_1) \cup E(G_2)$, 
and $G_1\cap G_2$ is the graph with the vertex set $V(G_1) \cap V(G_2)$ and the edge set $E(G_1) \cap E(G_2)$.

For a vertex $v$ in $G$, we denote by $N_G(v)$ the set of neighbors of $v$ in $G$, 
and $N_G[v]:=N_G(v)\cup \{v\}$.
For $X\subseteq V(G)$, we let $N_G(X):=(\bigcup_{v \in X} N_G(v)) \setminus X$.

A vertex $v$ of $G$ is a {\em cut vertex} if the deletion of $v$ from $G$ increases the number of connected components. 
We say $G$ is \emph{biconnected} if it is connected and has no cut vertices.
Note that every connected graph on at most two vertices is biconnected.
A \emph{block} of $G$ is a maximal biconnected subgraph of $G$.
We say $G$ is \emph{$2$-connected} if it is biconnected and $\abs{V(G)}\ge 3$.

The \emph{length} of a path is the number of edges in the path.
Similarly, the \emph{length} of a cycle is the number of edges in the cycle.

An induced cycle of length at least four is called a \emph{chordless cycle}.
A graph is \emph{chordal} if it has no chordless cycles.
For a class of graphs $\cP$, 
a graph is called a \emph{$\cP$-block graph}
if each of its blocks is in $\cP$.

	For two integers $d_1,d_2$ with $d_1 \leq d_2$, 
	let $[d_1,d_2]$ be the set of all integers $i$ with $d_1\le i\le d_2$, 
	and for a positive integer $d$, let $[d]:=[1,d]$.
For a function $f:X\rightarrow Y$ and $X'\subseteq X$, 
the function $f':X'\rightarrow Y$ where $f'(x)=f(x)$ for all $x\in X'$
is called the \emph{restriction} of $f$ on $X'$, and is denoted $f|_{X'}$.
For such a pair of functions $f$ and $f'$, we also say that $f$ \emph{extends} $f'$ to the set $X$.

\subsection{Chordal graphs}

	We will use the following property of chordal graphs.
	\begin{lemma}\label{lem:chordalseparator}
	Let $G$ be a connected chordal graph and $X$ be a vertex subset such that $G[X]$ is connected.
	Then there is a bijection $f$ from the set of connected components of $G[N_G(X)]$ to the the set of connected components of $G-X$ such that a connected component $C$ of $G[N_G(X)]$ is contained in a connected component $H$ of $G-X$ if and only if $H=f(C)$.
	\end{lemma}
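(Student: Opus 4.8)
We want a bijection $f$ from the connected components of $G[N_G(X)]$ to those of $G-X$, with the property that a component $C$ of $G[N_G(X)]$ lies inside component $f(C)$ of $G-X$. The plan is to first observe that each component $C$ of $G[N_G(X)]$ is contained in exactly one component of $G-X$ (since $C$ is connected and $C \subseteq V(G) \setminus X$), which gives a well-defined map $f$ sending $C$ to the component of $G-X$ containing it. It remains to show $f$ is both injective and surjective.

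**Surjectivity.** Let $H$ be any connected component of $G-X$. Since $G$ is connected and $G[X]$ is nonempty, there is at least one edge between $H$ and $X$, so $H \cap N_G(X) \neq \emptyset$; thus $H$ contains at least one component $C$ of $G[N_G(X)]$, giving $f(C) = H$. This part is routine and uses only connectivity of $G$.

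**Injectivity — the main obstacle.** This is where chordality is essential (the statement is false for general connected graphs: take $X$ a single vertex adjacent to two vertices of a long induced cycle). Suppose for contradiction that two distinct components $C_1, C_2$ of $G[N_G(X)]$ lie in the same component $H$ of $G-X$. Pick $u \in C_1$, $v \in C_2$; since both are in $H$, choose a shortest $u$–$v$ path $P$ inside $H$ (hence avoiding $X$), and since $C_1, C_2$ are distinct components of $G[N_G(X)]$, $P$ must use some internal vertices outside $N_G(X)$. On the other side, since $G[X]$ is connected, there is a $u$–$v$ path $Q$ through $X$: namely $u$ has a neighbor $x_u \in X$, $v$ has a neighbor $x_v \in X$, and $G[X]$ connects $x_u$ to $x_v$. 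Concatenating $P$ and $Q$ yields a cycle; I would take $P$ and $Q$ chosen to make this cycle induced, or rather, extract from $P \cup Q$ a chordless cycle and derive a contradiction with chordality. The delicate point is ensuring the chosen cycle is genuinely chordless: $P$ is shortest in $H$ so it is an induced path in $G-X$; the portion through $X$ can be taken to be a shortest path in $G[X]$ between the neighborhoods, hence induced; the only possible chords are between $P$ and the $X$-portion, but every vertex of $P$ other than its endpoints $u,v$ is outside $N_G(X)$, so it has no neighbor in $X$ — so the only chords would be from $u$ or $v$ to interior vertices of the $X$-path, which can be eliminated by shortening $Q$ appropriately (choosing $x_u, x_v$ and the path in $G[X]$ minimally). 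This produces a chordless cycle of length at least $4$ (length $\geq 4$ because $C_1 \neq C_2$ forces $P$ to have an interior vertex outside $N_G(X)$, so $P$ has length $\geq 2$, and $Q$ contributes at least one vertex of $X$), contradicting that $G$ is chordal.

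**Summary of the plan.** First define $f$ via containment and check well-definedness and surjectivity using connectivity of $G$; then prove injectivity by the cycle argument above, where the real work is a careful choice of the two paths (a shortest path in $G-X$ and a shortest path in $G[X]$ joining the relevant neighbors) so that their union contains a chordless cycle, contradicting chordality. I expect the injectivity step, and specifically the bookkeeping to guarantee the extracted cycle has no chords, to be the main obstacle; everything else is essentially immediate.
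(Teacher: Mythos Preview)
Your proposal is correct and follows essentially the same approach as the paper: define $f$ by containment, check surjectivity via connectivity of $G$, and for injectivity build a chordless cycle from a shortest path in $G-X$ together with a shortest path in $G[X]$. The only refinement worth noting is that the paper chooses $P$ to be a shortest path in $H$ between the two \emph{components} $C_1,C_2$ (rather than between two fixed vertices $u\in C_1$, $v\in C_2$), which immediately guarantees that all internal vertices of $P$ lie outside $N_G(X)$; your claim to this effect needs that small adjustment, since a shortest $u$--$v$ path for arbitrary fixed $u,v$ could pass through other vertices of $N_G(X)$.
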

	\begin{proof}
	It is sufficient to show that no connected component of $G-X$ contains
	two connected components of $G[N_G(X)]$.
	Suppose for a contradiction that 
	there is a connected component $H$ of $G-X$ containing 
	at least two connected components of $G[N_G(X)]$.
	Let $P$ be a shortest path between two connected components of $G[N_G(X)]$ in $H$, 
	with endpoints $x_1$ and $x_2$.
	Let $Q$ be a shortest path from $N_G(x_1)\cap X$ to $N_G(x_2)\cap X$ in $G[X]$, 
	with endpoints $y_1\in N_G(x_1)\cap X$ and $y_2\in N_G(x_2)\cap X$.
	Then $x_1-y_1-Q-y_2-x_2-P-x_1$ is a chordless cycle, contradicting the fact that $G$ is a chordal graph.
	
	Since $G$ is connected, each connected component of $G-X$ contains exactly one connected component of $G[N_G(X)]$.
	Thus, the required bijection exists.
	\end{proof}

\subsection{Block $d$-labeling}

	For a graph $G$ where every block has at most $d$ vertices, 
	a \emph{block $d$-labeling} of $G$ is a function $L:V(G) \rightarrow [d]$ such that for each block $B$ of $G$, $L|_{V(B)}$ is an injection.
	If a graph is equipped with a block $d$-labeling $L$, then it is called a \emph{block $d$-labeled graph}, and we call $L(v)$ the \emph{label} of $v$.
	Two block $d$-labeled graphs $G$ and $H$ are \emph{label-isomorphic} if there is a graph isomorphism from $G$ to $H$ that is label preserving.
	For biconnected block $d$-labeled graphs $G$ and $H$, we say $H$ is \emph{partially label-isomorphic} to $G$ if $H$ is label-isomorphic to the subgraph of $G$ 
	induced by the vertices with labels in $H$.	
	Where there is no ambiguity, a block $d$-labeled graph will simply be called a $d$-labeled graph.

\subsection{Treewidth}

	A \emph{tree decomposition} of a graph $G$ is 
	a pair $(T,\cB)$ 
	consisting of a tree $T$
	and a family $\cB=\{B_t\}_{t\in V(T)}$ of sets $B_t\subseteq V(G)$,
	called \emph{bags},
	satisfying the following three conditions:
	\begin{enumerate}
	\item $V(G)=\bigcup_{t\in V(T)}B_t$,
	\item for every edge $uv$ of $G$, there exists a node $t$ of $T$ such that $u,v\in B_t$, and
	\item for $t_1,t_2,t_3\in V(T)$, $B_{t_1}\cap B_{t_3}\subseteq B_{t_2}$ whenever $t_2$ is on the path from $t_1$ to $t_3$ in $T$.
	\end{enumerate}
	The \emph{width} of a tree decomposition $(T,\cB)$ is $\max\{ \abs{B_{t}}-1:t\in V(T)\}$.	
	The \emph{treewidth} of $G$ is the minimum width over all tree decompositions of $G$. 
	A \emph{path decomposition} is a tree decomposition $(P,\mathcal{B})$ where $P$ is a path.
	The \emph{pathwidth} of $G$ is the minimum width over all path decompositions of $G$. 
	We denote a path decomposition $(P,\mathcal{B})$ as $(B_{v_1},\dotsc,B_{v_t})$, where $P$ is a path $v_1v_2\dotsb v_t$.

To design a dynamic programming algorithm, we use a convenient form of a tree decomposition known as a nice tree decomposition.
A tree $T$ is said to be \emph{rooted} if it has a specified node called the \emph{root}.
Let $T$ be  a rooted tree with root node $r$.
A node $t$ of $T$ is called a \emph{leaf} node if it has degree one and it is not the root.
For two nodes $t_1$ and $t_2$ of $T$, $t_1$ is a \emph{descendant} of $t_2$ if the unique path from $t_1$ to $r$ contains $t_2$. 
If a node $t_1$ is a descendant of a node $t_2$ and $t_1t_2\in E(T)$, 
then $t_1$ is called a \emph{child} of $t_2$.
 
A tree decomposition $(T,\cB=\{B_t\}_{t\in V(T)})$ is a \emph{nice tree decomposition} with root node $r\in V(T)$ if $T$ is a rooted tree with root node $r$, and every node $t$ of $T$ is one of the following:
\begin{enumerate}
  \item a \emph{leaf node}: $t$ is a leaf of $T$ and $B_t=\emptyset$;
  \item an \emph{introduce node}: $t$ has exactly one child $t'$ and $B_t=B_{t'}\cup \{v\}$ for some $v\in V(G)\setminus B_{t'}$;
  \item a \emph{forget node}: $t$ has exactly one child $t'$ and $B_t=B_{t'}\setminus \{v\}$ for some $v\in B_{t'}$; or
  \item a \emph{join node}: $t$ has exactly two children $t_1$ and $t_2$, and $B_t=B_{t_1}=B_{t_2}$.
\end{enumerate}

\begin{theorem}[Bodlaender et al.\ \cite{BodlaenderDDFP2016}]\label{thm:approxtw}
Given an $n$-vertex graph $G$ and a positive integer $k$, one can  either output
a tree decomposition of $G$ with width at most $5k+4$, or correctly answer that 
the treewidth of $G$ is larger than $k$, in time $2^{\mathcal{O}(k)} n$.
\end{theorem}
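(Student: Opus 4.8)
Since Theorem~\ref{thm:approxtw} is the known result of Bodlaender et al., I sketch how one proves it rather than proposing something new. The plan is to build the decomposition by a recursive procedure $\textsc{Decompose}(G',W)$ that takes an induced subgraph $G'$ of $G$ and an \emph{interface} $W\subseteq V(G')$ whose size is kept below a fixed threshold $\beta k$, and returns either a tree decomposition of $G'$ of width at most $5k+4$ whose root bag contains $W$, or a correct verdict $\tw(G)>k$; calling $\textsc{Decompose}(G,\emptyset)$ then proves the theorem. Inside the procedure I first dispose of the easy cases: if $G'$ is disconnected, recurse on each connected component with $W$ restricted to it and glue the results under a common root bag $W$ (so henceforth $G'$ may be assumed connected); if $|V(G')|$ is at most the bag bound $5k+5$, return the trivial one-bag decomposition $V(G')$; and if $|W|$ is well below the threshold, pad $W$ with arbitrary vertices of $G'$, which only makes the remaining task smaller.

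The engine of the recursion is a balanced-separator subroutine. After padding, pick a fresh vertex $v\in V(G')\setminus W$ and set $W':=W\cup\{v\}$. I search for a set $X\subseteq V(G')$ with $|X|\le k+1$ such that every connected component of $G'-X$ contains at most a $\tfrac{2}{3}$-fraction of $W'$. Such an $X$ must exist when $\tw(G)\le k$: in an optimal tree decomposition of $G'$ (width $\le k$), put unit weight on $W'$ and take the standard ``central'' bag whose removal leaves every component with at most half the total weight; rooting and then greedily grouping the components into two sides yields a $\tfrac{2}{3}$-balanced separator of size $\le k+1$. Hence, if the search fails I may safely output $\tw(G)>k$. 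I implement the search by brute force over the $3^{|W'|}=2^{\cO(k)}$ partitions $W'=W_A\uplus W_B\uplus W_X$ with $|W_A|,|W_B|\le\tfrac{2}{3}|W'|$ and $|W_X|\le k+1$: for each, run $k+2$ rounds of augmenting paths in a unit-capacity network to decide, in time $\cO(kn)$, whether $G'-W_X$ has a vertex set of size at most $k+1-|W_X|$ separating $W_A\setminus W_X$ from $W_B\setminus W_X$. Any success yields the desired $X$, since then no component of $G'-X$ can touch both $W_A$ and $W_B$ and thus meets $W'$ within a single side. The subroutine runs in time $2^{\cO(k)}n$.

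Given $X$, let $C_1,\dots,C_m$ be the components of $G'-X$ and recurse via $\textsc{Decompose}\bigl(G'[C_i\cup X],\,(W'\cap C_i)\cup X\bigr)$ for each $i$; if some call returns $\tw(G)>k$, so do I. Otherwise I assemble a decomposition of $G'$ whose root bag is $W'\cup X$ and whose children are the returned roots of the decompositions of the $G'[C_i\cup X]$; the three tree-decomposition axioms are then immediate, since edges inside $W'\cup X$ sit in the root bag, every other edge lives inside some $C_i\cup X$, and the subtree condition holds because $X$ is shared by the root and all children. For termination I use the potential $\Phi(G',W):=|V(G')|-|W|$: because $|W'|=|W|+1$, a short computation gives $\sum_i\Phi\bigl(G'[C_i\cup X],(W'\cap C_i)\cup X\bigr)\le\Phi(G',W)-1$, so $\Phi$ strictly decreases in total at every internal call, which by a routine charging argument over the recursion tree bounds the number of recursive calls by $\cO(n)$.

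The place where genuine care is needed is twofold. First, the constant calibration: one must pin down the interface threshold, the balance ratio, and the bag bound $5k+5$ so that the passed-down interface $(W'\cap C_i)\cup X$ never exceeds the threshold in \emph{every} case, including the annoying ones where a component is a single vertex, avoids $W'$ entirely, or where $G'-X$ stays almost connected; the greedy grouping only guarantees sides of size up to $\tfrac{3}{4}|W'|$ a priori, so the inequalities are tight and it is exactly here that the bound $5k+4$ (rather than something larger) is extracted. Second, and this is the main obstacle, pushing the running time down from the naive $2^{\cO(k)}n^2$ to the claimed $2^{\cO(k)}n$: since one component $C_i$ may contain almost all of $G'$, the recursion can be deep and a naive re-scan is quadratic, so one must amortize the flow computations across all calls---e.g.\ by warm-starting each child's maximum flow from the parent's and charging the residual cost against the drop in $\Phi$. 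Everything else---soundness of the ``$\tw>k$'' branch and validity of the glued decomposition---reduces to the balanced-separator property of bounded-treewidth graphs together with routine verification; note also that for the applications in this paper a simpler $2^{\cO(k)}n^{\cO(1)}$-time, $\cO(k)$-width variant would already suffice, up to an extra polynomial factor in $n$.
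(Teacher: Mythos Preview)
The paper does not prove Theorem~\ref{thm:approxtw}; it is quoted verbatim as a black-box result of Bodlaender, Drange, Dregi, Fomin, Lokshtanov, and Pilipczuk~\cite{BodlaenderDDFP2016}, with no proof or sketch given. So there is no ``paper's own proof'' to compare your attempt against.

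Your sketch is a reasonable high-level summary of the recursive balanced-separator scheme underlying~\cite{BodlaenderDDFP2016}, and you correctly flag the two genuinely delicate points (calibrating the constants to hit exactly $5k+4$, and driving the running time down to linear rather than $2^{\cO(k)}n^2$). One caveat: your proposed fix for the linear-time bound---warm-starting the flow and charging against the drop in $\Phi$---is not quite how~\cite{BodlaenderDDFP2016} achieves it; their argument is considerably more involved, combining a compression step with careful data-structural bookkeeping so that the work per recursive call is proportional to the \emph{local} instance rather than the global one. Your amortization idea as stated does not obviously close the gap, since a single deep chain of calls each touching $\Theta(n)$ vertices would still cost $\Theta(n^2)$ flow work even with warm starts. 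That said, you already observe that for everything this paper does, a $2^{\cO(k)}n^{\cO(1)}$-time $\cO(k)$-approximation would suffice, and your sketch is entirely adequate for that weaker statement.
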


\begin{lemma}[folklore; see Lemma 7.4 in \cite{Cygan15}]\label{lem:nicetd}
Given a tree decomposition of an $n$-vertex graph $G$ of width $w$, 
one can construct a nice tree decomposition $(T, \cB)$ of width $w$ with $\abs{V(T)}=\mathcal{O}(wn)$  in time $\mathcal{O}(k^2\cdot \max (\abs{V(T)}, \abs{V(G)}))$.
\end{lemma}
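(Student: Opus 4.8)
The plan is the familiar two-phase recipe: first \emph{shrink} the decomposition so that it has only $\mathcal{O}(n)$ bags without changing its width, then \emph{expand} it locally into a nice decomposition, paying a multiplicative factor of $\mathcal{O}(w)$ in the number of nodes. For the shrinking phase I would repeatedly pick an edge $t_1t_2$ of $T$ with $B_{t_1}\subseteq B_{t_2}$ and contract it, keeping the bag $B_{t_2}$; this is easily seen to keep $(T,\cB)$ a tree decomposition and never raises the width. When no such edge remains, every two adjacent bags are incomparable, and then $|V(T)|\le n$: root $T$ arbitrarily and delete leaves one at a time; if $\ell$ is a leaf with parent $p$ then $B_\ell\not\subseteq B_p$, so some $v\in B_\ell\setminus B_p$ lies, by condition~(3) of a tree decomposition, in no bag outside the subtree at $\ell$ — that is, only in $B_\ell$ — and charging $\ell$ to such a $v$ yields an injection $V(T)\hookrightarrow V(G)$. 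Each contraction is found and performed in polynomial time.

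Next I would root the resulting small decomposition at an arbitrary node $r$ and make it nice in three surgery steps. \emph{(a)~Bound the degree:} for each node $t$ with children $c_1,\dots,c_p$ and $p\ge 2$, replace the edges $tc_1,\dots,tc_p$ by a binary tree of $\mathcal{O}(p)$ fresh nodes, all with bag $B_t$, rooted at $t$, having $p$ leaves, below whose $i$-th leaf we hang $c_i$; every new internal node has two children with all three incident bags equal to $B_t$, so it becomes a join node. Since $\sum_t p_t$ equals the number of edges of $T$, this adds $\mathcal{O}(n)$ nodes. \emph{(b)~Subdivide each remaining parent–child edge $tc$:} writing $A:=B_t$ and $B:=B_c$, replace $tc$ by a path which, read from $c$ upward to $t$, first forgets the vertices of $B\setminus A$ one by one (the bag decreasing to $A\cap B$) and then introduces the vertices of $A\setminus B$ one by one (the bag increasing to $A$); the inserted nodes are precisely forget nodes and introduce nodes. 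This adds $|B\setminus A|+|A\setminus B|\le |A|+|B|\le 2(w+1)$ nodes per edge, hence $\mathcal{O}(wn)$ nodes overall.

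\emph{(c)~Fix the ends:} above $r$ attach a path forgetting the vertices of $B_r$ one by one, so the new root has empty bag; below each current leaf $\ell$ attach a path introducing the vertices of $B_\ell$ one by one, its bottommost node having empty bag and being a leaf node. This adds at most $w+1$ nodes for the root and per leaf, so $\mathcal{O}(wn)$ more. Because the shrinking phase left every pair of adjacent bags incomparable, step~(b) truly subdivides every remaining edge (both $B\setminus A$ and $A\setminus B$ are nonempty), so no node ends up with a single child carrying the same bag, and every node is now a leaf, introduce, forget, or join node. The only part that needs genuine care — the mildly ``hard'' part, such as it is — is checking that condition~(3) of a tree decomposition survives all this surgery: along a subdivided edge every intermediate bag lies between $A\cap B$ and $A$ or between $A\cap B$ and $B$, so for every vertex the bags containing it still induce a connected subtree; the copy-node gadgets of step~(a) and the end-paths of step~(c) are verified the same way. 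The number of nodes is then $\mathcal{O}(wn)$ as claimed, the width was never changed, and with the obvious data structures the whole construction runs within the stated time bound.
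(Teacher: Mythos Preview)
The paper does not prove this lemma at all: it is stated as folklore with a pointer to Lemma~7.4 of~\cite{Cygan15}, and no argument is supplied. Your write-up is the standard two-phase construction (contract comparable adjacent bags to reach $|V(T)|\le n$, then replace high-degree nodes by binary join trees and subdivide each edge by a forget/introduce path), and it is correct; there is nothing in the paper to compare it against.
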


\subsection{Boundaried graphs}

	For a graph $G$ and $S\subseteq V(G)$, the pair $(G, S)$ is called a \emph{boundaried graph}.
	When $G$ is a $d$-labeled graph, 
	we simply say that $(G,S)$ is a $d$-labeled graph.
	Two $d$-labeled graphs $(G,S)$ and $(H,S)$ are said to be \emph{compatible} 
	if $V(G-S)\cap V(H-S)=\emptyset$, $G[S]=H[S]$, and $G$ and $H$ have the same labels on $S$.
	For two compatible $d$-labeled graphs $(G,S)$ and $(H,S)$, 
	the \emph{sum} of two graphs is the graph obtained from the disjoint union of $G$ and $H$ 
	by identifying each vertex of $S$ in $G$ with the same vertex in $H$ and 
	removing an edge from multiple edges that appear in $S$.
	We denote the resulting graph by $(G,S)\oplus (H,S)$.
	See Figure~\ref{fig:sumgraphs} for an example.
	
	\begin{figure}
  \centering
  \begin{tikzpicture}[scale=0.7]
  \tikzstyle{w}=[circle,draw,fill=black,inner sep=0pt,minimum width=4pt]

	\draw (1, 1) node [w] (v1) {};
	\draw (2, 1) node [w] (v2) {};
	\draw (3, 1) node [w] (v3) {};
	\draw (1, 2.5) node [w] (v4) {};
	\draw (2, 2) node [w] (v5) {};
	\draw (2, 3) node [w] (v6) {};

	\draw(v1)--(v2)--(v3);
	\draw(v1)--(v4)--(v5)--(v1);
	\draw(v4)--(v6)--(v5)--(v2);
	\draw(v5)--(v3);

	\draw (1, -1) node [w] (w1) {};
	\draw (2, -1) node [w] (w2) {};
	\draw (3, -1) node [w] (w3) {};
	\draw (1, -2) node [w] (w4) {};
	\draw (2, -2) node [w] (w5) {};
	\draw (3, -2) node [w] (w6) {};
	\draw (2.5, -2.6) node [w] (w7) {};

	\draw(w1)--(w2)--(w3);
	\draw(w4)--(w1)--(w5)--(w2)--(w6)--(w3);
	\draw(w4)--(w5)--(w6)--(w7)--(w5);

	\draw[rounded corners] (0.7, 1)--(0.7,1.3)--(3.3,1.3)--(3.3,0.7)--(0.7, 0.7)--(0.7,1);
	\draw[rounded corners] (0.7, -1)--(0.7,-1.3)--(3.3,-1.3)--(3.3,-0.7)--(0.7, -0.7)--(0.7,-1);

	\node at (-1, 2) {$(G,S)$};
	\node at (-1, -2) {$(H,S)$};

	\draw (8+1, 1-1) node [w] (x1) {};
	\draw (8+2, 1-1) node [w] (x2) {};
	\draw (8+3, 1-1) node [w] (x3) {};
	\draw (8+1, 2.5-1) node [w] (x4) {};
	\draw (8+2, 2-1) node [w] (x5) {};
	\draw (8+2, 3-1) node [w] (x6) {};

	\draw(x1)--(x2)--(x3);
	\draw(x1)--(x4)--(x5)--(x1);
	\draw(x4)--(x6)--(x5)--(x2);
	\draw(x5)--(x3);

	\draw (8+1, -1+1) node [w] (y1) {};
	\draw (8+2, -1+1) node [w] (y2) {};
	\draw (8+3, -1+1) node [w] (y3) {};
	\draw (8+1, -2+1) node [w] (y4) {};
	\draw (8+2, -2+1) node [w] (y5) {};
	\draw (8+3, -2+1) node [w] (y6) {};
	\draw (8+2.5, -2.6+1) node [w] (y7) {};

	\draw(y1)--(y2)--(y3);
	\draw(y4)--(y1)--(y5)--(y2)--(y6)--(y3);
	\draw(y4)--(y5)--(y6)--(y7)--(y5);

	\node at (11-1, -2.6) {$(G,S)\oplus (H,S)$};
	
	\draw[rounded corners] (8+0.7, 1-1)--(8+0.7,1.3-1)--(8+3.3,1.3-1)--(8+3.3,0.7-1)--(8+0.7, 0.7-1)--(8+0.7,1-1);

   \end{tikzpicture}     \caption{An example of the sum $(G,S)\oplus (H,S)$.}\label{fig:sumgraphs}
\end{figure}
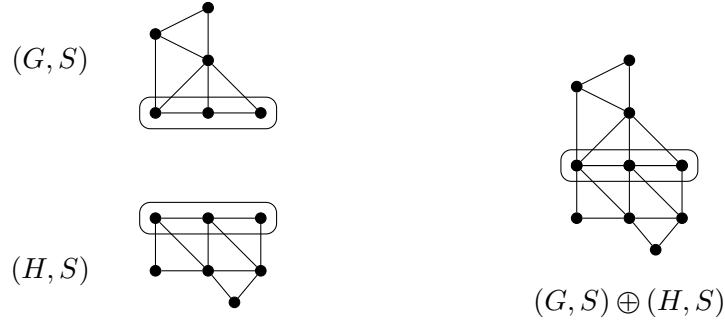
	
	We also denote by $L_G\oplus L_H$ the function from $V(G)\cup V(H)$ to $[d]$ where 
	for $v\in V(G)\cup V(H)$, $(L_G\oplus L_H)(v)=L_G(v)$ if $v\in V(G)$ and $(L_G\oplus L_H)(v)=L_H(v)$ otherwise.
	Notice that $L_G\oplus L_H$ is not necessarily a block $d$-labeling of $G\oplus H$.
	For two unlabeled boundaried graphs, we define the sum in the same way, but ignoring the label condition.

	A block of a graph is \emph{non-trivial} if it has at least two vertices.
	For a boundaried graph $(G,S)$, 
	a block $B$ of $G$ is called an \emph{$S$-block} if it contains an edge of $G[S]$.
	Note that every non-trivial block of $G[S]$ is contained in a unique $S$-block of $G$ because two distinct blocks share at most one vertex.

	Let $(G,S)$ be a boundaried graph. 
	We define $\Part (G,S)$ as the bipartite boundaried graph with bipartition $(\mathcal{X}, \mathcal{Y})$ and boundary $\mathcal{Y}$ such that
	\begin{enumerate}
	\item  $\mathcal{X}$ is the set of components of $G$, and $\mathcal{Y}$ is the set of components of $G[S]$, and
	\item  for $C_1\in \mathcal{X}$ and $C_2\in \mathcal{Y}$, $C_1C_2\in E(\Part(G,S))$ if and only if $C_2$ is contained in $C_1$.
	\end{enumerate}
	We remark that when $(G,S)$ and $(H,S)$ are two compatible $d$-labeled graphs, 
	$\Part (G,S)\oplus \Part(H,S)$ is well-defined, as $G$ and $H$ have the same set of components on $S$.
	We will use this notation to check, when we take the sum of two compatible $d$-labeled graphs $(G,S)$ and $(H,S)$, 
	whether the sum contains a chordless cycle through the cycle of 
	$\Part (G,S)\oplus \Part(H,S)$.
	
\section{Lemmas about chordal graphs and $S$-blocks}\label{sec:chordalsum}

	In this section, we present several lemmas regarding $S$-blocks. 
		
	For a biconnected $d$-labeled graph $Q$, 	
	we say that a $d$-labeled graph $(G, S)$ is \emph{block-wise partially label-isomorphic to $Q$} if 
	every $S$-block $B$ of $G$ is partially label-isomorphic to $Q$.
	A first result describes sufficient conditions for when, given a chordal labeled graph $Q$, 
	the sum of two given labeled graphs $(G,S)$ and $(H,S)$, each block-wise partially label-isomorphic to $Q$, is again block-wise partially label-isomorphic to $Q$.
	This argument will be used in the algorithm to decide whether the sum of two partial solutions is again a partial solution.

	To guarantee that the sum is again a block-wise partially label-isomoprhic to $Q$, 
	we need a compatibility condition.
	Informally, this condition arises due to the property of chordal graphs in Lemma~\ref{lem:chordalseparator}.
	Suppose $B$ is a block of $G[S]$. Then, for the sum to be label-isomorphic to $Q$, 
	if $B_1$ and $B_2$ are the $S$-blocks of $G$ and $H$ containing $B$, 
	then connected components of $B_1-V(B)$ and $B_2-V(B)$ have to indicate other components of $Q-X$, 
	where $X$ is the corresponding vertex set of $B$ in $Q$.	
	This can be checked by the labels of neighbors of $X$ in $Q$, 
	since there is a bijection between connected components of $Q-X$ and connected components of $Q[N_Q(X)]$.
	
	Formally, we define this compatibility condition as follows.
	For two compatible $d$-labeled graphs $(G,S)$ and $(H,S)$ with labelings $L_G$ and $L_H$ respectively, 
	we say that $(G,S)$ and $(H,S)$ are \emph{block-wise $Q$-compatible}
	if 
	\begin{enumerate}
	\item $(G,S)$ and $(H,S)$ are block-wise partially label-isomorphic to $Q$; and
	\item for every non-trivial block $B$ of $G[S]$, letting $B_1$ and $B_2$ be the $S$-blocks of $G$ and $H$ that contain $B$, respectively, 
	we have
	\begin{enumerate}
	\item $L_G(N_{B_1}(V(B))\setminus S) \cap L_H(N_{B_2}(V(B))\setminus S)=\emptyset$, and,
	\item for every $\ell_1\in L_G(N_{B_1}(V(B))\setminus S)$ and every $\ell_2\in L_H(N_{B_2}(V(B))\setminus S)$, the vertices in $Q$ with labels $\ell_1$ and $\ell_2$ are not adjacent.
	\end{enumerate}
	\end{enumerate}
	
	However, this local property is not sufficient to guarantee that the sum is again label-isomorphic to $Q$.
	The reason is that there might be a chordless cycle that is not captured by $S$-blocks of $(G,S)\oplus (H,S)$. 
	We provide such an example in Figure~\ref{fig:labelisomorphic}.
	Observe that, in that case, $\Part (G,S)\oplus \Part (H,S)$ has a cycle.
	On the other hand, we can show that 
	if we add the condition that 	$\Part (G,S)\oplus \Part (H,S)$ has no cycles, 
	then  the sum is indeed label-isomorphic to $Q$.

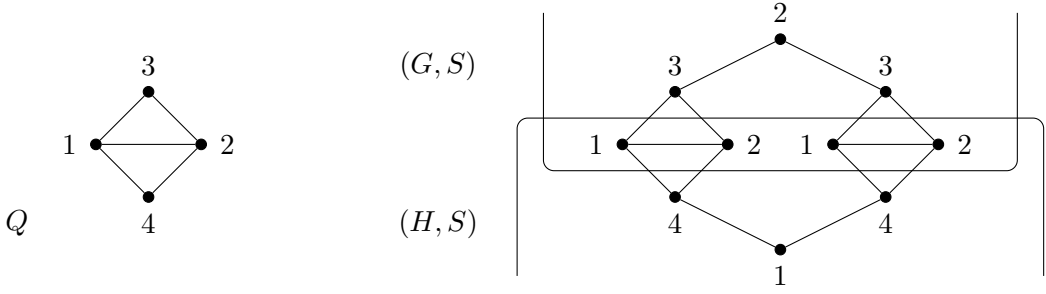
\begin{figure}
  \centering
  \begin{tikzpicture}[scale=0.7]
  \tikzstyle{w}=[circle,draw,fill=black,inner sep=0pt,minimum width=4pt]

	\draw (2-10,.5) node [w] (v1) {};
	\draw (4-10,.5) node [w] (v2) {};
	\draw (3-10,1.5) node [w] (v5) {};
	\draw (3-10,-0.5) node [w] (v6) {};

	\draw (v5)--(v1)--(v6)--(v2)--(v5);
	\draw(v1)--(v2);

	\node at (-9.5, -1) {$Q$};

	\node at (2-10-.5,.5)   {$1$};
	\node at (4-10+.5,.5)  {$2$};
	\node at (3-10,1.5+.5)  {$3$};
	\node at (3-10,-0.5-.5)  {$4$};

	\node at (2-.5,.5)   {$1$};
	\node at (4+.5,.5)  {$2$};
	\node at (3,1.5+.5)  {$3$};
	\node at (3,-0.5-.5)  {$4$};

	\node at (2+4-.5,.5)   {$1$};
	\node at (4+4+.5,.5)  {$2$};
	\node at (3+4,1.5+.5)  {$3$};
	\node at (3+4,-0.5-.5)  {$4$};

	\node at (5,3)  {$2$};
	\node at (5,-2)  {$1$};

	\draw[rounded corners] (0, -2)--(0,1)--(10,1)--(10,-2);
	\draw[rounded corners] (.5, 3)--(.5,0)--(10-.5,0)--(10-.5,3);
     \node at (-1.5, 2) {$(G,S)$};
     \node at (-1.5, -1) {$(H,S)$};

	\draw (2,.5) node [w] (w1) {};
	\draw (4,.5) node [w] (w2) {};
	\draw (6,.5) node [w] (w3) {};
	\draw (8,.5) node [w] (w4) {};

	\draw (3,1.5) node [w] (w5) {};
	\draw (3,-0.5) node [w] (w6) {};

	\draw (7,1.5) node [w] (w7) {};
	\draw (7,-0.5) node [w] (w8) {};

	\draw (5,2.5) node [w] (w9) {};
	\draw (5,-1.5) node [w] (w0) {};

	\draw (w5)--(w1)--(w6)--(w2)--(w5);
	\draw (w1)--(w2);

	\draw (w7)--(w3)--(w8)--(w4)--(w7);
	\draw (w3)--(w4);
	
	\draw (w5)--(w9)--(w7);
	\draw (w6)--(w0)--(w8);

   \end{tikzpicture}     \caption{An example where the sum of two labeled graphs $(G,S)$ and $(H,S)$, each partially label-isomorphic to $Q$, is not partially label-isomorphic to $Q$, since
   $\Part (G,S)\oplus \Part (H,S)$ has a cycle. }\label{fig:labelisomorphic}
\end{figure}

	\begin{proposition}\label{prop:sumofchordalgraphs2}
	Let $Q$ be a biconnected $d$-labeled chordal graph.
	Let $(G, S)$ and $(H,S)$ be two block-wise $Q$-compatible $d$-labeled graphs such that
	$\Part (G,S)\oplus \Part (H,S)$ has no cycles.
	Then $(G,S)\oplus (H,S)$ is block-wise partially label-isomorphic to $Q$.
	\end{proposition}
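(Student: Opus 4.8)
The plan is to argue one block at a time: it suffices to show that every $S$-block $B^*$ of $F:=(G,S)\oplus(H,S)$ is label-isomorphic to $Q[X_{B^*}]$, where $L:=L_G\oplus L_H$ and $X_{B^*}:=\{q\in V(Q):L_Q(q)\in L(V(B^*))\}$. The case where $B^*$ is a single edge $ss'$ (so $s,s'\in S$) is immediate: $ss'$ lies in some $S$-block $B_1$ of $G$, which by hypothesis is label-isomorphic to $Q$ restricted to the labels on $V(B_1)$; hence $L(s)\ne L(s')$ and the vertices of $Q$ with labels $L(s),L(s')$ are adjacent, so $Q[X_{B^*}]$ is a single edge. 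From now on assume $B^*$ is $2$-connected.

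The first step is a structural description of $B^*$, and this is where acyclicity of $\Part(G,S)\oplus\Part(H,S)$ is used. Since a block of $G$ (or of $H$) sharing at least two vertices with $B^*$ is contained in $B^*$, and every edge of $B^*$ lies in $E(G)$ or $E(H)$, the graph $B^*$ is the union of the blocks of $G$ and of $H$ contained in it, and at least one such block on each side is an $S$-block because $B^*$ contains an edge of $F[S]=G[S]=H[S]$. The key claim is that these pieces are arranged \emph{tree-like}: listing the non-trivial blocks $B^{(1)},\dots,B^{(r)}$ of $F[S]$ contained in $B^*$, with $B_1^{(i)}$ (resp.\ $B_2^{(i)}$) the $S$-block of $G$ (resp.\ of $H$) containing $B^{(i)}$, the pieces $B_1^{(1)},B_2^{(1)},B_1^{(2)},\dots$ can be glued one at a time, each new piece meeting the union of the previous ones along exactly one $B^{(i)}$. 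If this failed -- i.e.\ if some newly attached piece met the current union in two blocks of $F[S]$ lying in a common component of $F[S]$, or if two of the $B_1^{(i)}$ overlapped ``inconsistently'' -- then, tracing the components of $G$, of $H$, and of $F[S]$ containing the relevant pieces and blocks, and using that the $2$-connected graph $B^*$ also reconnects those pieces, one would read off a cycle $X_G-C-X_H-C'-X_G'-\dots$ in $\Part(G,S)\oplus\Part(H,S)$, a contradiction. Carrying out this analysis carefully -- keeping track of how the internally disjoint paths of $B^*$ project onto the components of $G$, of $H$, and of $F[S]$, and extracting a cycle of $\Part(G,S)\oplus\Part(H,S)$ whenever the tree-like structure is violated -- is the technical heart of the argument and the step I expect to be the main obstacle; Figure~\ref{fig:labelisomorphic} shows exactly what goes wrong when this hypothesis is dropped.

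With the tree-like structure in hand, I would finish by induction on $r$, maintaining the invariant that the union $B'$ of the pieces glued so far is label-isomorphic to $Q$ restricted to the labels on $V(B')$. When attaching a new piece $B_2$ along a non-trivial block $B$ of $F[S]$ with $B\subseteq B'$ and $B'[V(B)]=B=B_2[V(B)]$, let $X$ be the label-image of $V(B)$ in $Q$; then $Q[X]\cong B$, so $X$ is connected in $Q$, and Lemma~\ref{lem:chordalseparator} applied to the chordal graph $Q$ gives a bijection between the components of $Q-X$ and those of $Q[N_Q(X)]$. Block-wise $Q$-compatibility says the labels occurring on the neighbours of $V(B)$ (outside $S$) inside $B'$ and inside $B_2$ are disjoint and span no edge of $Q$; via this bijection it follows that the components of $Q-X$ met by the label-image of $V(B')\setminus V(B)$ are disjoint from those met by $V(B_2)\setminus V(B)$, with no edge of $Q$ between the two families. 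Consequently the map $v\mapsto(\text{the vertex of }Q\text{ with label }L(v))$ is well-defined and injective on $V(B')\cup V(B_2)$ with image $\{q:L_Q(q)\in L(V(B')\cup V(B_2))\}$; it carries every edge of $B'$ or of $B_2$ to an edge of $Q$; and conversely, any edge of $Q$ among these vertices either is incident to $X$ or lies inside a single component of $Q-X$, hence -- by the separation just established -- is witnessed entirely on the $B'$-side or entirely on the $B_2$-side, so it is the image of an edge of $B'\cup B_2$. This gives the required label-isomorphism, closing the induction and proving the proposition. Besides the structural claim of the previous paragraph, the delicate point here is the bootstrapping of the \emph{local} compatibility condition (stated only for neighbours of a block $B$ of $F[S]$) to the \emph{global} statement that the label sets of $V(B')\setminus V(B)$ and $V(B_2)\setminus V(B)$ are disjoint -- this is precisely where Lemma~\ref{lem:chordalseparator} turns neighbourhood information into a component-level separation.
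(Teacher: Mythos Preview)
Your outline is a reasonable alternative strategy, but both of the steps you flag as delicate hide more than you suggest, and as written the argument is incomplete.

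For the tree-like structure: the assertion that each new piece meets the union of the previous ones along exactly one $B^{(i)}$ is not literally true without further work. An $S$-block $B_2^{(i)}$ of $H$ being attached may contain vertices of $S$ outside $V(B^{(i)})$---for instance a vertex $s$ in an adjacent block of $F[S]$---and such an $s$ may already lie in $B'$ (inside some earlier $G$-piece), so that $V(B')\cap V(B_2^{(i)})\supsetneq V(B^{(i)})$. You would also need that any two distinct $S$-blocks of $G$ inside $B^*$ can meet only in $S$; this does follow from the acyclicity hypothesis, but establishing it is essentially the content of Lemmas~\ref{lem:edgesblock} and~\ref{lem:inducedpathsblock}, not a routine check.

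For the inductive step: the compatibility hypothesis constrains only the labels of $N_{B_1^{(i)}}(V(B^{(i)}))\setminus S$ against those of $N_{B_2^{(i)}}(V(B^{(i)}))\setminus S$. It says nothing about neighbours lying in $S$, yet both $N_{B'}(V(B^{(i)}))$ and $N_{B_2^{(i)}}(V(B^{(i)}))$ can contain vertices of $S$ (through cut vertices of $F[S]$), and $N_{B'}(V(B^{(i)}))$ can also contain vertices contributed by earlier $H$-pieces. So the appeal to Lemma~\ref{lem:chordalseparator} does not by itself separate the label-images into disjoint components of $Q-X$; you would need to carry a stronger running invariant---for each $B^{(j)}$ already in $B'$, exactly which components of $Q$ minus the label-image of $V(B^{(j)})$ are occupied---and check it is preserved at every attachment, which is substantial extra work you have not done.

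The paper sidesteps both issues by not decomposing $B^*$ at all. It defines $\mu\colon V(B^*)\to V(Q)$ via labels and invokes Lemma~\ref{lem:labelisomorphic}: since $Q$ is chordal, it suffices that every induced path $uvw$ in $B^*$ has $L(u)\ne L(w)$ and $\mu(u)\mu(v)\mu(w)$ induced in $Q$. Lemma~\ref{lem:edgesblock} places every edge of $B^*$ in some $S$-block of $G$ or $H$; when $u,w$ lie in a common $S$-block the path claim is immediate, and otherwise Lemma~\ref{lem:inducedpathsblock} forces $v\in S$ and produces a path $u=q_1q_2\cdots q_\ell=w$ of neighbours of $v$, along which the compatibility condition at each $G$/$H$ crossing verifies the local claim. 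Chordality of $Q$ then globalises this inside Lemma~\ref{lem:labelisomorphic}. This route is shorter and avoids any global structural decomposition of $B^*$.
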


	The following lemma is an essential property of chordal graphs.
\begin{lemma}\label{lem:labelisomorphic}
	Let $F$ be a connected graph and $Q$ be a connected chordal graph. 
	Let $\mu:V(F)\rightarrow V(Q)$ be a function such that
	for every induced path $p_1 \cdots p_m$ in $F$ of length at most two, $\mu(p_1), \ldots, \mu(p_m)$ are pairwise distinct and $\mu(p_1)\cdots \mu(p_m)$ is an induced path of $Q$.
	Then $\mu$ is an injection and preserves the adjacency relation.
\end{lemma}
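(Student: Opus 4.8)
The plan is to establish, by induction on $k \ge 0$, the following statement, which is slightly stronger than what we need: \emph{for every induced path $p_0 p_1 \cdots p_k$ of $F$, the vertices $\mu(p_0), \ldots, \mu(p_k)$ are pairwise distinct and $\mu(p_0)\mu(p_1)\cdots\mu(p_k)$ is an induced path of $Q$.} The cases $k \le 2$ are precisely the hypothesis of the lemma (for $k=0$ there is nothing to prove; for $k=1$ an edge is sent to an edge with distinct endpoints; and $k=2$ is the stated condition on induced paths of length two). The lemma will then follow by applying this statement to shortest paths.

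For the inductive step, let $k \ge 3$ and consider an induced path $p_0 p_1 \cdots p_k$ of $F$. Its two subpaths $p_0 \cdots p_{k-1}$ and $p_1 \cdots p_k$ are induced paths of $F$ of length $k-1$, so by the induction hypothesis $\mu(p_0)\cdots\mu(p_{k-1})$ and $\mu(p_1)\cdots\mu(p_k)$ are induced paths of $Q$; equivalently, for indices lying entirely in $\{0,\dots,k-1\}$ or entirely in $\{1,\dots,k\}$, the images $\mu(p_i)$ and $\mu(p_j)$ are distinct, and adjacent in $Q$ if and only if $|i-j|=1$. Hence, to conclude that $\mu(p_0)\cdots\mu(p_k)$ is an induced path of $Q$, it remains only to rule out $\mu(p_0)=\mu(p_k)$ and $\mu(p_0)\mu(p_k)\in E(Q)$. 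For the former: since $k-1 \ge 2$, $\mu(p_0)$ and $\mu(p_{k-1})$ are non-adjacent (they are the endpoints of an induced path of length at least two), whereas $\mu(p_{k-1})\mu(p_k)$ is an edge, so $\mu(p_k)\ne\mu(p_0)$; together with the induction hypothesis this shows $\mu(p_0),\ldots,\mu(p_k)$ are pairwise distinct.

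For the latter --- and this is the only place chordality is used, and the one genuinely delicate point of the argument --- suppose $\mu(p_0)\mu(p_k) \in E(Q)$. Since the $\mu(p_i)$ are distinct and consecutive ones are adjacent, $\mu(p_0)\mu(p_1)\cdots\mu(p_k)\mu(p_0)$ is a cycle of length $k+1 \ge 4$ in $Q$. Every chord of this cycle is a pair $\mu(p_i)\mu(p_j)$ with $|i-j|\ge 2$ and $\{i,j\}\ne\{0,k\}$, and each such pair is contained in $\{0,\dots,k-1\}$ or in $\{1,\dots,k\}$; by the induction hypothesis it is therefore a non-edge of $Q$. So this cycle is chordless of length at least four, contradicting that $Q$ is chordal. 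This finishes the induction.

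Finally, to deduce the lemma: as $F$ is connected, any two distinct vertices $u,v$ are joined by a shortest path of $F$, which is automatically an induced path $u = p_0, \ldots, p_k = v$ with $k\ge 1$, so the claim gives $\mu(u)\ne\mu(v)$; hence $\mu$ is injective. If $uv\in E(F)$ then $u,v$ is an induced path of length $1$ and so $\mu(u)\mu(v)\in E(Q)$; and if $u\ne v$ with $uv\notin E(F)$, then a shortest $u$--$v$ path has length at least two, so its $\mu$-image is an induced path of $Q$ of length at least two, whose endpoints $\mu(u),\mu(v)$ are non-adjacent. Thus $\mu$ both preserves and reflects the adjacency relation, as required. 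I expect no real difficulty beyond the chordality step; the bookkeeping of which pairs of indices are already controlled by the two subpaths is routine.
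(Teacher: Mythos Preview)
Your proof is correct and follows essentially the same strategy as the paper: both arguments show that induced paths of $F$ are sent to induced paths of $Q$ by exhibiting a chordless cycle in $Q$ whenever this fails, using the two overlapping subpaths of length one less. The paper splits the argument into two separate claims (injectivity, then adjacency preservation), each with its own minimality argument, whereas you package both into a single induction on the path length; your organization is slightly cleaner, but the content is the same.
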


\begin{proof}
	We first show that $\mu$ is an injection.

	\begin{claim}\label{claim:injection} $F$ has no two vertices  $v$ and $w$ with $\mu(v)=\mu(w)$.
	\end{claim}
	\begin{clproof}
	Suppose $F$ has two distinct vertices $v$ and $w$ with $\mu(v)=\mu(w)$.
	Let $P=p_1p_2 \cdots p_x$ be a shortest path from $v=p_1$ to $w=p_x$ in $F$.
	Note that $P$ is an induced path, and by assumption, $x\ge 4$ and $\mu(p_1)\mu(p_2)\mu(p_3)$ is an induced path in $Q$.
	This further implies that $\mu(p_4)\neq \mu(p_i)$ for $i\in \{1,2,3\}$.
	Thus, we have $x\ge 5$.

	Let $y\in \{4, \ldots, x-1\}$ be the smallest integer such that $\mu(p_y)$ has a neighbor in $\{\mu(p_1), \ldots, \mu(p_{y-3})\}$.
	Such an integer exists as $\mu(p_1)=\mu(p_x)$, so $\mu(p_{x-1})$ is adjacent to $\mu(p_1)$, and 
	$\mu(p_i)\mu(p_{i+1})\mu(p_{i+2})$ is an induced path for each $1\le i\le x-2$.
	Let $\mu(p_z)$ be a neighbor of $\mu(p_y)$ with $z\in \{1, 2, \ldots, y-3\}$ and maximum $z$.
	Then $\mu(p_z)\mu(p_{z+1}) \cdots \mu(p_y)\mu(p_z)$ is an induced cycle of length at least $4$, which contradicts the assumption that
	$Q$ is chordal.
	\end{clproof}

Now, we show that 
$\mu$ preserves the adjacency relation.

\begin{claim} For each $v,w\in V(F)$, 
$vw\in E(F)$ if and only if $\mu(v)\mu(w)\in E(Q)$.
\end{claim}
\begin{clproof}
Suppose there are two vertices $v$ and $w$ in $F$ such that the adjacency relation between $v$ and $w$ in $F$ is different from the adjacency relation between $\mu(v)$ and $\mu(w)$ in $Q$.
When $vw\in E(F)$, $\mu(v)$ is adjacent to $\mu(w)$ in $Q$ by assumption.
Thus, $vw\notin E(F)$ and $\mu(v)\mu(w)\in E(Q)$.
We choose such vertices $v$ and $w$ with minimum distance in $F$.
Let $P=p_1p_2 \cdots p_x$ be a shortest path from $v=p_1$ to $w=p_x$ in $F$.
Observe that $x\ge 4$.
By the minimality of the distance,
each of $\mu(p_1)\mu(p_2) \cdots \mu(p_{x-1})$ and $\mu(p_2)\mu(p_3) \cdots \mu(p_x)$ is an induced path in $Q$.
Therefore, $\mu(p_1)\mu(p_2) \cdots \mu(p_x)\mu(p_1)$ is an induced cycle of length at least four in $Q$, 
contradicting the assumption that $Q$ is chordal. 
\end{clproof}

This completes the proof.
\end{proof}

	We need two more auxiliary lemmas to prove Proposition~\ref{prop:sumofchordalgraphs2}.

\begin{figure}
\centerline{
\includegraphics[scale=0.55]{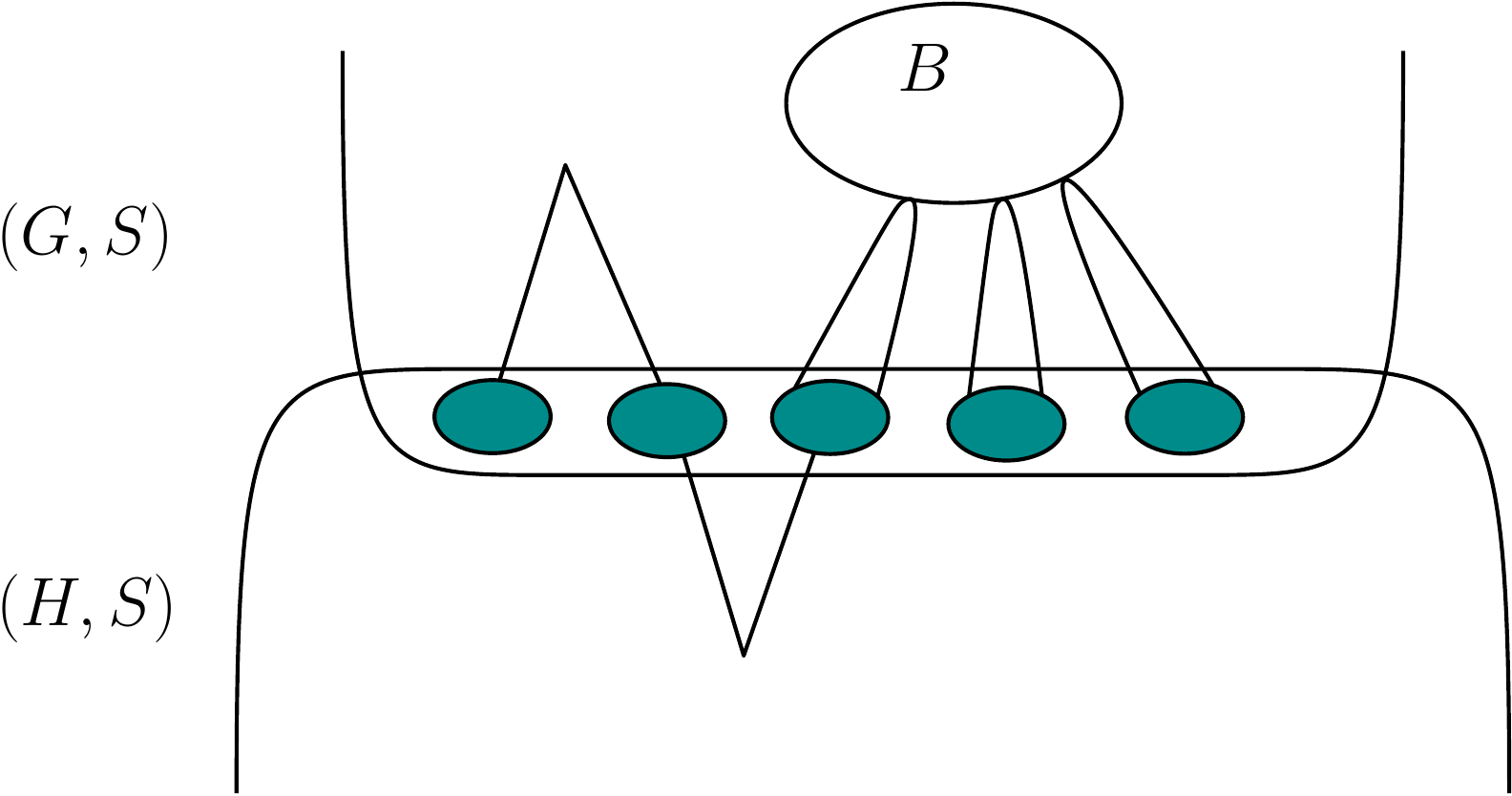}
}
 \caption{When the block $B$ is not an $S$-block in Lemma~\ref{lem:edgesblock}. For each connected component $X$ of $G[S]$, 
 there is a cut vertex of $G$ in $B$ separates $X$ from $B$ in $(G, S)\oplus (H,S)$.}\label{fig:edgesblock}
\end{figure}

	\begin{lemma}\label{lem:edgesblock}
	Let $(G, S)$ and $(H,S)$ be two compatible $d$-labeled graphs such that
	$\Part (G,S)\oplus \Part (H,S)$ has no cycles.
	If $F$ is an $S$-block of $(G,S)\oplus (H,S)$ and $uv$ is an edge in $F$, 
	then $uv$ is contained in some $S$-block of $G$ or $H$.
	\end{lemma}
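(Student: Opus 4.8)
The plan is to argue by contradiction. Suppose $F$ is an $S$-block of $(G,S)\oplus(H,S)$ and $uv\in E(F)$ is an edge that is not contained in any $S$-block of $G$ and not contained in any $S$-block of $H$. First I would record the basic structure of edges of the sum: every edge of $G\oplus H$ is an edge of $G$ or an edge of $H$ (edges of $G[S]=H[S]$ being shared), so without loss of generality $uv\in E(G)$. Let $B_G$ be the block of $G$ containing $uv$; by assumption $B_G$ is not an $S$-block of $G$, i.e.\ $B_G$ contains no edge of $G[S]$.

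Next I would exploit that $F$ is an $S$-block, so $F$ contains an edge of $G[S]$, hence a nontrivial component $X$ of $G[S]$ with $V(X)\subseteq V(F)$. The key point is to locate a cut vertex that separates $B_G$ from $X$, contradicting that both lie in the single block $F$. Since $B_G$ is a block of $G$ disjoint (edge-wise) from $G[S]$, in $G$ any path from an internal vertex of $B_G$ to a vertex of $S$ must leave $B_G$ through a cut vertex of $G$; let $c$ be the cut vertex of $G$ through which the block $B_G$ attaches toward $S$ (more precisely, consider the block–cut tree of $G$ and the first cut vertex on the path from $B_G$ toward any $S$-block of $G$; if $B_G$ meets $S$ in at most one vertex, that single vertex plays the role of $c$). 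Then in $G$ every path from $V(B_G)\setminus\{c\}$ to $V(G[S])$ passes through $c$.

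The main obstacle — and the reason the hypothesis on $\Part(G,S)\oplus\Part(H,S)$ is needed — is that in the sum $(G,S)\oplus(H,S)$ one could in principle route around $c$ through $H$: a path could go from $B_G$ into some component of $G[S]$, cross into $H$, come back into a different component of $G[S]$, and re-enter $B_G$, thereby making $c$ not a cut vertex of the sum. I would show this cannot happen: any such ``bypass'' of $c$ would have to enter and leave $H$ through two distinct components of $G[S]$ and likewise use two distinct components of $G$ (or return to the same component via a second component of $G[S]$), and tracing the components of $G$, $H$, and $G[S]$ visited yields a closed walk in $\Part(G,S)\oplus\Part(H,S)$ of length at least four, hence a cycle in that bipartite graph — contradicting the hypothesis. (Here I use that a vertex $s\in S$ lies in $F$ exactly when its component in $G[S]$ is contained, via $\Part$, in the component of $F$.) Consequently $c$ remains a cut vertex of $(G,S)\oplus(H,S)$ separating the interior of $B_G$ from all of $V(G[S])\setminus\{c\}$, and in particular separating $u$ or $v$ from $X$. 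But $u,v$ and $V(X)$ all lie in $F$, and $F$ is biconnected, so $F-c$ is connected; this is the contradiction. I would finish by noting the symmetric case $uv\in E(H)$ is identical, which completes the proof.

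The step I expect to be genuinely delicate is the bookkeeping in the previous paragraph: formalizing ``bypass path'' into an actual cycle of $\Part(G,S)\oplus\Part(H,S)$. I would handle it by taking a shortest path $P$ in $(G,S)\oplus(H,S)$ witnessing that $c$ is not a cut vertex (a path avoiding $c$ from the $B_G$-side to a vertex of $X$), decomposing $P$ at its transitions between $V(G)\setminus S$ and $V(H)\setminus S$, observing each maximal subpath inside $G$ (resp.\ $H$) together with the components of $G[S]$ at its two ends gives an edge-path of length two in $\Part(G,S)$ (resp.\ $\Part(H,S)$), and checking that shortestness forces the endpoints of consecutive such segments to be distinct components, so that gluing the segments yields a closed walk without repeated vertices in $\Part(G,S)\oplus\Part(H,S)$ — i.e.\ a cycle.
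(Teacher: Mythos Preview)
Your approach is the same as the paper's in outline---assume $uv\in E(G)$ with $v\notin S$, let $B_G$ be the (non-$S$) block of $G$ containing $uv$, find a vertex that separates $B_G$ from the edge of $G[S]$ witnessing that $F$ is an $S$-block, and contradict the $2$-connectedness of $F$---but there is a genuine gap in your choice of the cut vertex. You assert that a \emph{single} $c$ satisfies ``in $G$ every path from $V(B_G)\setminus\{c\}$ to $V(G[S])$ passes through $c$'', yet this is false in general: $B_G$ may contain several vertices of $S$ (with no $G[S]$-edge between them) or attach toward $S$ through several distinct cut vertices. For instance, take $B_G$ a $5$-cycle $u\,v\,a\,w\,b$ with $a,b\in S$ and $u,v,w\notin S$; then $B_G$ is not an $S$-block, but both $a$ and $b$ already lie in $S\cap V(B_G)$, so no single $c$ can work.

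The paper avoids this by letting the separating vertex depend on the target component: for each component $X$ of $G[S]$ lying in the component $C$ of $G$ that contains $B_G$, it exhibits some $w_X\in V(B_G)$ separating $B_G$ from $X$ in $G$ (using that every component of $C-V(B_G)$ has a unique neighbour in $B_G$). Then the acyclicity of $\Part(G,S)\oplus\Part(H,S)$ is invoked directly---as a tree structure on the components of $G$, $H$, and $G[S]$---to conclude that $w_X$ still separates $B_G$ from $X$ in the sum; applying this to the component containing the edge $xy\in E(G[S])\cap E(F)$ gives the contradiction. Your route can be repaired by defining $c$ relative to that specific $X$ (and handling separately the easy case where $B_G$ and $X$ lie in different components of $G$); your contrapositive ``bypass $\Rightarrow$ cycle in $\Part\oplus\Part$'' then becomes exactly the paper's tree-structure step read backwards. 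A minor further point: you only get $x,y\in V(F)$, not $V(X)\subseteq V(F)$, but that is all the argument needs.
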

	\begin{proof}
	We may assume that one of $u$ and $v$ is not contained in $S$, otherwise the block containing $uv$ in $G$ or $H$ is an $S$-block by definition.
	Without loss of generality, let us assume $v\in V(G)\setminus S$.	This implies that $u$ is also contained in $G$.

	Since $uv$ is an edge, there is a unique block of $G$ containing both $u$ and $v$.
	Let $C$ be the component of $G$ containing $u$ and $v$, and 
	let $B$ be the block of $G$ containing $u$ and $v$.
	If $B$ is an $S$-block, then we are done. Thus, we may assume that $B$ is not an $S$-block.

	For each vertex $w$ of $G$ contained in $B$, let $H_w$ be the subgraph of $G$
	induced by the union of $w$ and all components of $C-V(B)$ containing a neighbor of $w$.
	One can observe that if $H_w$ contains a vertex in a connected component of $G[S]$, 
	then $C-V(H_w)$ does not contain a vertex of that component; 
	otherwise, the existence of a cycle through $H_{w}$ and $B$ implies that $B$ is an $S$-block.
	See Figure~\ref{fig:edgesblock} for an illustration.
	This implies that for each connected component $X$ of $G[S]$ contained in $C$, 
	there is a vertex $w$ contained in $B$ such that 
	$w$ separates $B$ and $X$.
	Furthermore, 
	since $\Part (G,S)\oplus \Part (H,S)$ has no cycles, 
	for every connected component $X$ of $G[S]$, 
	there is a vertex $w$ of $G$ in $B$ such that 
	$w$ separates $B$ from $X$ in $(G,S)\oplus (H,S)$.

	As $F$ is an $S$-block of $(G,S)\oplus (H,S)$, $F$ contains an edge of $G[S]$, say $xy$. 
	Since $F$ contains $x,y$ and $v\notin S$, $F$ has at least $3$ vertices and thus it is $2$-connected.
	On the other hand, the conclusion in the previous paragraph implies that there is a vertex $w$ such that 
	$w$ separates $B$ and $\{x,y\}$ in $(G,S)\oplus (H,S)$.
	This contradicts the fact that $F$ is $2$-connected.
	
	We conclude that $B$ is an $S$-block.
	\end{proof}

	\begin{lemma}\label{lem:inducedpathsblock}
	Let $(G, S)$ and $(H,S)$ be two compatible $d$-labeled graphs such that
	each $S$-block of $G$ or $H$ is chordal, and 
	$\Part (G,S)\oplus \Part (H,S)$ has no cycles.
	If $F$ is an $S$-block of $(G,S)\oplus (H,S)$ and $uvw$ is an induced path in $F$ such that 
	$u$ and $w$ are not contained in the same $S$-block of $G$ or $H$, 
	then 
	\begin{enumerate}%
	\item $v\in S$, and 
	\item there is an induced path $q_1q_2 \cdots q_{\ell}$ from $u=q_1$ to $w=q_{\ell}$ in $F-v$
	such that each $q_i$ is a neighbor of $v$.
	\end{enumerate}
	\end{lemma}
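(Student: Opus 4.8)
The plan is to analyze the block structure of $(G,S)\oplus(H,S)$ locally around the induced path $uvw$, using Lemma~\ref{lem:edgesblock} as the basic tool. First I would establish part~(1). Since $uv$ and $vw$ are edges of $F$, Lemma~\ref{lem:edgesblock} tells us that $uv$ lies in some $S$-block of $G$ or $H$, and $vw$ lies in some $S$-block of $G$ or $H$. If both $uv$ and $vw$ belonged to the \emph{same} $S$-block $B'$ of $G$ (or of $H$), then $u$ and $w$ would lie in a common $S$-block, contradicting the hypothesis. So $uv$ and $vw$ belong to different $S$-blocks; call them $B_1$ (containing $uv$) and $B_2$ (containing $vw$). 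Two distinct blocks of a graph share at most one vertex, so in the graph where both live — which is either a single one of $G,H$, or the sum when $B_1$ and $B_2$ come from different sides — the only common vertex of $B_1$ and $B_2$ is a cut vertex, and that vertex must be $v$ since $v\in B_1\cap B_2$. The key point is that $v$ is then a cut vertex of $(G,S)\oplus(H,S)$ (or at least separates $u$ from $w$ locally), so for $u,v,w$ to all lie in the $2$-connected graph $F$, the path from $u$ to $w$ inside $F$ avoiding $v$ must exist — and moreover, $v$ being a cut vertex between the $B_1$-side and the $B_2$-side forces $v$ to be a "gluing" vertex, which in this sum construction can only happen at a vertex of $S$ (a vertex of $V(G)\setminus S$ is never identified with anything and its incident blocks in $G$ stay intact in the sum). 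This gives $v\in S$.

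For part~(2): since $F$ is an $S$-block, it contains an edge of $G[S]$ and has at least three vertices, hence is $2$-connected. As $F-v$ must still connect $u$ to $w$ (by $2$-connectivity of $F$), choose $q_1q_2\cdots q_\ell$ to be a \emph{shortest} path from $u=q_1$ to $w=q_\ell$ in $F-v$. This is automatically an induced path. The remaining claim is that each $q_i$ is a neighbor of $v$. I would argue this by contradiction: if some internal $q_i$ is not adjacent to $v$, consider the cycle $v\,q_1\,q_2\cdots q_\ell\,v$ in $F$ (using edges $vq_1=vu$ and $q_\ell v=wv$). Pick a maximal subpath $q_a\cdots q_b$ of the $q$-path none of whose internal vertices are adjacent to $v$, with $q_a,q_b$ adjacent to $v$ (such $a,b$ exist since $q_1,q_\ell$ are adjacent to $v$). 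Then $v\,q_a\,q_{a+1}\cdots q_b\,v$ is an induced cycle in $F$ of length $\ge 4$ (induced because the $q$-path is induced and only $q_a,q_b$ among them see $v$). But every edge of $F$ lies in an $S$-block of $G$ or $H$ by Lemma~\ref{lem:edgesblock}, and the hypothesis says each such $S$-block is chordal; one then needs to rule out that this chordless cycle is "split" across several blocks. Here I would use the cut-vertex structure: a chordless cycle is $2$-connected, so it lies within a single block of $F$; but actually it lies in $F$ itself, and we need its chords to be controlled. The cleaner route: the vertices $v,q_a,\dots,q_b$ together with their incident edges — each edge being in some $S$-block of $G$ or $H$ — and the fact that $\Part(G,S)\oplus\Part(H,S)$ has no cycles should force all these edges into a single $S$-block of $G$ or $H$, which is chordal, yielding the contradiction.

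The main obstacle I anticipate is the last step of part~(2): showing that the chordless cycle $v\,q_a\cdots q_b\,v$ cannot be "distributed" among several distinct $S$-blocks of $G$ and $H$ glued along $S$. A cycle is $2$-connected, so it must sit inside a single block of the ambient graph $(G,S)\oplus(H,S)$ — namely $F$ — but $F$ itself is being built from pieces, so I need that a $2$-connected subgraph of the sum is contained in a single $S$-block of $G$ or a single $S$-block of $H$. This should follow from the acyclicity of $\Part(G,S)\oplus\Part(H,S)$: if the cycle used edges from two different original blocks, those blocks meet $S$, and tracing the cycle through $S$ would produce a cycle in the auxiliary bipartite graph $\Part(G,S)\oplus\Part(H,S)$, contradicting the hypothesis. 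Once the cycle is confined to one chordal $S$-block, chordality is immediately violated and we are done. I would also double-check the degenerate cases $\ell=1$ (i.e. $uw\in E(F-v)$) and $\ell=2$ separately, where the statement is either trivial or reduces to $u,w$ both adjacent to $v$, which we already know.
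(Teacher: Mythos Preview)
Your approach to part~(1) has the right shape but is imprecise at the crucial point. Once you assume $v\notin S$, say $v\in V(G)\setminus S$, both edges $uv$ and $vw$ lie in $G$ and hence (by Lemma~\ref{lem:edgesblock}) in $S$-blocks of $G$; since $u,w$ are not in a common $S$-block, $v$ is a cut vertex of $G$. The step you gloss over is why $v$ then separates $u$ from $w$ in the \emph{sum}: this is exactly where the acyclicity of $\Part(G,S)\oplus\Part(H,S)$ is needed, to rule out a detour through $H$. The paper makes this explicit; you should too.

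Part~(2) has a genuine gap, and it is the one you flagged yourself. Taking a shortest path in $F-v$ and arguing chordality does not work, because a chordless cycle $vq_a\cdots q_bv$ in $F$ need not lie in a single $S$-block of $G$ or $H$. Your proposed repair --- that such a cycle would trace out a cycle in $\Part(G,S)\oplus\Part(H,S)$ --- is circular: that implication (essentially Proposition~\ref{prop:nocircuit}) requires every $S$-block of the sum to be chordal, which is precisely what this lemma is being used to establish. At this stage of the argument $F$ may, for all we know, contain chordless cycles spanning several original $S$-blocks, so no global chordality is available.

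The paper avoids this by \emph{not} taking a shortest path in $F-v$. Instead it lets $D$ be the component of $G[S]$ containing $v$, looks at the components $U_1,\dots,U_p$ of $D-v$, and constructs the $u$--$w$ path segment by segment: first a shortest path from $u$ to some $U_i$ inside $G-v$ or $H-v$, then a hop inside $U_i$, then a shortest path from $U_i$ to $U_j$, and so on, finishing at $w$. The point of this detour through $D-v$ is that each individual segment, together with $v$, lies in a \emph{single} $S$-block of $G$ or of $H$ (because it is $2$-connected and sits entirely on one side). Chordality of that one $S$-block then forces every vertex of the segment to be adjacent to $v$, by exactly the local chordless-cycle argument you wrote down --- but now applied where it is valid. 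This segment-by-segment construction is the idea your plan is missing.
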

	\begin{proof}
	Since $F$ contains at least $3$ vertices, $F$ is $2$-connected.
	Let $C$ be the component of $G$ containing $v$.
		
	\medskip
	(1) We verify that $v\in S$.
	Suppose $v\notin S$, and without loss of generality we assume $v\in V(G)\setminus S$.
	By Lemma~\ref{lem:edgesblock}, each of $uv$ and $vw$ is contained in some $S$-block of $G$.
	Moreover, since $u$ and $w$ are not contained in the same block, $v$ is a cut vertex of $G$.
	Let $H_1$ be the subgraph of $G$ induced by the union of $v$ and the component of $C-v$ containing $u$, 
	and let $H_2$ be the subgraph of $G$ induced by the union of $v$ and the component of $C-v$ containing $w$.
	Then $H_1$ and $H_2$ do not contain vertices from the same component of $G[S]$.
	This implies that $v$ separates $u$ and $w$ in $G$, and 
	since $\Part (G,S)\oplus \Part (H,S)$ has no cycles, 
	$v$ separates $u$ and $w$ in $(G, S)\oplus (H,S)$.
	This contradicts the assumption that $F$ is $2$-connected.
	Therefore, we have $v\in S$.

	\medskip
	(2) 
	Let $D$ be the component of $G[S]$ containing $v$.
	As $v\in V(D)$, for each $z\in \{u,w\}$, 
	we have either $z\in V(G)\setminus S$ or $z\in V(H)\setminus S$ or $z\in V(D)\setminus \{v\}$.
	
	\begin{claim}\label{claim:endpaths}
	For each $z\in \{u,w\}$, there is a path from $z$ to $V(D)\setminus \{v\}$ in $G-v$ or $H-v$.
	\end{claim}
	\begin{clproof}
	If $z\in V(D)\setminus \{v\}$, then this is clear.
	We assume $z\in V(G)\setminus S$; the symmetric argument works when $z\in V(H)\setminus S$. 
	Suppose for contradiction that there is no path from $z$ to $V(D)\setminus \{v\}$ in $G-v$.
	Then, $v$ is a cut vertex of $G$ separating $z$ from $D-v$.
	
	Let $H'$ be the component of $C-v$ containing $z$.
	If the other vertex in $\{u,w\}\setminus \{z\}$ is also contained in $H'$, then 
	there is a cycle formed with $v$ and a path from $u$ to $w$ in $H'$, 
	and thus $u,v,w$ are contained in the same block of $G$.
	Furthermore this block is an $S$-block by Lemma~\ref{lem:edgesblock}.
	This contradicts the assumption that $u$ and $w$ are not contained in the same $S$-block.
	Thus, $H'$ does not contain the other vertex in $\{u,w\}\setminus \{z\}$.
	
	Furthermore, since $\Part (G,S)\oplus \Part (H,S)$ has no cycles, 
	$v$ separates $u$ and $w$ in $(G,S)\oplus (H,S)$. 
	This contradicts the assumption that $F$ is $2$-connected.
	Therefore, there is a path from $z$ to $V(D)\setminus \{v\}$ in $G-v$.
	\end{clproof}
	
	Let $U'_1, \ldots, U'_p$ be the connected components of $D-v$, 
	and for each $i\in \{1, \ldots, p\}$, let $U_i:=G[V(U'_i)\cup \{v\}]$.
	Generally, we show the following.
	
\begin{figure}[t]
\centerline{
\includegraphics[scale=0.55]{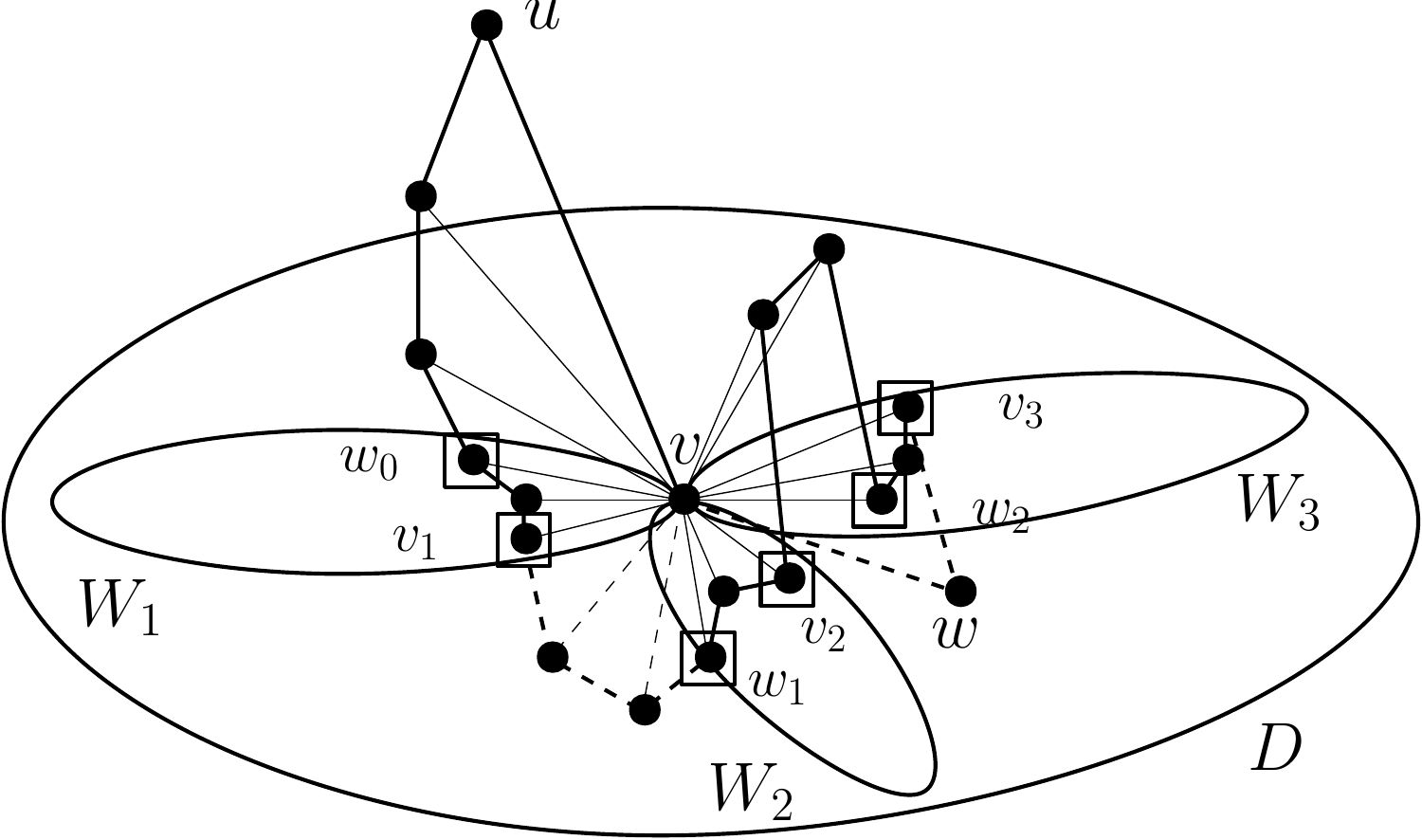}
}
\caption{The required path from $u$ to $w$ described in Lemma~\ref{lem:inducedpathsblock}. Dashed edges denote edges incident with vertices in $H-S$.}
\label{fig:blocksequence}
\end{figure}

	\begin{claim}\label{claim:blocksequence}
	There is a sequence $W_1-W_2- \cdots -W_m$ of distinct graphs in $\{U_1, \ldots, U_p\}$ such that
	\begin{itemize}
	\item there is a path from $u$ to $V(W_1)$ in $G-v$ or $H-v$, 
	\item there is a path from $w$ to $V(W_m)$ in $G-v$ or $H-v$, and
	\item if $m\ge 2$, then for each $i\in \{1, \ldots, m-1\}$, there is a path from $V(W_i)\setminus \{v\}$ to $V(W_{i+1})\setminus \{v\}$ in $G-v$ or $H-v$.
	\end{itemize}
	\end{claim}
\begin{clproof}
	Let $\mathcal{X}_u, \mathcal{X}_w\subseteq \{U_1, \ldots, U_p\}$ such that 
	\begin{itemize}
	\item for each $X\in \mathcal{X}_u$, there is a path from $u$ to $X$ in $(G, S)\oplus (H,S)-v$, 
	\item for each $X\in \mathcal{X}_w$, there is a path from $w$ to $X$ in $(G, S)\oplus (H,S)-v$. 
	\end{itemize}
 	By Claim~\ref{claim:endpaths}, $\mathcal{X}_u$ and $\mathcal{X}_w$ are non-empty.
	If $\mathcal{X}_u\cap \mathcal{X}_w\neq \emptyset$, then there is a required path.
	Suppose for contradiction that $\mathcal{X}_u\cap \mathcal{X}_w=\emptyset$.
	This implies that there is no path from components in $\mathcal{X}_u$ to components in $\mathcal{X}_w$ in $(G, S)\oplus (H,S)-v$, 
	and furthermore, there is no path from $u$ to $w$ in $(G, S)\oplus (H,S)-v$.
	This contradicts the fact that $F$ is $2$-connected.
\end{clproof}
			
	Now, we construct the required path. Fix a sequence $W_1-W_2- \cdots -W_m$ as obtained in Claim~\ref{claim:blocksequence}.
	Recall that the vertex set of each $W_i$ is contained in $S$.
	See Figure~\ref{fig:blocksequence} for an illustration.

	Let $P_0=z_1z_2 \cdots z_{\ell}$ be a path from $u=z_1$ to $w_0=z_{\ell}\in V(W_1)\setminus \{v\}$ in $G-v$ or $H-v$ such that
	\begin{enumerate}[(1)]
	\item $\ell$ is minimum,
	\item subject to (1), the distance from $w_0$ to $v$ in $W_1$ is minimum.
	\end{enumerate} 
	Let $R$ be a shortest path from $z_{\ell}$ to $v$ in $W_1$.
	As $G[V(P_0)\cup V(R)]$ is $2$-connected, it is contained in an $S$-block of $G$ or $H$, and by assumption, it is chordal.
	We claim that every vertex in $P_0$ is a neighbor of $v$.
	Suppose there exists $i\in \{2, \ldots, \ell-1\}$ such that $z_i$ is not adjacent to $v$.
	By the distance condition, there are no edges between $\{z_1, \ldots, z_{i-1}\}$ and $\{z_{i+1}, \ldots, z_{\ell}\}\cup (V(W_1)\setminus \{v\})$.
	Merging a shortest path from $z_i$ to $v$ in $G[\{z_1, \ldots, z_i\}\cup \{v\}]$ 
	and a shorest path from $z_i$ to $v$ in $G[\{z_i, \ldots, z_{\ell}\}\cup V(R)]$, 
	one can find a chordless cycle in $G[V(P_0)\cup V(R)]$; a contradiction.
	Therefore, every vertex in $V(P_0)\setminus \{z_{\ell}\}$ is a neighbor of $v$.
	Finally, by the assumption that the distance from $w_0$ to $v$ in $W_1$ is minimum, 
	$w_0$ is a neighbor of $v$; otherwise $G[\{z_{\ell-1}\}\cup V(R)]$ is a chordless cycle.
	Also, we can observe that every vertex in $P_0$ is in $F$.

	Similarly, let $P_m$ be a path from $w$ to $v_m\in V(W_m)\setminus \{v\}$ such that
	the length of $P_m$ is minimum, and subject to that, the distance from $v_m$ to $v$ in $W_m$ is minimum.
	Also, for each $i\in \{1,\ldots, m-1\}$, 
	let $P_i$ be the path from $v_i\in V(W_i)\setminus \{v\}$ to $w_i\in V(W_{i+1})\setminus \{v\}$ in $G-v$ or $H-v$ such that
	the length of $P_i$ is minimum, and subject to that,
	the sum of the distance from $v_i$ to $v$ in $W_i$ and the distance from $w_i$ to $v$ in $W_{i+1}$ is minimum.
	Lastly, 
	for each $i\in \{1, \ldots, m\}$, let $Q_i$ be a shortest path from $w_{i-1}$ to $v_i$ in $W_i-v$.
	Similar to $P_0$, 
	we can prove that every vertex of $Q_1\cup P_1\cup  \cdots \cup Q_m \cup P_m$ is a neighbor of $v$, 
	and is contained in $F$.
	Therefore, the shortest path from $u$ to $w$ in $P_0\cup Q_1\cup P_1\cup  \cdots \cup Q_m \cup P_m$ is the required path.
	\end{proof}

\begin{proof}[Proof of Proposition~\ref{prop:sumofchordalgraphs2}]
	Let $F$ be an $S$-block of $(G,S)\oplus (H,S)$. We need to show that $F$ is partially label-isomorphic to $Q$.
	If $F$ contains at most $2$ vertices, then it is contained in $G[S]$, and it is clearly partially label-isomorphic to $Q$.
	So we may assume $\abs{V(F)}\ge 3$, and thus $F$ is $2$-connected.

	Let $L_Q$ be the labeling of $Q$.
	Let $L_G$ and $L_H$ be labelings of $G$ and $H$, respectively, and $L:=L_G\oplus L_H$.
	By Lemma~\ref{lem:edgesblock}, every edge of $F$ is contained in some $S$-block of $G$ or $H$.
	This implies that for every edge $uv$ of $F$, we have $L(u)\neq L(v)$ and 
	the vertices with labels $L(u)$ and $L(v)$ are adjacent in $Q$.
	Moreover, since $(G,S)$ and $(H,S)$ are block-wise partially label-isomorphic to $Q$, 
	we have $L(V(F))\subseteq L_Q(V(Q))$.
	Let $\mu:V(F)\rightarrow V(Q)$ such that for each $v\in V(F)$, $L(v)=L_Q(\mu(v))$.

	To apply Lemma~\ref{lem:labelisomorphic}, it is sufficient to prove the following. Notice that we do not know yet whether $F$ is chordal or not.
	But since $Q$ is chordal, every $S$-block of $G$ is chordal, and also every $S$-block of $H$ is chordal.

	\begin{claim}\label{claim:inducedpath2}
	If $uvw$ is an induced path in $F$,
	then $L(u)\neq L(w)$ and $\mu(u)\mu(v)\mu(w)$ is an induced path in $Q$.
	\end{claim}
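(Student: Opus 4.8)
The plan is to get the adjacency data for free and then reduce everything to a single structural dichotomy. Recall from the paragraph preceding the claim that every edge $xy$ of $F$ lies in some $S$-block of $G$ or of $H$, so $L(x)\neq L(y)$ and the vertices of $Q$ labelled $L(x),L(y)$ are adjacent in $Q$; applied to the edges $uv$ and $vw$ this already gives $\mu(u)\neq\mu(v)$, $\mu(v)\neq\mu(w)$, and $\mu(u)\mu(v),\mu(v)\mu(w)\in E(Q)$. Hence it suffices to prove that $\mu(u)\neq\mu(w)$ and $\mu(u)\mu(w)\notin E(Q)$. I would split into two cases according to whether $u$ and $w$ lie in a common $S$-block of $G$ or of $H$.

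In the first case, let $B$ be such a common $S$-block, say of $G$. Since $uvw$ is induced we have $uw\notin E(F)$; then $B\neq K_2$, for otherwise $V(B)=\{u,w\}$ and, $B$ being a block contained in a block of $(G,S)\oplus(H,S)$ that shares the two vertices $u,w$ with the block $F$, we would get $B\subseteq F$ and $uw\in E(F)$. So $B$ is $2$-connected, hence lies in a unique block of the sum; this block shares the distinct vertices $u,w$ with $F$, so it equals $F$, i.e.\ $B\subseteq F$ and thus $uw\notin E(B)$. Because $(G,S)$ is block-wise partially label-isomorphic to $Q$, the restriction of $\mu$ to $V(B)$ is a label-preserving isomorphism from $B$ onto the induced subgraph of $Q$ on the vertices whose labels occur on $B$. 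In particular $\mu$ is injective on $V(B)$, so $\mu(u)\neq\mu(w)$; and $\mu(u)\mu(w)\in E(Q)$ would force $uw\in E(B)$, a contradiction. This case is immediate.

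In the second case $u,w$ lie in no common $S$-block of $G$ or $H$. Here the hypotheses of Lemma~\ref{lem:inducedpathsblock} hold — $(G,S),(H,S)$ are compatible, every $S$-block of $G$ or $H$ is chordal (being partially label-isomorphic to the chordal graph $Q$), and $\Part(G,S)\oplus\Part(H,S)$ has no cycles — so it yields $v\in S$ and an induced path $q_1q_2\cdots q_\ell$ from $u=q_1$ to $w=q_\ell$ in $F-v$ with every $q_i$ adjacent to $v$; since $uw\notin E(F)$ we have $\ell\geq 3$. The heart of the argument is to show that $\mu(q_1)\mu(q_2)\cdots\mu(q_\ell)$ is an induced path of $Q$, for then its endpoints $\mu(u),\mu(w)$ are distinct and non-adjacent in $Q$ as required. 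To do this I would use that $q_1\cdots q_\ell$ crosses between $G$ and $H$ only at vertices of $S$ (there are no edges of the sum between $V(G)\setminus S$ and $V(H)\setminus S$), that each maximal sub-path lying entirely inside $G$ (resp.\ $H$) sits in a single chordal $S$-block and hence — as in the first case — has image an induced sub-path of $Q$ with pairwise distinct labels, and that at each crossing vertex $q_i\in S$ the component of $G[S]$ through $q_i$ is non-trivial, so block-wise $Q$-compatibility applies: conditions (2a)–(2b) say exactly that labels of neighbours of that block on the $G$-side are distinct from, and non-adjacent in $Q$ to, those on the $H$-side. Combined with chordality of $Q$, which forbids a chordless cycle among the $\mu(q_i)$ and $\mu(v)$, this should force the $\mu(q_i)$ to be distinct and to induce a path. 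Once Claim~\ref{claim:inducedpath2} holds for all induced paths of length at most two in $F$, applying Lemma~\ref{lem:labelisomorphic} to $F$, $Q$ and $\mu$ completes the proof of Proposition~\ref{prop:sumofchordalgraphs2}.

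I expect the main obstacle to be exactly this second case, specifically making ``the $\mu(q_i)$ induce a path'' rigorous: one must track the alternation of $q_1\cdots q_\ell$ between $G$ and $H$ across the small $S$-separators coming from the components of $D-v$ (in the notation of the proof of Lemma~\ref{lem:inducedpathsblock}), stitch the local label data from the chordal $S$-blocks on each side to the global compatibility conditions (2a)–(2b), and do so without circularly invoking Lemma~\ref{lem:labelisomorphic}. I would most likely carry this out by a well-founded induction (for instance on $\abs{V(F)}$, arguing by a minimal counterexample), so that the auxiliary induced sub-paths of the form $q_iq_{i+1}q_{i+2}$ and $q_iv\,q_j$ that arise along the way are disposed of either by the induction hypothesis or by the first case. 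Everything else is routine.
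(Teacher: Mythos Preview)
Your approach matches the paper's: the same two-case split on whether $u,w$ share an $S$-block of $G$ or $H$, the same appeal to Lemma~\ref{lem:inducedpathsblock} in the second case, and the same endgame via chordality of $Q$. Your first case is actually a bit cleaner than the paper's: you only use that $u,w$ lie in the common $S$-block $B\subseteq F$ and that $B$ is partially label-isomorphic to $Q$, whereas the paper spends a case analysis (on whether $v\in V(G)$ or $v\in V(H)\setminus S$) to put all three of $u,v,w$ into a single $S$-block.

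For the second case, however, the paper's argument is more direct than your sketch, and the induction on $\abs{V(F)}$ you propose is both unnecessary and ill-posed ($F$ is fixed). The paper simply checks each consecutive triple $q_iq_{i+1}q_{i+2}$. If all three lie in $G$ (or all in $H$), then together with $v$ they form a $2$-connected subgraph inside a single $S$-block, and partial label-isomorphism finishes. Otherwise one of $q_i,q_{i+2}$ is in $V(G)\setminus S$ and the other in $V(H)\setminus S$, which forces $q_{i+1}\in S$; the key observation is then that the edge $q_{i+1}v$ lies in $G[S]$, so the non-trivial block $B$ of $G[S]$ containing it is precisely where conditions (2a)--(2b) of block-wise $Q$-compatibility bite, with $q_i\in N_{B_1}(V(B))\setminus S$ and $q_{i+2}\in N_{B_2}(V(B))\setminus S$. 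This yields $L(q_i)\neq L(q_{i+2})$ and $\mu(q_i)\mu(q_{i+2})\notin E(Q)$ directly, with no recursion. Once every consecutive triple maps to an induced path, a minimal-chord argument in $Q$ (pick $i_1<i_2$ with $\mu(q_{i_1})\mu(q_{i_2})\in E(Q)$ and $i_2-i_1$ minimal, obtain a chordless cycle) shows $\mu(q_1)\cdots\mu(q_\ell)$ is induced. Your ingredients are right; organising them triple-by-triple around the edge $q_{i+1}v$ is what makes the argument routine rather than the obstacle you anticipate.
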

	\begin{clproof}
	First assume that 
	$u$ and $w$ are contained in an $S$-block of $G$ or $H$.
	We further assume that they are contained in an $S$-block of $G$, say $B_{uw}$.
	The symmetric argument holds when they are contained in an $S$-block of $H$.
	We claim that there is an $S$-block of $G$ or $H$ containing all of $u,v,w$.
	We divide into two cases.
	\begin{itemize}
		\item (Case 1. $v\in V(G)$.) 
		If $B_{uw}$ contains $v$, then we are done, so we may assume that $v\notin V(B_{uw})$.
		Let $P_{uw}$ be a path from $u$ to $w$ in $B_{uw}$. Note that 
		$P_{uw}$ and $v$ form a cycle of $G$.
		But this implies that $v$ is contained in $B_{uw}$; a contradiction. This proves the claim.
		
		\item (Case 2. $v\in V(H)\setminus S$.)
		In this case, $u$ and $w$ are contained in $S$. If $u$ and $w$ are contained in distinct connected components of $G[S]$, 
		then $\Part (G,S)\oplus \Part (H,S)$ contains a cycle of length $4$, because
		$u, w$ are contained in a connected component of each of $G$ and $H$.
		So, $u$ and $w$ are contained in the same connected component of $G[S]$.
		Let $P_{uw}$ be a path from $u$ to $w$ in $G[S]$.
		Then $P_{uw}$ and $v$ form a cycle in $H$, which implies that $u,v,w$ are contained in the same $S$-block of $H$.
	\end{itemize}
	Then, by the definition of partially label-isomorphic graphs, 
	$G[\{u,v,w\}]$ or $H[\{u,v,w\}]$ is isomorphic to $Q[\{ \mu(u), \mu(v), \mu(w)\}]$.
	This means that $\mu(u)\mu(v)\mu(w)$ is an induced path in $Q$ and the labels of $\mu(u)$ and $\mu(w)$ are distinct.

	Now, 
	we assume that $u$ and $w$ are not contained in the same $S$-block of $G$ or $H$.
	Recall that $\Part (G,S)\oplus \Part (H,S)$ contains no cycles, by the assumption.
	So, by Lemma~\ref{lem:inducedpathsblock}, $v\in S$ 
	and there is an induced path $q_1q_2 \cdots q_{\ell}$ from $u=q_1$ to $w=q_{\ell}$ in $F-v$
	such that each $q_i$ is a neighbor of $v$.

	We show that for each $i\in \{1, \ldots, \ell-2\}$, $L(q_i), L(q_{i+1}),  L(q_{i+2})$ are pairwise distinct, and $\mu(q_i)\mu(q_{i+1})\mu(q_{i+2})$ is an induced path of $Q$. 
	Let $i\in \{1, \ldots, \ell-2\}$.
	If all of $q_i, q_{i+1}, q_{i+2}$ are contained in $G$ or $H$, then they are contained in the same $S$-block with $v$, 
	and the claim follows.
	Thus, we may assume that one of $q_i$ and $q_{i+2}$ is contained in $G-S$, and the other one is contained in $H-S$.
	Then the $S$-block containing $q_i, q_{i+1}, v$ and the $S$-block containing $q_{i+1}, q_{i+2}, v$ share the edge $q_{i+1}v$.
	Since $(G,S)$ and $(H,S)$ are block-wise $Q$-compatible, 
	$L(q_i)\neq L(q_{i+2})$ and $\mu(q_i)$ is not adjacent to $\mu(q_{i+2})$ in $Q$. 

	We verify that $\mu(q_1)\mu(q_2)\cdots \mu(q_{\ell})$ is an induced path of $Q$.
	Suppose this is false, and choose $i_1, i_2\in \{1, 2, \ldots, \ell\}$ with $i_2-i_1>1$ and minimum $i_2-i_1$ 
	such that 
	$\mu(q_{i_1})$ is adjacent to $\mu(q_{i_2})$ in $Q$.
	By minimality, $\mu(q_{i_1})\cdots \mu(q_{i_2-1})$ and $\mu(q_{i_1+1})\cdots \mu(q_{i_2})$ are induced paths and have length at least $2$.
	Thus $\mu(q_{i_1})\cdots \mu(q_{i_2})$ is an induced cycle of length at least $4$, contradicting the assumption that $Q$ is chordal.
	Therefore, $\mu(q_1)\mu(q_2)\cdots \mu(q_{\ell})$ is an induced path of $Q$, and,
	in particular, $L(u)\neq L(w)$ and $\mu(u)$ and $\mu(w)$ are not adjacent in $Q$, as required.
	\end{clproof}	

	By Claim~\ref{claim:inducedpath2} and Lemma~\ref{lem:labelisomorphic},
	we conclude that $F$ is partially label-isomorphic to $Q$.
	\end{proof}

	Later, we will consider some information on non-trivial blocks of $G[S]$, 
	where two blocks in $G[S]$ contained in the same $S$-block of $G$ or $H$ have the same information.
	In Lemma~\ref{lem:gvlauesblock}, 
	we analyze when this property is preserved after taking the sum of $(G,S)$ and $(H,S)$.

\begin{lemma}\label{lem:gvlauesblock}
	Let $A$ be a set. 
	Let $(G, S)$ and $(H, S)$ be two compatible $d$-labeled graphs,
	$\cB$ be the set of non-trivial blocks in $G[S]$, and
	$g:\cB\rightarrow A$ be a function
	 such that
	\begin{itemize}
	\item each $S$-block of $G$ or $H$ is chordal, 
	\item $\Part (G,S)\oplus \Part(H,S)$ has no cycles, and
	\item  for every $B_1, B_2\in \cB$ where $B_1$ and $B_2$ are contained in an $S$-block of $G$ or $H$, $g(B_1)=g(B_2)$.
	\end{itemize}
	If $F$ is an $S$-block of $(G,S)\oplus (H,S)$ and $B_1, B_2\in \cB$ where $V(B_1), V(B_2)\subseteq V(F)$, then 
	$g(B_1)=g(B_2)$.
\end{lemma}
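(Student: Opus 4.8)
The natural approach is to reduce the statement to the observation, already isolated in the proof of Proposition~\ref{prop:sumofchordalgraphs2}, that whenever $B_1,B_2$ are both contained in an $S$-block $F$ of $(G,S)\oplus(H,S)$, we can ``walk'' from $B_1$ to $B_2$ through a sequence of non-trivial blocks of $G[S]$, consecutive members of which share an $S$-block of $G$ or of $H$. If such a walk exists, then the third hypothesis on $g$ propagates equality along the walk and gives $g(B_1)=g(B_2)$. So the real content is to produce this walk.

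The key step is to set up the right auxiliary graph. Define a graph $\Gamma$ whose vertex set is $\cB$ (the non-trivial blocks of $G[S]$), with $B$ and $B'$ adjacent exactly when they lie in a common $S$-block of $G$ or in a common $S$-block of $H$. The hypothesis on $g$ says precisely that $g$ is constant on each connected component of $\Gamma$. Thus it suffices to show: if $V(B_1),V(B_2)\subseteq V(F)$ for an $S$-block $F$ of the sum, then $B_1$ and $B_2$ lie in the same component of $\Gamma$. I would prove this by picking an edge $e_1\in E(B_1)$ and an edge $e_2\in E(B_2)$; since $F$ is an $S$-block it is $2$-connected (we may assume $|V(F)|\ge 3$, the small case being trivial as in the other proofs), so there is a cycle $C$ in $F$ through $e_1$ and $e_2$. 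Now traverse $C$: by Lemma~\ref{lem:edgesblock}, every edge of $C$ lies in some $S$-block of $G$ or $H$, so $C$ decomposes into maximal arcs, each living inside a single $S$-block of $G$ or $H$. Each such $S$-block, being an $S$-block, contains an edge of $G[S]$, hence contains a non-trivial block of $G[S]$; and because the sum has $\Part(G,S)\oplus\Part(H,S)$ acyclic, the way these arcs are glued along the cycle forces consecutive $S$-blocks along $C$ to share a vertex of $S$, in fact to share a non-trivial block of $G[S]$ (two arcs of $C$ meeting at a vertex $x$ with a sub-path of $C$ continuing in both must have $x\in S$, and the portion of $C$ inside $G[S]$ between them is a non-trivial block). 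Reading off these shared non-trivial blocks in order yields a walk in $\Gamma$ from a block adjacent (in $\Gamma$) to $B_1$ to one adjacent to $B_2$, and since $e_1\in E(B_1)$ forces $B_1$ itself to be the non-trivial block of $G[S]$ inside the first $S$-block, and similarly for $B_2$, we get $B_1$ and $B_2$ in the same component of $\Gamma$.

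The step I expect to be the main obstacle is the bookkeeping in the previous sentence: showing cleanly that, along the cycle $C$ in $F$, the transition between one arc (lying in an $S$-block of $G$, say) and the next (lying in an $S$-block of $H$, say) must occur at a vertex of $S$, and that the stretch of $C$ lying inside $G[S]$ there is genuinely a non-trivial block of $G[S]$ rather than several blocks. The acyclicity of $\Part(G,S)\oplus\Part(H,S)$ is what rules out the degenerate configurations, exactly as it was used in Lemmas~\ref{lem:edgesblock} and~\ref{lem:inducedpathsblock}; indeed, one could phrase the argument as an induction on the number of ``$G$-to-$H$ switches'' along $C$, peeling off one arc at a time and invoking the chordality of the relevant $S$-block together with Lemma~\ref{lem:inducedpathsblock} to relocate endpoints onto $S$. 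Once that structural claim is in place, the conclusion $g(B_1)=g(B_2)$ is immediate from the hypothesis that $g$ is constant on $\Gamma$-components, so the proof is short modulo this combinatorial core.
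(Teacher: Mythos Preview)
Your high-level strategy---propagate $g$ along a chain of $S$-blocks inside $F$, using Lemma~\ref{lem:inducedpathsblock} at the transitions---is correct and matches the paper's. The paper, however, organizes the propagation more cleanly by working at the edge level rather than the arc level. It defines $g':E(F)\to A$ by $g'(vw)=g(B)$ for any $B\in\cB$ contained in the $S$-block of $G$ or $H$ holding $vw$ (well-defined by Lemma~\ref{lem:edgesblock} and the third hypothesis), and then shows $g'$ is constant on $E(F)$ by checking \emph{incident} pairs $e=uv$, $f=vw$. When $u,w$ are not in a common $S$-block, $u$ is not adjacent to $w$, and Lemma~\ref{lem:inducedpathsblock} supplies a fan $q_1\cdots q_\ell$ of neighbours of $v\in S$ from $u$ to $w$; at each $G$-to-$H$ switch along the fan one has $q_{i+1}\in S$, so the edge $q_{i+1}v$ lies in $G[S]$ and the non-trivial block of $G[S]$ containing it sits in both the $S$-block of $G$ through $q_i,q_{i+1},v$ and the $S$-block of $H$ through $q_{i+1},q_{i+2},v$---this is exactly the shared non-trivial block you were looking for. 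Once $g'$ is locally constant, connectivity of $F$ gives $g(B_1)=g'(e_1)=g'(e_2)=g(B_2)$ without ever fixing a particular cycle.

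Your arc decomposition can be pushed through, but the claim that consecutive arcs along $C$ ``share a non-trivial block of $G[S]$'' is not literally true when the transition is at a single vertex $v\in S$ with one $C$-neighbour in $G\setminus S$ and the other in $H\setminus S$: there is no stretch of $C$ inside $G[S]$ there, only the vertex $v$. You would have to invoke Lemma~\ref{lem:inducedpathsblock} at that point anyway to manufacture the bridge (via the fan and an edge $q_{i+1}v\in E(G[S])$), which is precisely the paper's incident-edge computation. So the edge-function framing is not a different argument so much as the natural packaging that eliminates the cycle-and-arc bookkeeping you correctly anticipated as the main obstacle.
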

	\begin{proof}
	By Lemma~\ref{lem:edgesblock}, every edge of $F$ is contained in an $S$-block of $G$ or $H$.
	We define a function $g':E(F)\rightarrow A$ such that for each $vw\in E(F)$, 
	$g'(vw)=g(B)$ where $B\in \cB$ and $B$ is contained in the $S$-block of $G$ or $H$ containing $v$ and $w$.
	We claim that $g'(e)=g'(f)$ for all $e,f\in E(F)$.

	\begin{claim}\label{claim:propagate}
	$g'(e)=g'(f)$ for all $e,f\in E(F)$.
	\end{claim}
	\begin{clproof}
	Suppose towards a contradiction that there are $e,f\in E(F)$ such that $e$ and $f$ share a vertex and $g'(e)\neq g'(f)$.
	Let $e=uv$ and $f=vw$. Then $u,v,w$ are not contained in the same $S$-block of $G$ or $H$ as $g'(e)\neq g'(f)$.
	Also, this implies that $u$ is not adjacent to $w$.
	Thus by Lemma~\ref{lem:inducedpathsblock}, $v\in S$, 
	and there is an induced path $q_1q_2 \cdots q_{\ell}$ from $u=q_1$ to $w=q_{\ell}$ in $F-v$
	such that each $q_i$ is a neighbor of $v$.
	
	As $q_1, q_2, v$ are contained in the same $S$-block of $G$ or $H$, we observe that $g'(q_1q_2)=g'(q_1v)=g'(uv)$.
	Similarly, we have $g'(q_{\ell-1}q_{\ell})=g'(q_{\ell}w)=g'(vw)$.

	We claim that for each $i\in \{1, \ldots, \ell-2\}$, 
	$g'(q_iq_{i+1})=g'(q_{i+1}q_{i+2})$. 
	Let $i\in \{1, \ldots, \ell-2\}$.
	If $\{q_i, q_{i+1}, q_{i+2}\}\subseteq V(G)$ or $\{q_i, q_{i+1}, q_{i+2}\}\subseteq V(H)$, 
	then $q_i, q_{i+1}, q_{i+2}$ are contained in the same $S$-block with $v$, 
	and the claim follows.
	We may assume that one of $q_i$ and $q_{i+2}$ is contained in $G-S$ and the other one is contained in $H-S$.
	In this case, the $S$-block containing $q_i, q_{i+1}, v$ and the $S$-block containing $q_{i+1}, q_{i+2}, v$ share the edge $q_{i+1}v$, and we have $g'(q_iq_{i+1})=g'(q_{i+1}v)=g'(q_{i+1}q_{i+2})$.
	Therefore, $g'(uv)=g'(q_1q_2)=g'(q_{\ell-1}q_{\ell})=g'(vw)$, which is a contradiction.

	We conclude that $g'(e)=g'(f)$ for all $e,f\in E(F)$, as required.
	\end{clproof}
	
	Now, for each $i\in \{1,2\}$, we choose an edge $u_iv_i$ in $B_i$.
	By Claim~\ref{claim:propagate}, we have $g(B_1)=g'(u_1v_1)=g'(u_2v_2)=g(B_2)$.
		\end{proof}

	We also need the following lemma.
	\begin{lemma}\label{lem:nocycle}
	Let $(G, S)$ and $(H, S)$ be two compatible $d$-labeled graphs such that
	$\Part (G,S)\oplus \Part (H,S)$ has no cycles.
	If $F$ is an $S$-block of $(G,S)\oplus (H,S)$,
	then $\Part (F\cap G,S\cap V(F))\oplus \Part (F\cap H,S\cap V(F))$ has no cycles.
	\end{lemma}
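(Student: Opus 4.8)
The plan is to compare the two auxiliary bipartite graphs through a natural vertex map, show this map is an \emph{injective} graph homomorphism, and then transport a hypothetical cycle from the smaller graph to the larger one. Write $M := (G,S)\oplus(H,S)$ and $S_F := S\cap V(F)$. First I would record the routine point that a block is an induced subgraph, so $F = M[V(F)]$; in particular every edge of $F$ lying inside $S$ is an edge of $M[S] = G[S] = H[S]$, which gives $(F\cap G)[S_F] = F[S_F] = (F\cap H)[S_F]$. Combined with $V(G)\setminus S$ and $V(H)\setminus S$ being disjoint (compatibility of the original pair) and the labels agreeing on $S_F\subseteq S$, this shows $(F\cap G, S_F)$ and $(F\cap H, S_F)$ are compatible $d$-labeled graphs, so $\Part(F\cap G, S_F)\oplus\Part(F\cap H, S_F)$ is well-defined. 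The case $\abs{V(F)}\le 2$ is trivial: an $S$-block on at most two vertices is a single edge of $G[S]$, so $F\cap G = F\cap H = F = F[S_F]$ and the sum is a path on three vertices. Hence I may assume $\abs{V(F)}\ge 3$, so $F$ is $2$-connected.

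Next I would define $\phi$ on the vertex set of $\Part(F\cap G, S_F)\oplus\Part(F\cap H, S_F)$: a component of $F\cap G$ is sent to the component of $G$ containing it, a component of $F\cap H$ to the component of $H$ containing it, and a component $D$ of $F[S_F]$ to the component of $G[S]$ containing it. That $\phi$ is a homomorphism is quick: if $D\subseteq C$ with $C$ a component of $F\cap G$, then $D\subseteq C\subseteq\phi(C)$, and $\phi(D)$ is a connected subgraph of $G$ meeting $\phi(C)$, so $\phi(D)\subseteq\phi(C)$, which is exactly the adjacency relation in $\Part(G,S)\oplus\Part(H,S)$; the $H$-side is symmetric. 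The core of the proof is that $\phi$ is injective. Two distinct components of $F\cap G$ have the same image only if they lie in a common component of $G$; a $G$-path joining them would then have both endpoints in $V(F)$, and — invoking the key fact that any path of $M$ with both endpoints in the block $F$ stays inside $F$ — this path lies in $F$, hence in $F\cap G$, a contradiction. The same argument applies to $F\cap H$, and to components of $F[S_F]$ versus components of $G[S]$ (a $G[S]$-path between two of them, being an $F$-path, lies in $F[S_F]$). Finally, components of $F\cap G$, of $F\cap H$, and of $F[S_F]$ land in the three pairwise disjoint sets ``components of $G$'', ``components of $H$'', ``components of $G[S]$'' of the target, so no collisions occur across types; thus $\phi$ is injective.

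I expect the one real obstacle to be establishing the key fact that a path of $M$ between two vertices of $F$ is contained in $F$. For this I would argue by contradiction: if a path $P$ of $M$ with both endpoints in $V(F)$ uses a vertex outside $V(F)$, take the last vertex $x$ of $V(F)$ on $P$ before such a detour and the first vertex $y$ of $V(F)$ after it; then $x\neq y$, the portion of $P$ between $x$ and $y$ has no internal vertex in $V(F)$, and completing it with a path from $x$ to $y$ inside the $2$-connected graph $F$ produces a cycle containing an edge of $F$ and an edge not in $F$, contradicting the fact that two edges on a common cycle lie in the same block. With injectivity of $\phi$ in hand the proof finishes immediately: a cycle $Z$ in $\Part(F\cap G, S_F)\oplus\Part(F\cap H, S_F)$ would be carried by $\phi$ onto a subgraph of $\Part(G,S)\oplus\Part(H,S)$ that is again a cycle — its vertices are pairwise distinct since $\phi$ is injective, and consecutive images are adjacent since $\phi$ is a homomorphism — contradicting the hypothesis that $\Part(G,S)\oplus\Part(H,S)$ has no cycles. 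Therefore $\Part(F\cap G, S_F)\oplus\Part(F\cap H, S_F)$ has no cycles.
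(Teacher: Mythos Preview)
Your argument is correct and is genuinely cleaner than the paper's own proof. The paper proceeds by contradiction and a two-case analysis on a hypothetical cycle $C_1-F_1-\cdots-C_m-F_m-C_1$ in $\Part(F\cap G,S_F)\oplus\Part(F\cap H,S_F)$: first, if two of the $C_i$ lie in the same component of $G[S]$, it uses the cycle structure together with the acyclicity of $\Part(G,S)\oplus\Part(H,S)$ to force $j=i+1$ and then build a cycle through $F$ showing $C_i$ and $C_{i+1}$ were actually the same component of $F[S_F]$; second, if all the $C_i$ lie in distinct components of $G[S]$, it argues that the $C_i$ all sit in a single component of $G$ or $H$, hence in a single $F_i$, contradicting that the sequence was a cycle. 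By contrast, you bypass the case analysis entirely by proving outright that the natural containment map $\phi$ is an injective graph homomorphism --- the key step being the block fact that any $M$-path between two vertices of $F$ stays inside $F$ --- and then simply push the cycle forward.

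What your approach buys is a reusable structural statement (injectivity of $\phi$) rather than an ad hoc contradiction; it also makes transparent why acyclicity descends to $S$-blocks, since the local auxiliary graph embeds in the global one. What the paper's approach buys is that it never explicitly invokes the ``paths between block vertices stay in the block'' fact, instead deriving the needed consequences from the cycle at hand; this keeps the argument self-contained with respect to block-tree folklore, at the cost of a longer and more delicate case analysis. Your proof of the key block fact is fine as written (the cycle you build shares two vertices with $F$, so the block containing it coincides with $F$, yet contains a vertex outside $V(F)$).
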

	\begin{proof}
	Let $S_F:=S\cap V(F)$.
	Suppose towards a contradiction that $\Part (F\cap G,S_F)\oplus \Part (F\cap H,S_F)$ has a cycle $C_1-F_1-\cdots -C_m-F_m-C_1$, 
	where $C_1, \ldots, C_m$ are components of $F[S_F]$.
	
	First assume that there are two distinct components $C_i, C_j\in \{C_1, \ldots, C_m\}$ contained in the same component of $G[S]$.
	We choose such components $C_i, C_j$ such that 
	the distance between $C_i$ and $C_j$ in the cycle  $C_1-F_1-\cdots -C_m-F_m-C_1$ is minimum.
	By relabeling if necessary, we may assume that $i<j$ and in the sequence $C_i, C_{i+1}, \ldots, C_j$, there are no two components 
	contained in the same component of $G[S]$ except the pair $(C_i, C_j)$.

	We claim that all of $C_i, F_i, C_{i+1}, F_{i+1}, \ldots, C_j$ are contained in the same component of $G$ or $H$.
	Without loss of generality, we assume that $F_i$ is contained in $G$.

	Note that $\Part (G,S)\oplus \Part (H,S)$ has no cycles. 
	So if there is $C_{i_1}$ for some $i<i_1\le j$ where 
	$C_{i_1}$ and $C_i$ are not contained in the same component of $G$ or $H$, 
	then there exists $i_1<i_2\le j$ where $C_{i_2}$ and $C_{i_1}$ are contained in the same connected component of $G[S]$.
	But this contradicts the assumption that $C_i$ and $C_j$ are contained in the same connected component of $G[S]$ where the distance 
	between $C_i$ and $C_j$ in the cycle $C_1-F_1-\cdots -C_m-F_m-C_1$ is minimum.
	Also, if $F_{i'}$ is contained in $H$ for some $i<i'$, then there exists $i'<i''$ such that 
	$C_{i'}$ and $C_{i''}$ are contained in the same connected component of $G[S]$; a contradiction.
	Therefore, all of $C_i, F_i, C_{i+1}, F_{i+1}, \ldots, C_j$ are contained in the same component of $G$.

    This implies that $j=i+1$; because all these subgraphs are connected to each other in $F\cap G$.
	Let $P$ be a path from $V(C_i)$ to $V(C_{i+1})$ in $F_i$ with endpoints $x$ and $y$, 
	and $Q$ be a path from $x$ to $y$ in $G[S]$.
	Then $P\cup Q$ is a cycle containing $x$ and $y$, and the existence of this cycle implies that $V(P)\cup V(Q)\subseteq V(F)$, 
	as $F$ is a block of $(G,S)\oplus (H,S)$.
	But this implies that $C_i$ and $C_{i+1}$ are contained in the same connected component of $F[S_F]$; a contradiction.
	We conclude that there are no two distinct components $C_i$ and $C_j$ contained in the same component of $G[S]$.	

	We observe that all of $C_1, \ldots, C_m$ are contained in the same component of $G$ or $H$ since 
	there are no two distinct components $C_i$ and $C_j$ contained in the same component of $G[S]$.
	This implies that $C_1, \ldots, C_m$ are contained in the same component of $F\cap G$ or $F\cap H$. 
	This contradicts the assumption that 
	$C_1-F_1-\cdots -C_m-F_m-C_1$ is a cycle.
	\end{proof}

	Lastly, we show that when every $S$-block of $(G,S)\oplus (H,S)$ is chordal, 
	 $(G,S)\oplus (H,S)$ is chordal if and only if	$\Part (G,S)\oplus \Part (H,S)$ has no cycles.

\begin{proposition}\label{prop:nocircuit}
	Let $(G,S)$ and $(H,S)$ be two compatible graphs such that 
	every $S$-block of $(G,S)\oplus (H,S)$ is chordal.
	The following are equivalent:
\begin{enumerate}
  \item $(G, S)\oplus (H,S)$ is chordal.\label{propcase2}
  \item $\Part (G,S)\oplus \Part (H,S)$ has no cycles.\label{propcase3}
\end{enumerate}
\end{proposition}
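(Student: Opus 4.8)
\emph{Proof plan.} Write $J := (G,S)\oplus (H,S)$ and $A := \Part(G,S)\oplus\Part(H,S)$; for $s\in S$ let $[s]$ denote the component of $G[S]$ containing $s$. The whole argument rests on two elementary observations. \emph{(O1)} Distinct components of $G[S]$ are non-adjacent in $J$, since every edge of $J$ with both endpoints in $S$ lies in $G[S]=H[S]$ and hence inside a single component of $G[S]$. \emph{(O2)} A shortest path of $G$ (or of $H$) joining two different components of $G[S]$ passes through a vertex outside $S$. I would also use that every block of $G$ and of $H$ is chordal --- this is automatic in our applications, where $G$ and $H$ are graphs all of whose blocks lie in $\cP$, and it is needed for the forward direction below. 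Finally, recall that a graph is chordal if and only if all of its blocks are chordal, so in both directions it suffices to reason block by block.

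\smallskip
\noindent\textbf{$(2)\Rightarrow(1)$.} I would first establish the structural fact that \emph{if $A$ has no cycle, then every block $F$ of $J$ is an $S$-block of $J$, a block of $G$, or a block of $H$}. Assuming $A$ is acyclic, the key point is that no block of $J$ can be ``mixed'', i.e.\ contain both a vertex of $V(G)\setminus S$ and a vertex of $V(H)\setminus S$: a block $F$ meeting $V(G)\setminus S$ but not $V(H)\setminus S$ satisfies $V(F)\subseteq V(G)$ and $E(F)\subseteq E(G)$, hence is a block of $G$ (and symmetrically for $H$), while a block meeting neither has all its edges inside $S$ and is thus an $S$-block of $J$. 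To rule out a mixed block $F$, pick $v\in V(F)\cap(V(G)\setminus S)$ and $w\in V(F)\cap(V(H)\setminus S)$ and a cycle $\Gamma\subseteq F$ through $v$ and $w$; list the vertices of $\Gamma$ lying in $S$ in cyclic order as $t_1,\dots,t_r$ (necessarily $r\ge 2$), and observe that along $\Gamma$ consecutive such vertices $t_j,t_{j+1}$ are joined either by a single edge of $G[S]$ (so $[t_j]=[t_{j+1}]$) or by an arc with interior entirely in $V(G)\setminus S$ or entirely in $V(H)\setminus S$ --- and, since $v,w\in V(\Gamma)$, at least one arc of each of the latter two kinds occurs. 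Projecting $\Gamma$ to $A$ by replacing each $t_j$ with $[t_j]$ and each arc of the second (resp.\ third) kind with the $A$-path through the component of $G$ (resp.\ of $H$) that contains it, we obtain a closed walk in $A$ using both a ``$G$-component'' vertex and an ``$H$-component'' vertex; choosing the block $F$ and the cycle $\Gamma$ with $\abs{V(\Gamma)}$ minimum, any repeated vertex in this closed walk would --- via (O1), (O2), and the $2$-connectivity of $F$ --- produce a strictly shorter mixed cycle, so the closed walk is in fact a cycle of $A$, contradicting the hypothesis. Granting the structural fact, (1) follows at once: the $S$-blocks of $J$ are chordal by hypothesis and the blocks of $G$ and of $H$ are chordal, so every block of $J$ is chordal.

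\smallskip
\noindent\textbf{$(1)\Rightarrow(2)$.} I would prove the contrapositive, converting a cycle of $A$ into a chordless cycle of $J$ of length at least $4$. Write such a cycle of $A$ as $C_1F_1C_2F_2\cdots C_mF_m$ with $m\ge 2$, where the $C_i$ are distinct components of $G[S]$ and the $F_i$ are distinct components of $G$ or of $H$ with $V(C_i)\cup V(C_{i+1})\subseteq V(F_i)$ (indices modulo $m$); note that $F_i$ and $F_{i+1}$ cannot both be components of $G$, nor both of $H$, as $C_{i+1}$ lies in a unique component of $G$ (of $H$). For each $i$ pick $c_i\in V(C_i)$ and a shortest $c_i$--$c_{i+1}$ path $P_i$ inside $F_i$; by (O2) every $P_i$ has an interior vertex outside $S$, so the closed walk $W:=P_1P_2\cdots P_m$ of $J$ meets at least two distinct components of $G[S]$ and contains vertices of both $V(G)\setminus S$ and $V(H)\setminus S$. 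Among all choices of a cycle of $A$ and then of base points $c_i$ and connectors $P_i$, take $W$ with $\abs{E(W)}$ minimum. Then $W$ has no repeated vertex (a repeat would, using distinctness of the $C_i$ and the $F_i$ together with (O1), give a strictly shorter admissible $W$, lying over a cycle of $A$ of length at least $4$), and $W$ has no chord (a chord inside some $F_i$ would shorten $P_i$; a chord between $P_i$ and $P_j$ with $i\neq j$ would give a strictly shorter cycle of $A$). Thus $W$ is a chordless cycle of $J$, and by (O1) it has length at least $4$; hence $J$ is not chordal.

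\smallskip
\noindent\textbf{Main obstacle.} In both directions the substantive step is the same: passing from a purely combinatorial witness (a cycle of $A$, resp.\ a ``mixed'' block of $J$) to an \emph{honest} chordless cycle of $J$ of length at least $4$, i.e.\ ruling out every way in which a length-minimal closed walk can degenerate --- a repeated vertex, or a chord landing inside one of the constituent paths or bridging two of them. Setting up the right quantity to minimize (the whole tuple consisting of the $A$-cycle, the base points, and the connectors, respectively the block together with a realising cycle) so that each conceivable shortcut strictly decreases it or yields an immediate contradiction is the delicate bookkeeping; observations (O1) and (O2), together with the chordality of the $S$-blocks of $J$, are exactly what make the individual cases close.
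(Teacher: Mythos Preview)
Your direction $(1)\Rightarrow(2)$ follows the paper's approach closely and is fine in outline.

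The direction $(2)\Rightarrow(1)$ contains a genuine error. Your structural claim --- that when $A$ is acyclic, \emph{no} block of $J$ is mixed --- is simply false. Take $S=\{s_1,s_2\}$ with $s_1s_2$ an edge, let $G$ be the triangle on $\{s_1,s_2,a\}$, and let $H$ be the triangle on $\{s_1,s_2,c\}$. Then $J$ is $K_4$ minus the edge $ac$: a single block, chordal, and mixed (it contains $a\in V(G)\setminus S$ and $c\in V(H)\setminus S$). Here $G[S]$ has a single component, so $A$ is a path of length~$2$ and hence acyclic. Your minimality argument cannot rescue this: the shortest cycle of $J$ through $a$ and $c$ is the $4$-cycle $a,s_1,c,s_2$, and its projection to $A$ is the backtracking walk $[s_1],F_H,[s_1],F_G,[s_1]$ (because $[s_1]=[s_2]$), which contains no cycle. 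So there is no ``strictly shorter mixed cycle'' to produce, and the argument stalls.

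The correct structural fact is not that mixed blocks are absent, but that every mixed block of $J$ is an $S$-block (equivalently, every non-$S$-block lies entirely in $G$ or entirely in $H$); this would suffice, since $S$-blocks are chordal by hypothesis. Your trichotomy does not reach this conclusion. The paper avoids the whole issue: rather than classifying blocks, it starts directly from a chordless cycle $C$ of $J$ and projects $C$ to $A$. The key observation you are missing is that chordality of the $S$-blocks forces $\abs{V(C)\cap V(D)}\le 1$ for every component $D$ of $G[S]$ --- otherwise two vertices of $C$ in $D$ together with a path in $D$ would place all of $C$ inside a single $S$-block of $J$, contradicting chordality of that $S$-block. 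This guarantees that the components $C_1,\dots,C_n$ of $G[S]$ hit by $C$ are pairwise distinct, and since the interposed $A_i$ alternate between $G$-components and $H$-components, the projected closed walk in $A$ genuinely contains a cycle. Starting from an arbitrary cycle $\Gamma$ in a mixed block, as you do, loses exactly this distinctness.
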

\begin{proof}
Let $\cC$ be the set of components of $G[S]$.

\begin{figure}[t]
\centerline{
\includegraphics[scale=0.4]{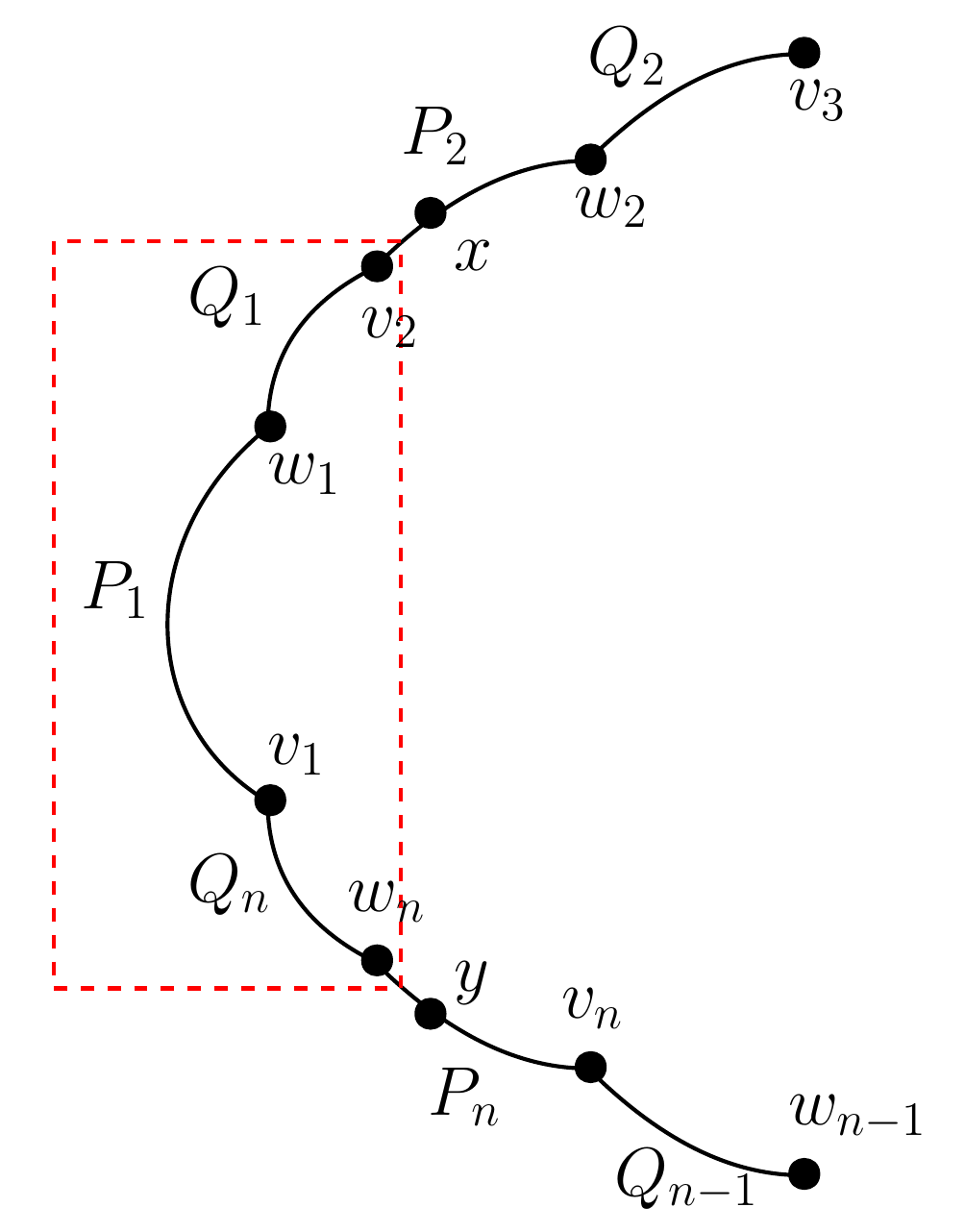}
}
\caption{Finding a chordless cycle in Proposition~\ref{prop:nocircuit}.}
\label{fig:findingcycle}
\end{figure}

\medskip
(\ref{propcase2} $\Rightarrow$ \ref{propcase3}). 
	Suppose that $\Part (G,S)\oplus \Part (H,S)$ has a cycle 
	$C_1-A_1-C_2-A_2-\cdots -C_n-A_n-C_1$
	where 
	$C_1, \ldots, C_n\in \cC$.
	For convenience, let $C_{n+1}:=C_1$ and $A_{n+1}:=A_1$.

	We construct an induced cycle of length at least $4$ in $(G,S)\oplus (H,S)$.
	For each $i\in \{1, \ldots, n\}$, 
	we define that 
	\begin{itemize}
	\item $P_i$ is the shortest path from $C_i$ to $C_{i+1}$ in $A_i$,
	\item $v_i, w_i$ are the end vertices of $P_i$ where $v_i\in V(C_i)$ and $w_i\in V(C_{i+1})$.
	\item $Q_i$ is the shortest path from $w_i$ to $v_{i+1}$ in $C_{i+1}$.
	\end{itemize}
	Note that $n\ge 2$. We consider two cases depending on whether $n=2$ or not.

	Suppose $n=2$. 
	Notice that $A_1$ and $A_2$ may share several components of $G[S]$.
	We choose $C_1,C_2, P_1, P_2, Q_1, Q_2$ such that the cycle $P_1\cup Q_1\cup P_2\cup Q_2$ passes 
	the minimum number of components of $G[S]$.
	This minimality implies that 
	$C_1$ and $C_2$ are the only components of $G[S]$ 
	that contain vertices of both $P_1$ and $P_2$, and there are no edges between the internal vertices of $P_1$ and the internal vertices of $P_2$.
	Therefore, $P_1\cup Q_1\cup P_2\cup Q_2$ contains a chordless cycle.

	Now, assume that $n\ge 3$.	
	In this case, $v_1-P_1-Q_1-P_2-Q_2-\cdots -P_n-Q_n-v_1$ is a cycle in $(G,S)\oplus (H,S)$, but is not necessarily a chordless cycle.
	Call this cycle $C$. We claim that $C$ contains a chordless cycle.
	Let $x$ be the vertex following $v_2$ in $P_2$, and $y$ be the vertex preceding $w_n$ in $P_n$.
	See \cref{fig:findingcycle} for an illustration.
	Take a shortest path $P$ from $x$ to $y$ in the path $y-Q_n-P_1-Q_1-x$. 
	Clearly $P$ has length at least $2$, as $x$ and $y$ are contained in distinct components of $Q$.
	Also, every internal vertex of $P$ has no neighbors in the other path of the cycle $v_1-P_1-Q_1-P_2-Q_2-\cdots -P_n-Q_n-v_1$ 
	between $x$ and $y$.
	So, if we take a shortest path $P'$ from $x$ to $y$ along the other part of the cycle $v_1-P_1-Q_1-P_2-Q_2-\cdots -P_n-Q_n-v_1$, 
	then $P\cup P'$ is a chordless cycle. This proves the claim. 

\medskip
(\ref{propcase3} $\Rightarrow$ \ref{propcase2}).
Suppose, towards a contradiction, that $(G, S)\oplus (H, S)$ contains a chordless cycle $C$.
Since $G$ and $H$ are chordal,
$C$ should contain a vertex of $G-S$ and a vertex of $H-S$. 
By assumption, we know that every $S$-block of $(G, S)\oplus (H, S)$ is chordal.
Thus, $C$ can contain at most one vertex from each $S$-block of $(G, S)\oplus (H,S)$.
Furthermore, we can observe that $\abs{V(C)\cap V(F)}\le 1$ for every component $F$ of $G[S]$;
otherwise one of $S$-blocks of $(G,S)\oplus (H,S)$ should contain all vertices of $C$, contradicting the fact that every $S$-block is chordal.

	Let $C_1-C_2- \cdots- C_n-C_1$ be the sequence of components of $G[S]$ such that
	\begin{enumerate}
	\item for each $v\in V(C)\cap V(C_i)$, one neighbor of $v$ in $C$ is contained in $G-S$ and the other is contained in $H-S$, and
	\item $C$ passes through the components of $G[S]$ in this order.
	\end{enumerate}
	As $C$ contains at least one vertex of $G-S$ and one vertex of $H-S$, 
	such a sequence exists, and $n\ge 2$.
	Without loss of generality, we may assume that the internal vertices in the path from $C_1$ to $C_2$ (corresponding to the first part of the sequence) are contained in $G$.
	Then, the internal vertices in the path from $C_2$ to $C_3$ are contained in $H$, 
	and we use parts of $G-S$ and $H-S$ alternately.
	For each $i$, pick $A_i\in V(\Part(G,S)\oplus \Part(H,S))\setminus \cC$ corresponding to a component of $G$ or $H$ 
	containing the internal vertices of the path from $C_i$ to $C_{i+1}$.	
	Then $C_1-A_1-C_2-A_2-\cdots -C_n-A_n-C_1$ contains a cycle of $\Part (G,S)\oplus \Part (H,S)$.
	\end{proof}

\section{Representative sets for acyclicity}\label{sec:representative}

	In our algorithm, we need to store auxiliary graphs $\Part (G,S)$ for boundaried graphs $(G,S)$.
	Instead of working with $\Part (G,S)$, we work with 
	the partition of the set $\cC$ of components of $G[S]$, where $C_1, C_2\in \cC$ are in the same part 
	if and only if they are contained in the same component of $G$.
	This formulation has the advantage that it is convenient for applying representative-set techniques.
	
	For a set $S$ and a family $\cX$ of subsets of $S$,
	we define $\Inc (S, \cX)$ as the bipartite graph on the bipartition $(S, \cX)$ such that 
	for $v\in S$ and $X\in \cX$ with $v\in X$, $v$ and $X$ are adjacent in $\Inc (S, \cX)$.
	Let $S$ be a set, and $\cA$ be a set of partitions of $S$.
	A subset $\cA'$ of $\cA$ is called a \emph{representative set} if 
	\begin{itemize}
	\item for every $\cX_1\in \cA$ and every partition $\cY$ of $S$ where $\Inc (S, \cX_1\cup \cY)$ has no cycles, 
	there exists a partition $\cX_2\in \cA'$ such that $\Inc (S, \cX_2\cup \cY)$ has no cycles.
	\end{itemize} 
	Computing a representative set for a family of partitions is an essential part of our algorithm.
	To apply the ideas in \cite{BodlaenderCKN2015}, 
	it is necessary to translate our problem to finding a pair of partitions $\cX_1, \cX_2$ where $\Inc (S, \cX_1\cup \cX_2)$ is connected.

For partitions $\cX_1$ and $\cX_2$ of a set~$S$, 
$\cX_1$ is a \emph{coarsening} of $\cX_2$ if every two elements in the same part of $\cX_2$ are in the same part of $\cX_1$.
We denote by $\cX_1\uplus \cX_2$ the common coarsening of $\cX_1$ and $\cX_2$ with the maximum number of parts.
For instance, if $\cX_1=\{\{1\}, \{2,3\}, \{4\}\}$ and $\cX_2=\{\{1,2\}, \{3\}, \{4\} \}$, then 
both $\{\{1,2,3\}, \{4\}\}$ and $\{\{1,2,3,4\}\}$ are common coarsenings of $\cX_1$ and $\cX_2$, 
and $\cX_1\uplus \cX_2=\{\{1,2,3\}, \{4\}\}$.

	\begin{lemma}\label{lem:connectedhypergraph}
	Let $S$ be a set and $\cX_1, \cX_2$ be two partitions of $S$
	such that $\Inc (S, \cX_1\cup \cX_2)$ is connected.	
	Then $\Inc (S, \cX_1\cup \cX_2)$ has no cycles if and only if
	$\abs{\cX_1}+\abs{\cX_2}=\abs{S}+1$.
	\end{lemma}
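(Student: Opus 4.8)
The statement is a standard fact relating acyclicity to an edge count in a connected bipartite incidence graph, so the plan is to treat $\Inc(S,\cX_1\cup\cX_2)$ as a connected graph and count its vertices and edges. The vertex set is $S\cup\cX_1\cup\cX_2$, so it has $\abs{S}+\abs{\cX_1}+\abs{\cX_2}$ vertices (note $\cX_1$ and $\cX_2$ are disjoint as index sets, even if they contain some common parts as subsets of $S$ — this is the one bookkeeping point to state carefully). The crucial observation is that the number of edges is always exactly $2\abs{S}$: every $v\in S$ lies in exactly one part of $\cX_1$ and exactly one part of $\cX_2$, since each $\cX_i$ is a partition of $S$, so $v$ has degree exactly $2$ in $\Inc(S,\cX_1\cup\cX_2)$; summing over $v\in S$ gives $\abs{E}=2\abs{S}$.

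First I would record these two counts. Then I would invoke the elementary fact that a connected graph is acyclic (i.e. a tree) if and only if its number of edges is one less than its number of vertices. Applying this to $\Inc(S,\cX_1\cup\cX_2)$: it has no cycles if and only if $2\abs{S} = \big(\abs{S}+\abs{\cX_1}+\abs{\cX_2}\big)-1$, which rearranges to $\abs{\cX_1}+\abs{\cX_2}=\abs{S}+1$. That is exactly the claimed equivalence.

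There is essentially no obstacle here; the only thing to be mildly careful about is the convention for $\Inc$. In particular one should note that the bipartition classes $\cX_1$ and $\cX_2$ are taken as abstract (labeled) vertex sets, so $\abs{V(\Inc(S,\cX_1\cup\cX_2))}=\abs{S}+\abs{\cX_1}+\abs{\cX_2}$ even when $\cX_1$ and $\cX_2$ share parts; and one should implicitly assume $S\neq\emptyset$ (otherwise the graph is empty and the statement is vacuous or degenerate), which is harmless in the intended application. If one wanted to be fully self-contained, the tree fact itself can be justified in one line: in a connected graph a spanning tree has $\abs{V}-1$ edges, and the graph is acyclic exactly when it coincides with this spanning tree, i.e. has no extra edges.

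Put together, the proof is: compute $\abs{V}=\abs{S}+\abs{\cX_1}+\abs{\cX_2}$ and $\abs{E}=2\abs{S}$ (using that each element of $S$ has degree $2$), then apply ``connected $+$ acyclic $\iff$ $\abs{E}=\abs{V}-1$'' and simplify. I expect the write-up to be only a few lines.
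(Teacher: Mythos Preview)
Your proposal is correct and follows exactly the paper's approach: compute $\abs{V(H)}=\abs{S}+\abs{\cX_1}+\abs{\cX_2}$ and $\abs{E(H)}=2\abs{S}$, then invoke the fact that a connected graph is a tree if and only if $\abs{E}=\abs{V}-1$. The paper's own proof is the same one-liner, with slightly less elaboration on why $\abs{E(H)}=2\abs{S}$.
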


	\begin{proof}
	Let $H:=\Inc (S, \cX_1\cup \cX_2)$.
	The result follows from the fact that $\abs{V(H)}=\abs{S}+\abs{\cX_1}+\abs{\cX_2}$, $\abs{E(H)}=2\abs{S}$, and
	a connected graph $H$ has no cycles if and only if $\abs{E(H)}=\abs{V(H)}-1$. 
	\end{proof}

	For a set $S$ and a partition $\cX$ of $S$, 
	a partition $\cY$ of $S$ is called a \emph{$1$-coarsening} of $\cX$ if 
 	$\cY=\cX\setminus \{X_1, \ldots, X_m\}\cup  \{X_1\cup \cdots \cup X_m\}$ for some $X_1, \ldots, X_m\in \cX$. 
 	Notice that the partition $\cX$ itself is a $1$-coarsening of $\cX$.
	We will use the following observation. For two partitions $\cX_1, \cX_2$ of a set $S$, the following are equivalent:	
	\begin{itemize}
	\item $\Inc (S, \cX_1\cup \cX_2)$ has no cycles. 
	\item There exists a $1$-coarsening $\cX_1'$ of $\cX_1$ such that
	$\Inc (S, \cX_1'\cup \cX_2)$ is connected and has no cycles.
	\end{itemize}
	Such a $1$-coarsening $\cX_1'$ can be obtained by taking one part of $\cX_1$ for each component of  $\Inc (S, \cX_1\cup \cX_2)$
	and unifying them into one part. Since the vertex corresponding to the new part of $\cX_1'$ would be a cut vertex of $\Inc (S, \cX_1'\cup \cX_2)$,
	there will not be an additional cycle in $\Inc (S, \cX_1'\cup \cX_2)$ while it is connected.

\begin{theorem}[\cite{BodlaenderCKN2015}; See also Theorem 11.11 in \cite{Cygan15}]\label{thm:representativeset}
Given two families of partitions $\mathcal{A}$, $\mathcal{B}$ of a set $S$, 
one can, in time $\mathcal{A}^{\mathcal{O}(1)} 2^{\mathcal{O}(\abs{S})}$, find a set $\mathcal{A}'\subseteq \mathcal{A}$ of size at most $2^{\abs{S}-1}$ such that
for every $\cX_1\in \mathcal{A}$ and every $\cY\in \cB$ such that $\Inc (S, \cX_1\cup  \cY)$ is connected,
there exists $\cX_2\in \mathcal{A}'$ such that $\Inc (S, \cX_2\cup \cY)$ is connected. 
\end{theorem}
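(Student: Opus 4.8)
The plan is to follow the rank-based approach of Bodlaender et al. The first step is to reformulate the connectivity condition in the language of the partition lattice. Since $\cX_1$ and $\cY$ are \emph{partitions} of $S$, in the bipartite graph $\Inc(S,\cX_1\cup\cY)$ every element of $S$ has exactly one neighbour among the parts of $\cX_1$ and exactly one among the parts of $\cY$; consequently the connected components of $\Inc(S,\cX_1\cup\cY)$ are in bijection with the blocks of the common coarsening $\cX_1\uplus\cY$, with each part of $\cX_1\cup\cY$ lying in the component of the elements it contains. Hence $\Inc(S,\cX_1\cup\cY)$ is connected if and only if $\cX_1\uplus\cY=\{S\}$. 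So it suffices to produce a subfamily $\cA'\subseteq\cA$ of size at most $2^{\abs{S}-1}$ such that whenever $\cX_1\uplus\cY=\{S\}$ for some $\cX_1\in\cA$, there is $\cX_2\in\cA'$ with $\cX_2\uplus\cY=\{S\}$; note that $\cB$ plays no real role and that we in fact prove the conclusion for \emph{all} partitions $\cY$ of $S$.

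Next I would set up the linear algebra over $\mathrm{GF}(2)$. Let $M$ be the matrix whose rows and columns are indexed by all partitions of $S$, with $M[p,q]=1$ exactly when $p\uplus q=\{S\}$. The heart of the argument is the bound $\operatorname{rank}_{\mathrm{GF}(2)}(M)\le 2^{\abs{S}-1}$, which I would establish via the standard cut-based factorization: fix $s^\ast\in S$, and for a partition $p$ and a subset $X\subseteq S$ with $s^\ast\in X$ let $A[p,X]=1$ iff $X$ is a union of blocks of $p$. One then verifies, by Möbius inversion / inclusion--exclusion over the subset lattice, that $M$ factors through the $2^{\abs{S}-1}$-dimensional space indexed by the subsets of $S$ containing $s^\ast$, i.e. $M=A\cdot B$ where the columns of $A$ (and the rows of $B$) are indexed by that space; since the inner dimension is $2^{\abs{S}-1}$, the rank bound follows. \textbf{This factorization is the main obstacle:} getting the $\mathrm{GF}(2)$ identity exactly right --- checking that the parity of the relevant count of ``common-coarsening cuts'' equals $1$ precisely when $p\uplus q=\{S\}$ --- is the only genuinely delicate point; everything else is bookkeeping.

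Given the rank bound, the algorithm and its correctness are routine. Rather than materialising the (super-exponentially large) matrix $M$, I would build the explicit factor restricted to the relevant rows: the $\abs{\cA}\times 2^{\abs{S}-1}$ matrix $A'$ whose $p$-th row is the image of $p$ in the $2^{\abs{S}-1}$-dimensional space, so that for every $q$ the $q$-th column of $M$ restricted to $\cA$ equals $A'$ times a fixed vector. Compute, by Gaussian elimination over $\mathrm{GF}(2)$, a maximal linearly independent set of rows of $A'$ and let $\cA'$ be the corresponding set of partitions; then $\abs{\cA'}\le 2^{\abs{S}-1}$, and every row of $A'$ --- hence, applying the fixed linear map $B$, every row of $M$ indexed by $\cA$ --- is a $\mathrm{GF}(2)$-sum of rows indexed by $\cA'$. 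Now take $\cX_1\in\cA$ and $\cY$ with $\Inc(S,\cX_1\cup\cY)$ connected, i.e. $M[\cX_1,\cY]=1$. Writing the $\cX_1$-row of $M$ as a $\mathrm{GF}(2)$-sum of $\cX_2$-rows over some $\cA''\subseteq\cA'$ and reading off the $\cY$-coordinate gives $1=\sum_{\cX_2\in\cA''}M[\cX_2,\cY]$, so $M[\cX_2,\cY]=1$ for some $\cX_2\in\cA''\subseteq\cA'$, which by the first step means $\Inc(S,\cX_2\cup\cY)$ is connected, as required. For the running time: building $A'$ costs $\abs{\cA}\cdot 2^{\mathcal{O}(\abs{S})}$ (each of the $\abs{\cA}\cdot 2^{\abs{S}-1}$ cut-indicator entries is computed in $\mathcal{O}(\abs{S})$ time, followed by the fixed transform on each row), and Gaussian elimination on an $\abs{\cA}\times 2^{\abs{S}-1}$ matrix costs $\abs{\cA}\cdot 2^{\mathcal{O}(\abs{S})}$, for a total of $\abs{\cA}^{\mathcal{O}(1)}2^{\mathcal{O}(\abs{S})}$.
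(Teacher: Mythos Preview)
The paper does not provide its own proof of this statement; it is quoted from Bodlaender et al.\ (and Theorem~11.11 of Cygan et al.) and used as a black box in the subsequent Proposition. Your sketch faithfully reproduces the rank-based argument from those references: the reformulation of connectivity of $\Inc(S,\cX_1\cup\cY)$ as $\cX_1\uplus\cY=\{S\}$, the $\mathrm{GF}(2)$ matrix $M$, the cut factorization bounding its rank by $2^{\abs{S}-1}$, and Gaussian elimination on the explicit factor to extract a row basis. Your observation that $\cB$ is irrelevant is also correct --- the representative set works against \emph{all} partitions $\cY$.

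One small sharpening of the step you flag as the ``main obstacle'': the factorization is simply $M=AA^{\mathsf T}$ over $\mathrm{GF}(2)$, where $A[p,X]=1$ iff $X$ is a union of blocks of $p$ (with columns indexed by the $2^{\abs{S}-1}$ subsets $X\ni s^\ast$). The inner product $\sum_{X\ni s^\ast}A[p,X]A[q,X]$ counts subsets $X\ni s^\ast$ that are simultaneously unions of blocks of $p$ and of $q$; but such $X$ are exactly the unions of blocks of $p\uplus q$ containing $s^\ast$, so the count is $2^{c-1}$ where $c$ is the number of blocks of $p\uplus q$, hence odd precisely when $c=1$. With this one-line parity argument made explicit, no M\"obius inversion is needed and the delicate point disappears.
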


We explicitly describe a necessary subroutine, Algorithm~\ref{alg:reppartitions}.

\begin{algorithm}
  \caption{\textsc{RepPartitions($S, \mathcal{A}$)}}\label{alg:reppartitions}
\begin{algorithmic}[1]
\Statex \textbf{Input:} A set $S$ and a family $\mathcal{A}$ of partitions of $S$.
\Statex \textbf{Output:} A representative set $\mathcal{R}$ of $\mathcal{A}$ of size at most $\abs{S}\cdot 2^{\abs{S}-1}$.
\State We compute the family $\mathcal{A}'$ of all $1$-coarsenings of partitions in $\mathcal{A}$.
\State For each $1\le i\le \abs{S}$, 
set $\cA_i:=\{\cX\in \cA': \abs{\cX}=i\}$ and $\cB_i$ the set of all partitions of $S$ of size $i$.
\State For each $1\le i, j\le \abs{S}$ with $i+j=\abs{S}+1$, we compute a set $\cR_i$ from $\cA_i$ with respect to $\cB_j$ using Theorem~\ref{thm:representativeset}.
\State We take the set $\cR$ from $\bigcup_{1\le i\le \abs{S}}\cR_i$ by taking the original partition before taking a $1$-coarsening, and output $\cR$.
\end{algorithmic}
\end{algorithm}

\begin{proposition}\label{prop:repalgorithm}
Given a family $\cA$ of partitions of a set $S$,
Algorithm~\ref{alg:reppartitions} outputs a representative set of $\cA$ of size at most $\abs{S}\cdot 2^{\abs{S}-1}$ in time $\mathcal{A}^{\mathcal{O}(1)} 2^{\mathcal{O}(\abs{S})}$.
\end{proposition}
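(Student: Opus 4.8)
The plan is to prove correctness of Algorithm~\ref{alg:reppartitions} first and then bound the size of its output and its running time. Correctness rests entirely on two facts already in hand: Lemma~\ref{lem:connectedhypergraph}, which says that a \emph{connected} incidence graph $\Inc(S,\cX_1\cup\cX_2)$ is acyclic precisely when $\abs{\cX_1}+\abs{\cX_2}=\abs{S}+1$; and the $1$-coarsening characterization of acyclicity stated just before Theorem~\ref{thm:representativeset}, namely that $\Inc(S,\cX_1\cup\cX_2)$ is acyclic if and only if some $1$-coarsening $\cX_1'$ of $\cX_1$ makes $\Inc(S,\cX_1'\cup\cX_2)$ connected and acyclic. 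Together these let us pass freely between the ``acyclicity'' condition defining a representative set and the ``connectedness'' condition in which Theorem~\ref{thm:representativeset} is phrased, at the cost only of bookkeeping with part-counts.

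For correctness I would fix an arbitrary $\cX_1\in\cA$ and a partition $\cY$ of $S$ with $\Inc(S,\cX_1\cup\cY)$ acyclic, and chase the construction. The $1$-coarsening characterization gives a $1$-coarsening $\cX_1'$ of $\cX_1$ with $\Inc(S,\cX_1'\cup\cY)$ connected and acyclic; as a $1$-coarsening of a member of $\cA$, this $\cX_1'$ belongs to the family $\cA'$ built in Step~1. Put $i:=\abs{\cX_1'}$ and $j:=\abs{\cY}$; Lemma~\ref{lem:connectedhypergraph} forces $i+j=\abs{S}+1$, so $\cX_1'\in\cA_i$ and $\cY\in\cB_j$, and $\cA_i$ together with $\cB_j$ is exactly the pair against which Step~3 produces $\cR_i$. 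By Theorem~\ref{thm:representativeset} there is $\cX_2'\in\cR_i\subseteq\cA_i$ with $\Inc(S,\cX_2'\cup\cY)$ connected; since $\abs{\cX_2'}=i$ and $\abs{\cY}=j$ still satisfy $i+j=\abs{S}+1$, Lemma~\ref{lem:connectedhypergraph} upgrades this to ``connected and acyclic''. Now $\cX_2'$ is a $1$-coarsening of some $\cX_2\in\cA$, and $\cX_2$ is precisely the partition that Step~4 places into $\cR$; applying the $1$-coarsening characterization in the reverse direction to $(\cX_2,\cY)$ shows $\Inc(S,\cX_2\cup\cY)$ is acyclic. As $\cR\subseteq\cA$ by construction, this establishes that $\cR$ is a representative set of $\cA$.

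For the size bound, Theorem~\ref{thm:representativeset} gives $\abs{\cR_i}\le 2^{\abs{S}-1}$ for each of the at most $\abs{S}$ values of $i$, and Step~4 replaces each element of $\bigcup_i\cR_i$ by a single chosen pre-image in $\cA$, so $\abs{\cR}\le\abs{S}\cdot 2^{\abs{S}-1}$. For the running time, each partition of $S$ has at most $\abs{S}$ parts and hence at most $2^{\abs{S}}$ distinct $1$-coarsenings, so Step~1 produces $\cA'$ with $\abs{\cA'}\le\abs{\cA}\cdot 2^{\abs{S}}$ in time $\abs{\cA}\cdot 2^{\mathcal{O}(\abs{S})}$, and we record a pre-image in $\cA$ for each member so that Step~4 is trivial; Step~2 is cheap bucketing of $\cA'$ by size; and Step~3 invokes Theorem~\ref{thm:representativeset} at most $\abs{S}$ times, each on a family of size at most $\abs{\cA'}$, for a total of $\abs{\cA}^{\mathcal{O}(1)}2^{\mathcal{O}(\abs{S})}$ (the running time in Theorem~\ref{thm:representativeset} does not depend on the second family, so leaving $\cB_j$ implicit is harmless). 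Altogether the algorithm runs in $\abs{\cA}^{\mathcal{O}(1)}2^{\mathcal{O}(\abs{S})}$ time.

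The only genuine obstacle is keeping the two-way passage between $\cA$ and $\cA'$ honest: one must invoke the ``connectedness'' conclusion of Theorem~\ref{thm:representativeset} only alongside the part-count identity of Lemma~\ref{lem:connectedhypergraph} so that it may be promoted to acyclicity, and one must notice that stepping back from the chosen $1$-coarsening $\cX_2'$ to a genuine member $\cX_2$ of $\cA$ preserves acyclicity of $\Inc(S,\cX_2\cup\cY)$ --- which is just the reverse implication of the $1$-coarsening characterization, so no further combinatorial work is needed. Everything else is routine.
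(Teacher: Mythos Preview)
Your proposal is correct and follows essentially the same route as the paper's proof: pass to a $1$-coarsening to obtain connectedness, apply Theorem~\ref{thm:representativeset} inside the appropriate bucket $\cA_i$, and then step back to an original partition in $\cA$. You are in fact more explicit than the paper in two places --- you spell out that Lemma~\ref{lem:connectedhypergraph} is what promotes the ``connected'' conclusion of Theorem~\ref{thm:representativeset} to ``connected and acyclic'' (the paper asserts this in one breath), and you name the reverse implication of the $1$-coarsening observation as the reason acyclicity survives the passage from $\cX_2'$ back to $\cX_2$ --- but the argument is the same.
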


\begin{proof}
Let $\cR$ be the output of Algorithm~\ref{alg:reppartitions}. 
Clearly, $\cR\subseteq \cA$, because we take the original partitions of $\bigcup_{1\le i\le \abs{S}}\cR_i$ at the last step.
Thus, it is sufficient to show that
\begin{itemize}
\item for every $\cX_1\in \cA$ and every partition $\cY$ of $S$ where $\Inc (S, \cX_1\cup \cY)$ has no cycles, 
there exists a partition $\cX_2\in \cR$ such that $\Inc (S, \cX_2\cup \cY)$ has no cycles.
\end{itemize} 

To show this, let $\cX_1\in \cA$ and $\cY$ be partitions of $S$ such that $\Inc (S, \cX_1\cup \cY)$ has no cycles.
We know that there exists a $1$-coarsening $\cX_2$ of $\cX_1$ such that $\Inc (S, \cX_2\cup \cY)$ is connected and has no cycles.
This $1$-coarsening $\cX_2$ is obtained in Step 1.
In Step 3, we obtain $\cR_{\abs{\cX_2}}$, and there exists $\cX_3\in \cR_{\abs{\cX_2}}$ such that  $\Inc (S, \cX_3\cup \cY)$ is connected and has no cycles.
Let $\cX_4$ be the partition obtained from $\cX_3$ by taking the original partition before taking a $1$-coarsening.
We have that $\cX_4\in \cR$ and $\Inc (S, \cX_4\cup \cY)$ has no cycles, as required.
By Theorem~\ref{thm:representativeset}, $\abs{\cR}\le \sum_{1\le i\le \abs{S}} \abs{\cR_{i}}\le \abs{S}\cdot 2^{\abs{S}-1}$ and 
Algorithm~\ref{alg:reppartitions} runs in time $\mathcal{A}^{\mathcal{O}(1)} 2^{\mathcal{O}(\abs{S})}$.
\end{proof}

\section{\BPBVD}\label{sec:BPBVD}

In this section, we prove \cref{thm:main1b}, restated below.

\mainthmb*

	We provide an overview of our approach for \cref{thm:main1b}.

	\begin{enumerate}
	\item Let $(G,S)$ be a $d$-labeled $\cP$-block graph, which will be the graph that remains after removing some partial solution in the dynamic programming algorithm.
	We first focus on $S$-blocks of $(G,S)$.	
	For each non-trivial block of $G[S]$, we guess its final shape as a $d$-labeled biconnected graph, and
	store the labelings of the vertices and their neighbors in the $S$-block of $G$ containing it.
	Collectively, we call this information a \emph{characteristic} of $(G,S)$.
	\item Suppose $(H,S)$ is a $d$-labeled $\cP$-block boundaried graph compatible with $(G,S)$ such that  
	every $S$-block of $(G,S)\oplus (H,S)$ is a $d$-labeled $\cP$-block graph. 
	Note that $(G,S)\oplus (H,S)$ still may have a chordless cycle, and
	by Proposition~\ref{prop:nocircuit}, 
	$(G,S)\oplus (H,S)$ is chordal if and only if
	$\Part (G,S)\oplus \Part (H,S)$ has no cycles.
	If $(G,S)\oplus (H,S)$ is chordal, then it is easy to check that 
	for every block $B$ of $(G,S)\oplus (H,S)$,
	either $B$ is contained in one of $G$ and $H$, or it is an $S$-block.
    Thus, instead of storing $\Part (G,S)$, we will store the corresponding partition of the set of components of $G[S]$.  
	To avoid storing all such partitions, whose total size might be $2^{c\cdot w\log w}$ for some constant $c$, we use the representative set technique discussed in Section~\ref{sec:representative}.
	\item We formally describe and prove an equivalence between two boundaried graphs in Theorem~\ref{thm:equivalence2}.
	\end{enumerate}

	For convenience, we fix an integer $d\ge 2$ and a class $\cP$ of graphs that is block-hereditary, recognizable in polynomial time, 
	and consists of only chordal graphs.	
	Let $\cU_d$ be the set of all $d$-labeled biconnected $\cP$-block graphs.
	For a boundaried graph $(G,S)$, we denote by $\Block (G,S)$ the set of all non-trivial blocks in $G[S]$.

\subsection{Characteristics}\label{subsec:bcharacteristic}

	For a $d$-labeled graph $(G, S)$ with labeling $L$, 
	a \emph{characteristic} of $(G, S)$ is a pair $(g, h)$ of functions 
	$g:\Block (G,S) \rightarrow \cU_d$ and $h:\Block (G,S) \rightarrow 2^{[d]}$
	satisfying the following:
	for each $B\in \Block (G,S)$ and the unique $S$-block $X$ of $G$ containing $B$, 
	\begin{enumerate}[(a)]
	\item (label-isomorphism condition) $X$ is partially label-isomorphic to $g(B)$;
	\item (coincidence condition) for every $B'\in \Block (G,S)$ contained in $X$, $g(B')=g(B)$;
	\item (neighborhood condition) $h(B)=L(N_{X}(V(B))\setminus S)$; and
	\item (completeness condition) for every $w$ where $w\in V(X)\setminus S$ or $\{w\}=V(X)\cap V(C)$ for some component $C$ of $G[S]$, 
	$X[N_X[w]]$ is label-isomorphic to $g(B)[N_{g(B)}[z]]$ where $z$ is the vertex in $g(B)$ with label $L(w)$.
	\end{enumerate}
	The conditions (a) and (c) were motivated in the overview.
	Since we want that $g(B)$ is a final block containing $B$, 
	for other non-trivial block $B'$ of $G[S]$ already contained in the same block of $G$ with $B$, 
	it has to indicate the same final block. This is the condition (b).
	If we just say that (a) $X$ is partially label-isomorphic to $g(B)$,
	some vertex of $X$ may have unexpected neighbor. To avoid this problem, we impose the last condition (d).

	For a $d$-labeled $\cP$-block graph $(G,S)$ with characteristic $(g,h)$ and a $d$-labeled $\cP$-block graph $(H,S)$ compatible with $(G,S)$,
	the sum $(G,S)\oplus (H,S)$ \emph{respects} $(g, h)$  
	if for each $B\in \Block (G,S)$, the $S$-block of $(G,S)\oplus (H,S)$ containing $B$ 
	is label-isomorphic to $g(B)$.

	The following is the main combinatorial result regarding characteristics.
\begin{theorem}\label{thm:equivalence2}
Let $(G_1, S)$, $(G_2, S)$, and $(H, S)$ be $d$-labeled $\cP$-block graphs such that
\begin{itemize}
\item for each $i\in \{1,2\}$, $(G_i, S)$ is compatible with $(H, S)$, 
\item $(G_1, S)$ and $(G_2, S)$ have the same characteristic $(g,h)$, and 
\item $\Part (G_2,S)\oplus \Part (H,S)$ has no cycles. 
 \end{itemize}
If $(G_1, S)\oplus (H, S)$ is  a $d$-labeled $\cP$-block graph that respects $(g,h)$, 
then $(G_2, S)\oplus (H, S)$ is a $d$-labeled $\cP$-block graph that respects $(g,h)$.
\end{theorem}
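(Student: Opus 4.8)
The plan is to show that $(G_2,S)\oplus(H,S)$ inherits the two properties separately: first that every block of it lies in $\cP$ and has at most $d$ vertices (the ``$\cP$-block graph'' part, including chordality), and second that it respects $(g,h)$. The natural split follows the structure of blocks: any block $B$ of $(G_2,S)\oplus(H,S)$ is either entirely contained in $G_2$, or entirely contained in $H$, or it is an $S$-block. In the first two cases $B$ is already a block of $G_2$ or of $H$, which are $\cP$-block graphs by hypothesis, so there is nothing to check. Hence all the work concerns $S$-blocks, and here I would set up a correspondence between $S$-blocks of $(G_2,S)\oplus(H,S)$ and $S$-blocks of $(G_1,S)\oplus(H,S)$ via their shared characteristic.

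\medskip

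First I would fix an $S$-block $F_2$ of $(G_2,S)\oplus(H,S)$ and pick a non-trivial block $B\in\Block(G_2,S)$ contained in it (such a $B$ exists by the definition of $S$-block, since $G_2[S]=H[S]$). Let $Q:=g(B)$, which by definition lies in $\cU_d$, so $Q$ is a biconnected $d$-labeled chordal $\cP$-block graph with at most $d$ vertices. The key sub-claim is that $(G_2,S)$ and $(H,S)$ are block-wise $Q$-compatible in the sense of Section~\ref{sec:chordalsum}, restricted to the blocks sitting inside $F_2$. Condition~(a) of the characteristic gives that the $S$-block of $G_2$ containing $B$ is partially label-isomorphic to $Q$; to get the analogous statement for $H$, I would use that $(G_1,S)\oplus(H,S)$ respects $(g,h)$, so the $S$-block of $(G_1,S)\oplus(H,S)$ containing $B$ is label-isomorphic to $Q=g(B)$, and since the corresponding $S$-block of $H$ is an induced subgraph of it and $\cP$ is block-hereditary, that $H$-block is partially label-isomorphic to $Q$ as well. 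The no-conflict condition (ii) in the definition of block-wise $Q$-compatibility — disjointness of the label sets $L_{G_2}(N_{B_1}(V(B))\setminus S)$ and $L_H(N_{B_2}(V(B))\setminus S)$, and non-adjacency in $Q$ of the corresponding vertices — is exactly what the neighborhood condition~(c) (which records $h(B)=L(N_X(V(B))\setminus S)$ identically for $G_1$ and $G_2$) together with the ``respects'' hypothesis for $(G_1,S)\oplus(H,S)$ forces: since $(G_1,S)\oplus(H,S)$ genuinely has a valid block $Q$ there, the $G_1$-side and $H$-side neighbor labels must already be disjoint and non-adjacent in $Q$, and $h$ transfers this verbatim to the $G_2$-side. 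Once block-wise $Q$-compatibility is established, Proposition~\ref{prop:sumofchordalgraphs2} applies (using that $\Part(G_2,S)\oplus\Part(H,S)$ has no cycles) and yields that $F_2$ is partially label-isomorphic to $Q$, hence chordal, biconnected, and with at most $d$ vertices; I still need that $F_2$ is in fact label-isomorphic to $Q$ rather than merely a proper induced subgraph, and that $F_2\in\cP$.

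\medskip

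For the upgrade from ``partially'' to ``fully'' label-isomorphic, I would invoke the completeness condition~(d): it says that for every vertex $w$ that is either off $S$ or is the unique boundary vertex of a component of $G_i[S]$ inside the $S$-block, the closed neighborhood $X[N_X[w]]$ already realizes the closed neighborhood of the corresponding vertex in $g(B)$. Applied on the $G_2$-side (and the matching fact on the $H$-side, pulled back from the respected sum $(G_1,S)\oplus(H,S)$ as above), this forces that $F_2$, after gluing, uses up all vertices of $Q$ and all its edges — i.e. $F_2$ is label-isomorphic to $Q=g(B)$. This simultaneously proves that $F_2\cong Q\in\cP$ (so every $S$-block of $(G_2,S)\oplus(H,S)$ is in $\cP$), that $|V(F_2)|\le d$, and that $(G_2,S)\oplus(H,S)$ respects $(g,h)$ — note that Lemma~\ref{lem:gvlauesblock}, with the coincidence condition~(b) playing the role of the function $g$ that is constant on $S$-blocks, guarantees this is consistent: all non-trivial blocks of $G_2[S]$ inside $F_2$ carry the same value of $g$, so the choice of $B$ above did not matter. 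Finally, Proposition~\ref{prop:nocircuit} (with every $S$-block of $(G_2,S)\oplus(H,S)$ now known to be chordal) together with the hypothesis that $\Part(G_2,S)\oplus\Part(H,S)$ has no cycles gives that $(G_2,S)\oplus(H,S)$ is chordal, completing the proof that it is a $d$-labeled $\cP$-block graph.

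\medskip

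The main obstacle I anticipate is the transfer of information from $(G_1,S)\oplus(H,S)$ — about which we are told it is a $\cP$-block graph respecting $(g,h)$ — to extract the purely $H$-side facts (partial label-isomorphism of $H$'s $S$-blocks to $Q$, and disjointness/non-adjacency of the $H$-side neighbor labels) that are needed to run Proposition~\ref{prop:sumofchordalgraphs2} and the completeness argument on the $G_2$-side. This requires carefully decomposing a valid glued block of $(G_1,S)\oplus(H,S)$ into its $G_1$-part and $H$-part — using Lemma~\ref{lem:edgesblock} and Lemma~\ref{lem:inducedpathsblock} to control how edges and induced paths of the glued block distribute across $G_1$, $H$, and $S$ — and checking that the characteristic data $(g,h)$, which a priori only records $G_1$-side information, pins down the $H$-side tightly enough. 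Everything else is a matter of chaining together the already-proved propositions of Section~\ref{sec:chordalsum} with the four defining conditions of a characteristic.
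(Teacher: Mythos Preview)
Your proposal is correct and follows essentially the same route as the paper's proof: fix an $S$-block $F$ of the sum, use Lemma~\ref{lem:gvlauesblock} to pin down a single target $Q=g(B)$ for all of $F$, establish block-wise $Q$-compatibility by transferring the $H$-side data from the respected sum $(G_1,S)\oplus(H,S)$, apply Proposition~\ref{prop:sumofchordalgraphs2} for partial label-isomorphism, upgrade to full label-isomorphism via the completeness condition, and finish with Proposition~\ref{prop:nocircuit}.

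Two technical points you gloss over that the paper handles explicitly: (i) to apply Proposition~\ref{prop:sumofchordalgraphs2} you must work with $(F\cap G_2,\,S\cap V(F))$ and $(F\cap H,\,S\cap V(F))$ rather than $(G_2,S)$ and $(H,S)$ globally (since different $S$-blocks of the sum have different $g$-values), and for that you need Lemma~\ref{lem:nocycle} to pass the no-cycle hypothesis down to the restricted $\Part$'s; (ii) the trichotomy you assert at the outset --- that every block of the sum is an $S$-block or lies entirely in $G_2$ or in $H$ --- is not free; the paper proves it only \emph{after} establishing chordality of the sum, via a short argument showing that a non-$S$-block meeting both $G_2\setminus S$ and $H\setminus S$ would contain a chordless cycle. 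Both gaps are easily filled along the lines you indicate, but you should be aware they require separate arguments.
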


\begin{proof}
	Suppose $(G_1, S)\oplus (H, S)$ is a $d$-labeled $\cP$-block graph that respects $(g,h)$.
	We first show $(G_2, S)\oplus (H, S)$ respects $(g,h)$.
	Choose a non-trivial block $B$ of $G_2[S]$, let $Q:=g(B)$, 
	and let $F$ be the $S$-block of $(G_2, S)\oplus (H, S)$ containing $B$.
	As a shortcut, set $S_{F}:=V(F)\cap S$.
	Let $L_F$ be the function from $V(F)$ to $[d]$ that sends each vertex to its label from $G_2$ or $H$.
	Let $L_Q$ be the labeling of $Q$.
	
	We first show that $L_F$ is a $d$-labeling of $F$, and $F$ is partially label-isomorphic to $Q$.
	We verify the conditions of Proposition~\ref{prop:sumofchordalgraphs2} by regarding $F$ as the sum of $(F\cap G_2, S_F)$ and $(F\cap H, S_F)$
	to show that $F$ is partially label-isomorphic to $Q$.
	We additionally show that $L_Q(V(Q))\subseteq L_F(V(F))$, in order to complete the proof.

	\begin{claim}\label{claim:equivalenceblock}
	For every non-trivial block $B'$ of $G_2[S]$ with $V(B')\subseteq V(F)$, $g(B')=Q$.
	\end{claim}
	\begin{clproof}
	Note that $\Part (G_2,S)\oplus \Part (H,S)$ has no cycles.
	Since $(g,h)$ is a characteristic of $(G_2, S)$, 
	for non-trivial blocks $B_1, B_2$ of $G_2[S]$ contained in the same $S$-block of $G_2$, $g(B_1)=g(B_2)$.
	Also, since $(G_1, S)\oplus (H,S)$ respects $(g,h)$, 
	for non-trivial blocks $B_1, B_2$ of $G_2[S]$ contained in the same $S$-block of $H$, $g(B_1)=g(B_2)$.
	Thus, the claim follows from Lemma~\ref{lem:gvlauesblock}.
	\end{clproof}

Since $\Part (G_2,S)\oplus  \Part (H,S)$ has no cycles,
by Lemma~\ref{lem:nocycle}, 
$\Part (F\cap G_2,S_F)\oplus \Part (F\cap H,S_F)$ has no cycles.
To apply Proposition~\ref{prop:sumofchordalgraphs2}, it remains to show that $(F\cap G_2,S_F)$ and $(F\cap H,S_F)$ are block-wise $Q$-compatible.

	\begin{claim}\label{claim:partial2} 
	$(F\cap G_2, S_F)$ and $(F\cap H, S_F)$ are block-wise $Q$-compatible.
	\end{claim}
	\begin{clproof}
	By Claim~\ref{claim:equivalenceblock} and the fact that $(g,h)$ is a characteristic of $(G_2, S)$, $F\cap G_2$ is block-wise partially label-isomorphic to $Q$.
	By Claim~\ref{claim:equivalenceblock} and the fact that $(G_1, S)\oplus (H,S)$ respects $(g,h)$, $F\cap H$ is block-wise partially label-isomorphic to $Q$.

	We now confirm the second condition of being block-wise $Q$-compatible.
	Let $B\in \Block (F, S_F)$.
	Let $B_1$ be the $S$-block of $G_2$ containing $B$,
	$B_2$ be the $S$-block of $H$ containing $B$, and 
	$B_1'$ be the $S$-block of $G_1$ containing $B$.
	
	Since $(G_1, S)\oplus (H,S)$ respects $(g,h)$, $N_{B_1'}(V(B))\setminus S$ and $N_{B_2}(V(B))\setminus S$ have disjoint sets of labels.
	As $(G_1, S)$ and $(G_2, S)$ have the same characteristic, 
	$N_{B_1'}(V(B))\setminus S$ and $N_{B_1}(V(B))\setminus S$ have the same set of labels, and thus
	$N_{B_1}(V(B))\setminus S$ and $N_{B_2}(V(B))\setminus S$ have disjoint sets of labels.
	Furthermore, for every $\ell_1\in L_F(N_{B_1}(V(B))\setminus S)$ and every $\ell_2 \in L_F(N_{B_2}(V(B))\setminus S)$, 
	the vertices in $Q$ with labels $\ell_1$ and $\ell_2$ are not adjacent because there are no edges between 
	$N_{B_1'}(V(B))\setminus S$ and $N_{B_2}(V(B))\setminus S$ in $(G_1, S)\oplus (H,S)$.
	\end{clproof}

	By Claim~\ref{claim:partial2} and Proposition~\ref{prop:sumofchordalgraphs2}, 
	$L_F$ is a $d$-labeling of $F$ and $F$ is partially label-isomorphic to $Q$.
	Lastly, we show that $F$ and $Q$ have the same set of labels.

	\begin{claim}
	$L_Q(V(Q))\subseteq L_F(V(F))$.
	\end{claim}
	\begin{clproof}
	Suppose there is a vertex $v$ in $Q$ such that $F$ has no vertex with label $L_Q(v)$.
	We choose such a vertex $v$ so that there exists $w\in V(Q)$ that is adjacent to $v$ in $Q$ where the label of $w$ appears in $F$.
	We can choose such vertices $v$ and $w$ because $Q$ is connected, $V(F)\neq \emptyset$, and $L_F(V(F))\subseteq L_Q(V(Q))$. 
	Let $w'$ be the vertex in $F$ with label $L_Q(w)$.

	First assume $w'\in V(F)\setminus S$.
	If $w'\in V(G_2)\setminus S$, 
	then by the completeness condition of the characteristic, 
	$U[N_U[w']]$ is label-isomorphic to $Q[N_Q[w]]$, where $U$ is the $S$-block of $G_2$ containing $w'$ and $V(U)\subseteq V(F)$.
	If $w'\in V(H)\setminus S$, 
	then since $(G_1, S)\oplus (H,S)$ respects $(g,h)$, 
	$U[N_U[w']]$ is label-isomorphic to $Q[N_Q[w]]$, where $U$ is the $S$-block of $H$ containing $w'$ and $V(U)\subseteq V(F)$.
	Thus, in these cases, $F$ contains a vertex with label $L_Q(v)$; a contradiction.	
	We may assume that $w'$ is contained in $S$.
	
	Next, we assume that $\{w'\}$ is the vertex set of some component of $F[S_F]$.
	In this case, $F$ has at least $3$ vertices, because $F$ contains some edge of $G_2[S]$. 
	Thus, $w'$ has a neighbor in $F$.
	We claim that $w'$ has neighbors in precisely one of $F\cap G_2$ and $F\cap H$.
	Towards a contradiction, suppose $w'$ has neighbors in both $F\cap G_2$ and $F\cap H$.
	Note that $F-w'$ is connected. We take a shortest path $P$ from $N_{F\cap G_2}(w')$ to $N_{F\cap H}(w')$.
	By construction, the end vertices of $P$ are not adjacent, and $w'$ is not adjacent to any internal vertices of $P$.
	Thus, $F[\{w'\}\cup V(P)]$ is a chordless cycle, contradicting the fact that $F$ is partially label-isomorphic to $Q$ and $Q$ is chordal.
	We conclude that $w'$ has neighbors in precisely one of $F\cap G_2$ and $F\cap H$.
	
	If $w'$ has a neighbor in $F\cap G_2$, then 
	by the completeness condition of the characteristic, 
	$U[N_U[w']]$ is label-isomorphic to $Q[N_Q[w]]$, where $U$ is the $S$-block of $G_2$ containing $w'$ and $V(U)\subseteq V(F)$.
	If $w'$ has a neighbor in $F\cap H$, then 
	since $(G_1, S)\oplus (H,S)$ respects $(g,h)$, 
	$U[N_U[w']]$ is label-isomorphic to $Q[N_Q[w]]$, where $U$ is the $S$-block of $H$ containing $w'$ and $V(U)\subseteq V(F)$.
	Thus, in these cases, $F$ contains a vertex with label $L_Q(v)$; a contradiction.	
	
	Finally, we may assume that there is a non-trivial block $B'$ of $F[S_F] $ containing $w'$.
	We observe that 
	the $S$-block of $(G_1, S)\oplus (H, S)$ containing $B'$ is label-isomorphic to $Q$.
	We also observe that every label appearing in the neighborhood of $w'$ in the $S$-block of $(G_1, S)\oplus (H, S)$ containing $B'$
	appears in the neighborhood of $w'$ in $(G_2, S)\oplus (H, S)$ as well, because $(G_1, S)$ and $(G_2, S)$ have the same characteristic.
	This contradicts the assumption that $F$ has no vertex with label $L_Q(v)$.
	We conclude that $L_Q(V(Q))\subseteq L_F(V(F))$.
	\end{clproof}

	We conclude that $F$ is label-isomorphic to $Q$. Since $B$ was arbitrarily chosen, this implies that $(G_2, S)\oplus (H,S)$ respects $(g,h)$.
	Lastly, we confirm that $(G_2,S)\oplus (H,S)$ is a $d$-labeled $\cP$-block graph.
	
	\begin{claim}\label{claim:blockgraph}
	The graph $(G_2,S)\oplus (H,S)$ is a $d$-labeled $\cP$-block graph.
	\end{claim}
	\begin{clproof}
	It is sufficient to show that every non $S$-block of $(G_2,S)\oplus (H,S)$ is fully contained in $G_2$ or $H$.
	We observe that since $\Part (G_2,S)\oplus \Part (H,S)$ has no cycles and every $S$-block of $(G_2, S)\oplus (H,S)$ is chordal, 
	by Proposition~\ref{prop:nocircuit}, 
	we have $(G_2, S)\oplus (H, S)$ is chordal.

	Suppose towards a contradiction that there is a non $S$-block $U$ of $(G_2,S)\oplus (H,S)$ intersecting both $G_2-S$ and $H-S$. 
We choose a triple $(v,w,D)$ such that 
	\begin{itemize}
	\item $v\in V(U)\cap (V(G_2)\setminus S)$, $w\in V(U)\cap (V(H)\setminus S)$, 	
	$D$ is a cycle containing $v$ and $w$ in $U$; and
	\item the length of $D$ is minimum.
	\end{itemize}
	Let $P_1$ and $P_2$ be the two paths from $v$ to $w$ in $D$.

	We claim that there are no edges between the internal vertices of $P_1$ and the internal vertices of $P_2$.
	Suppose there is an edge $p_1p_2$ for some $p_1\in V(P_1)\setminus \{v,w\}$ and $p_2\in V(P_2)\setminus \{v,w\}$.
	One of $p_1$ and $p_2$ is contained in $G_2-S$ or $H-S$, as
	$U$ can contain at most one vertex of each component of $G_2[S]$.
	Now, if $p_1$ and $p_2$ are contained in $G_2$, then we can replace $v$ with one of $p_1$ and $p_2$ that is in $G_2-S$, and obtain a cycle shorter than $D$; a contradiction.
	Similarly, if they are contained in $H$, then we obtain a cycle shorter than $D$.
	This implies that there are no edges between the internal vertices of $P_1$ and the internal vertices of $P_2$.
	Since $v$ is not adjacent to $w$, $D$ is a chordless cycle, which contradicts the fact that $(G_2,S)\oplus (H,S)$ is chordal.
	We conclude that every non $S$-block of $(G_2,S)\oplus (H,S)$ is fully contained in $G_2$ or $H$, and therefore $(G_2,S)\oplus (H,S)$ is a $d$-labeled $\cP$-block graph.
	\end{clproof}

This concludes the proof.
\end{proof}

\subsection{Main algorithm}\label{sec:mainalgo}

	Let $(G,S)$ be a boundaried graph, and $\cC$ be the set of components of $G[S]$.
 	For a partition $\cZ$ of $\cC$, 
	we write $\Inc(\cC, \cZ)\sim \Part (G,S)$ if
	\begin{itemize}
	\item two components of $G[S]$ are in the same part of $\cZ$ if and only if they are contained in the same component of $G$.
	\end{itemize}
	One can observe that there is an isomorphism from $\Inc(\cC, \cZ)$ to $\Part (G,S)$
	that maps each component of $\cC$ to the same component.

	\begin{proof}[Proof of \cref{thm:main1b}]
	Using Theorem~\ref{thm:approxtw} and Lemma~\ref{lem:nicetd}, 
	we obtain a nice tree decomposition of $G$ of width at most $5w+4$ in time $\mathcal{O}(c^w\cdot n)$ for some constant $c$.
	Let $(T, \cB=\{B_t\}_{t\in V(T)})$ be the resulting nice tree decomposition with root node $ro$. 
	For each node $t$ of $T$, let $G_t$ be the subgraph of $G$ induced by the union of all bags $B_{t'}$ where $t'$ is a descendant of $t$.
	Recall that $\cU_d$ is the class of all biconnected $d$-labeled $\cP$-block graphs, where each $H$ in $\cU_d$ has a labeling $L_H$.
	Note that $\abs{\cU_d}\le 2^{\binom{d}{2}}$.
	We start with enumerating all graphs in $\cU_d$ and their labelings.
	It takes time $2^{\mathcal{O}(d^2)}$.
	
	We define the following notation for every pair a node $t$ of $T$ and $X\subseteq B_t$:
\begin{enumerate}
\item Let $\Comp (t,X)$ be the set of all components of $G[B_t\setminus X]$.
\item Let $\Parti (t,X)$ be the set of all partitions of $\Comp (t,X)$.
\item Let $\Block (t,X)$ be the set of all non-trivial blocks of $G[B_t\setminus X]$.
\end{enumerate}
	For each node $t$ of $T$, $X\subseteq B_t$, and a function $L:B_t\setminus X\rightarrow [d]$, we define $\cF (t, X, L)$ as 
	the set of all pairs $(g,h)$ consisting of functions
	$g:\Block (t, X)\rightarrow\cU_d$ and $h:\Block (t,X)\rightarrow 2^{[d]}$.
	We say that $(g,h)$ is \emph{valid} if 
	\begin{itemize}
	\item $L$ is a $d$-labeling of $G[B_t\setminus X]$,
	\item for each $B\in \Block (t,X)$, $B$ is partially label-isomorphic to $g(B)$, and
	\item for each $B\in \Block (t,X)$, $L(V(B))\cap h(B)=\emptyset$.
	\end{itemize}
	Furthermore, for $i\in \{0, 1, \ldots, k\}$  and $(g,h)\in \cF(t, X, L)$, 
	let $c[t, (X, L, i, (g,h))]$ be the family
	of all partitions~$\cX$ in $\Parti (t,X)$ satisfying the following property: there exist $S\subseteq V(G_t)\setminus B_t$ with $\abs{S}=i$ and a $d$-labeling $L'$ of $G_t-(X\cup S)$ 	where
	\begin{itemize}
	\item $L =L'|_{B_t\setminus X}$, 
	\item $G_t-(X\cup  S)$ is a $\cP$-block graph,
	\item $(g, h)$ is a characteristic of $(G_t-(X\cup  S), B_t\setminus X)$, and 
	\item $\Inc (\Comp (t,X), \cX)\sim \Part (G_t-(X\cup  S), B_t\setminus X)$.
	\end{itemize}
	Such a pair $(S, L')$ will be called a \emph{partial solution} with respect to $(t, (X, L, i, (g,h)), \cX)$.
	It is easy to verify that $c[t, (X, L, i, (g,h)) ]=\emptyset$ if $(g,h)$ is not valid. 
	Let $\cM_t$ be the set of all possible tuples $(X,L, i, (g,h))$ at node $t$.

	The main idea of the algorithm is that instead of fully computing $c[t, M]$ for $M=(X,L, i, (g,h))\in \cM_t$, 
	we recursively enumerate a set $r[t, M]$ that represents $c[t, M]$.
	Formally, for a subset $r[t,M]\subseteq c[t,M]$, we denote $r[t,M]\equiv c[t,M]$ if
	\begin{itemize}
	\item for every $\cX\in c[t,M]$ and a partial solution $(S, L')$ with respect to $(t, M, \cX)$ and 
	$S_{out}\subseteq V(G)\setminus V(G_t)$ where $(G-(S\cup X\cup S_{out}), B_t\setminus X)$ is a $d$-labeled $\cP$-block graph respecting $(g,h)$
	(considering $G-(S\cup X\cup S_{out})$ as the sum $(G_t-(S\cup X), B_t\setminus X)\oplus (G-(V(G_t)\setminus B_t)-(X\cup S_{out}), B_t\setminus X)$), 
	there exists $\cX_1\in r[t,M]$ and a partial solution $(S', L'')$ with respect to $(t, M, \cX_1)$ such that
	$(G-(S'\cup X\cup S_{out}), B_t\setminus X)$ is a $d$-labeled $\cP$-block graph respecting $(g,h)$. 
	\end{itemize}
	By the definition of $r[t,M]$, 
	the problem is a \YES-instance if and only if 
	there exists $(X, L, i, (g,h))\in \cM_r$ with $\abs{X}+i\le k$ such that $r[ro,(X, L, i, (g,h))]\neq \emptyset$.
	To decide whether the problem is a \YES-instance, 
	we enumerate $r[t,M]$ for all nodes $t$ and all $M\in \cM_t$.

	Whenever we update $r[t,M]$, we confirm that $\abs{r[t,M]}\le w\cdot 2^{w-1}$. 
	This is a consequence of Proposition~\ref{prop:repalgorithm}.
	We describe how to update families $r[t,M]$ depending on the type of node $t$, 
	and prove the correctness of each procedure.
	We fix such a tuple.
	For each leaf node $t$ and all $0\le i\le k$ and empty functions $L$, $g$, $h$, we assign $r[t,(\emptyset, L, i, (g, h))]:=\emptyset$. 
	We may assume that $t$ is not a leaf node.
	Let $M:=(X,L, i, (g,h))\in \cM_t$.
	We may assume $(g,h)$ is valid.

\medskip\medskip
\noindent\textbf{1) $t$ is an introduce node with child $t'$ and $B_t\setminus B_{t'}=\{v\}$:}
\medskip

	If $v\in X$, then $G_t-X=G_{t'}-(X\setminus \{v\})$ and $B_t\setminus X=B_{t'}\setminus (X\setminus \{v\})$.
	So, we can set $r[t,M]:=r[t', (X\setminus \{v\}, L, i, (g, h))]$.	
	We assume $v\notin X$, and let $L_{res}:=L|_{B_{t'}\setminus X}$. 

	For a pair $(g,h)\in \cF(t, X, L)$, 
	a pair $(g', h')\in \cF(t', X, L_{res})$ is called the \emph{restriction} of $(g,h)$ 
	if  
	\begin{itemize}
	\item for $B_1\in \Block (t', X)$ and $B_2\in \Block(t, X)$ with $V(B_1)\subseteq V(B_2)$, 
	\begin{itemize}
	\item $g'(B_1)=g(B_2)$,
	\item if $v\in V(B_2)$, 
	then every vertex in $g'(B_1)$ with label in $h'(B_1)$ is not adjacent to the vertex in $g'(B_1)$ with label $L(v)$,
	\end{itemize}
	\item for $B_1\in \Block (t', X)$ and $B_2\in \Block(t, X)$ with $V(B_1)\subseteq V(B_2)$ and $v\notin V(B_2)$, $h'(B_1)=h(B_2)$, and 
	\item for $B_2\in \Block (t,X)$ containing $v$, $h(B_2)=\bigcup_{B_1\in \Block(t', X), V(B_1)\subseteq V(B_2)} h(B_1)$.
	\end{itemize}
	
	\begin{claim}\label{claim:introduce} 
	For every $\cX\in \Parti (t,X)$, $\cX\in c[t,M]$ if and only if 
	there exist a restriction $(g',h')$ of $(g,h)$ and $\cY\in c[t',(X,L_{res}, i, (g',h'))]$ such that
	\begin{itemize}
	\item $v$ has neighbors on at most one component in each part of $\cY$ (that is, $\Inc(\Comp(t',X), \cY)\oplus \Part (G[B_t\setminus X], B_{t'}\setminus X)$ has no cycles), and 
	\item if $v$ has at least one neighbor in $G[B_t\setminus X]$, then $\cX$ is the partition obtained from $\cY$ by, 
	for parts $Y_1, \ldots, Y_m$ of $\cY$ containing components having a neighbor of $v$,
	removing all of $Y_1, \ldots, Y_m$ and adding a part that consists of all components of $G[B_t\setminus X]$ that are not contained in 
	parts of $\cY\setminus \{Y_1, \ldots, Y_m\}$;
	and otherwise, $\cX=\cY\cup \{\{v\}\}$.
	\end{itemize}
	\end{claim}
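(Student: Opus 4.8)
The plan is to prove both implications by relating partial solutions at $t$ to partial solutions at its child $t'$, exploiting that $v$ is a fresh vertex all of whose neighbours lie in the bag. Concretely, since $B_t = B_{t'}\cup\{v\}$ with $v\notin X$, the vertex $v$ does not occur in $G_{t'}$ and $N_{G_t}(v)\subseteq B_{t'}$; hence $N_{G_t}(v)\cap S=\emptyset$ for every candidate deletion set $S\subseteq V(G_t)\setminus B_t$, and $G_t-(X\cup S)$ equals the sum $(G_{t'}-(X\cup S),\,B_{t'}\setminus X)\oplus(G[B_t\setminus X],\,B_{t'}\setminus X)$ — that is, it is $G_{t'}-(X\cup S)$ with $v$ and the edges of $G[B_t\setminus X]$ incident with $v$ added. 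I will use throughout that, as $\cP$ is block-hereditary, the class of $\cP$-block graphs is closed under induced subgraphs (so $G_{t'}-(X\cup S)$ is a $\cP$-block graph whenever $G_t-(X\cup S)$ is), and that every $\cP$-block graph is chordal. I also record the translation of the parenthetical equivalence in the statement: $\Inc(\Comp(t',X),\cY)$ is a disjoint union of stars with centres the parts of $\cY$, and the only component of $G[B_t\setminus X]$ that can have degree at least two in $\Part(G[B_t\setminus X],B_{t'}\setminus X)$ is the one containing $v$; gluing the two along $\Comp(t',X)$ therefore produces a cycle precisely when some part of $\cY$ contains two \emph{distinct} components of $G[B_{t'}\setminus X]$ both meeting $N_{G_t}(v)$, that is, exactly when ``$v$ has neighbours on at most one component in each part of $\cY$'' fails.

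For the forward implication ($\Rightarrow$), let $(S,L')$ witness $\cX\in c[t,M]$ and put $G'':=G_{t'}-(X\cup S)$, a $\cP$-block graph; let $(g',h')$ be the characteristic of $(G'',B_{t'}\setminus X)$ and let $\cY$ be the partition of $\Comp(t',X)$ with $\Inc(\Comp(t',X),\cY)\sim\Part(G'',B_{t'}\setminus X)$, so that $(S,L'|_{V(G'')})$ witnesses $\cY\in c[t',(X,L_{res},i,(g',h'))]$. The description of $\cX$ in terms of $\cY$ follows by inspecting how adding $v$ merges components of $G''$ (exactly those whose $G[B_{t'}\setminus X]$-component meets $N_{G_t}(v)$, together with $v$), which also yields the degenerate case $\cX=\cY\cup\{\{v\}\}$. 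For the ``no cycles'' condition, suppose for contradiction that some part of $\cY$ contains distinct components $C\neq C'$ of $G[B_{t'}\setminus X]$ both meeting $N_{G_t}(v)$; among all such configurations choose $u\in C\cap N_{G_t}(v)$ and $u'\in C'\cap N_{G_t}(v)$ of minimum distance in $G''$ and let $P$ be a shortest $u$--$u'$ path in $G''$, which has length at least two. The minimal choice forces that no internal vertex of $P$ lies in $N_{G_t}(v)$ (such a vertex would lie in a component of $G[B_{t'}\setminus X]$ inside the same component of $G''$, giving a strictly closer forbidden pair), so $v$ together with $P$ spans a chordless cycle of length at least four in $G_t-(X\cup S)$, contradicting chordality of that graph. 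Finally, that $(g',h')$ is a restriction of $(g,h)$ is a clause-by-clause reading of the defining conditions of a characteristic: $g'(B_1)=g(B_2)$ is the coincidence condition for $(g,h)$ applied to a subblock $B_1\subseteq B_2$ inside one $S$-block of $G_t-(X\cup S)$; the clauses on $h'$ come from the neighbourhood condition together with $N_{G_t}(v)\subseteq B_{t'}\setminus X$; and the non-adjacency clause for the $L(v)$-vertex uses that $v$ has no neighbour outside the bag.

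For the converse ($\Leftarrow$), given such $(g',h')$, $\cY$, and the two conditions, take a witnessing partial solution $(S,L')$ at $t'$, set $L'':=L'\cup\{v\mapsto L(v)\}$ and $G^{\star}:=G_t-(X\cup S)$; I claim $(S,L'')$ witnesses $\cX\in c[t,M]$. The relation $\Inc(\Comp(t,X),\cX)\sim\Part(G^{\star},B_t\setminus X)$ is the component computation of the previous paragraph read backwards. The substance is that $G^{\star}$ is a $\cP$-block graph and $(g,h)$ is a characteristic of $(G^{\star},B_t\setminus X)$; to see this I would view $G^{\star}$ as $(G_{t'}-(X\cup S),\,B_{t'}\setminus X)\oplus(G[B_t\setminus X],\,B_{t'}\setminus X)$ — here $G[B_t\setminus X]$ is a $\cP$-block graph since, by validity of $(g,h)$, each block in $\Block(t,X)$ is a biconnected induced subgraph of some $g(B)\in\cU_d$, hence in $\cP$ by block-heredity — and follow the pattern of the proof of Theorem~\ref{thm:equivalence2}. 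Namely, for an $S$-block $F$ of $G^{\star}$, setting $Q:=g(B)$ for a non-trivial block $B$ of $G[B_t\setminus X]$ inside $F$ and $S_F:=V(F)\cap(B_{t'}\setminus X)$, one uses Lemma~\ref{lem:nocycle} to inherit ``no cycles'' for $F$, Lemma~\ref{lem:gvlauesblock} with the coincidence condition and the clause $g'(B_1)=g(B_2)$ to see that all $g$- and $g'$-values of subblocks inside $F$ equal $Q$, and the remaining clauses of ``restriction'' — in particular the one forbidding the $L(v)$-vertex of $Q$ to be adjacent to the $h'$-labels, and $h(B)=\bigcup_{B_1\subseteq B}h'(B_1)$ — to check that $(F\cap G'',S_F)$ and $(F\cap G[B_t\setminus X],S_F)$ are block-wise $Q$-compatible; then Proposition~\ref{prop:sumofchordalgraphs2} gives that $F$ is partially label-isomorphic to $Q$, and the argument that $F$ and $Q$ carry the same set of labels (as in the proof of Theorem~\ref{thm:equivalence2}) upgrades this to a label-isomorphism, establishing conditions (a)--(d) for $(g,h)$ as well as $F\in\cP$; since the non-$S$-blocks of $G^{\star}$ are blocks of $G''$, this gives that $G^{\star}$ is a $\cP$-block graph. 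I expect this backward verification to be the main obstacle: one must show that introducing $v$ creates neither an unexpected adjacency inside an $S$-block through $v$ nor a non-trivial block that $(g,h)$ fails to describe, and this is exactly where the whole list of clauses defining a restriction, together with the ``no cycles'' hypothesis, is used — it is a localized repetition of the core of Section~\ref{sec:chordalsum}.
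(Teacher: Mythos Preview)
Your approach is essentially the paper's: for the forward direction you restrict the partial solution to $t'$ and read off $(g',h')$ and $\cY$; for the backward direction you view $G_t-(X\cup S)$ as the sum $(G_{t'}-(X\cup S),\,B_{t'}\setminus X)\oplus(G[B_t\setminus X],\,B_{t'}\setminus X)$ and verify the four clauses of a characteristic using Lemma~\ref{lem:gvlauesblock}, Lemma~\ref{lem:nocycle}, and Proposition~\ref{prop:sumofchordalgraphs2}. Your direct chordless-cycle argument for the ``no cycles'' condition in the forward direction is a valid alternative to the paper's one-line appeal to Proposition~\ref{prop:nocircuit}.

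There is, however, one genuine error in your backward direction. You write that after Proposition~\ref{prop:sumofchordalgraphs2} gives partial label-isomorphism of the $S$-block $F$ to $Q=g(B)$, ``the argument that $F$ and $Q$ carry the same set of labels (as in the proof of Theorem~\ref{thm:equivalence2}) upgrades this to a label-isomorphism.'' This is both unnecessary and false. It is unnecessary because condition~(a) of a characteristic asks only that $F$ be \emph{partially} label-isomorphic to $g(B)$, and $F\in\cP$ already follows from partial label-isomorphism together with block-heredity of $\cP$. It is false because at an introduce node the $(B_t\setminus X)$-block $F$ is in general a strict induced subgraph of $g(B)$: the missing labels will only appear once further vertices are introduced or once the complementary side of the final sum is glued on. The label-completion argument in Theorem~\ref{thm:equivalence2} works precisely because there one has a full complementary graph $(H,S)$ such that $(G_1,S)\oplus(H,S)$ \emph{respects} $(g,h)$; no such complement is available here. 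Simply drop this upgrade step.

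A second, smaller point: you gloss over the neighbourhood condition in the backward direction by saying it ``comes from the neighbourhood condition together with $N_{G_t}(v)\subseteq B_{t'}\setminus X$.'' The paper's verification that $h(B_2)=\bigcup_{B_1\subseteq B_2} h'(B_1)$ actually requires a short path argument (finding, for each $z\in N_F(V(B_2))$, a cycle through $z$ and $B_2-v$ that places $z$ in the neighbourhood of some sub-block $B_1\in\Block(t',X)$); this is where the $2$-connectivity of $F$ and of $B_2$ is used, and it is worth making explicit.
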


\begin{clproof}
	Suppose $\cX\in c[t,M]$ and let $(S,L_t)$ be a partial solution with respect to $(t, M, \cX)$.
	Observe \[ G_t-(X\cup S)= (G_{t'}-(X\cup S), B_{t'}\setminus X)\oplus (G[B_t\setminus X], B_{t'}\setminus X).\]
	Let $\cY\in \Parti (t',X)$ such that $\Inc(\Comp(t',X), \cY)\sim \Part (G_{t'}-(X\cup S), B_{t'}\setminus X)$.

	As 
	$G_t-(X\cup S)= (G_{t'}-(X\cup S), B_{t'}\setminus X)\oplus (G[B_t\setminus X], B_{t'}\setminus X)$
	and $G_t-(X\cup S)$ is chordal, by Proposition~\ref{prop:nocircuit}, 
	\begin{align*}
	\Inc(\Comp(t',X), \cY)\oplus \Part (G[B_t\setminus X], B_{t'}\setminus X)
	\end{align*} 
	has no cycles.
	The second condition holds by the definition of $\cY$.
	Since we can naturally obtain a restriction $(g',h')$ of $(g,h)$ for $(G_{t'}-(X\cup S), B_{t'}\setminus X)$, 
	this concludes the proof of the forward direction.
	
	For the converse, suppose 
	there exist $(g',h')$ and $\cY$ satisfying the assumption.
	Let $M_{res}:=(X,L_{res}, i, (g',h'))$, and $(S, L_{t'})$ be a partial solution with respect to $(t', M_{res}, \cY)$. 
	For convenience, we define that
	\begin{itemize}
	\item $H:=G_t-(X\cup S)$, 
	\item $H':=G_{t'}-(X\cup S)$,
	\item $L_t:V(H)\rightarrow [d]$ is the function obtained from $L_{t'}$ by further assigning $L_t(v):=L(v)$.
	\end{itemize}

	We claim that $(g,h)$ is a characteristic of $(H, B_t\setminus X)$. 
	Before checking the conditions of a characteristic, 
	we show that 
	if two blocks $D_1, D_2\in \Block (t',X)$ are contained in the same $(B_{t'}\setminus X)$-block of $H$, 
	then $g'(D_1)=g'(D_2)$.
	
	Let $D_1, D_2\in \Block (t', X)$. 
	If $D_1$ and $D_2$ are contained in the same $(B_{t'}\setminus X)$-block of $(H', B_{t'}\setminus X)$, 
	then $g'(D_1)=g'(D_2)$ because 
	$(g',h')$ is a characteristic of $(H', B_{t'}\setminus X)$.
	Also, if $D_1$ and $D_2$ are contained in the same $(B_{t'}\setminus X)$-block of  $(G_{t}[B_t\setminus X], B_{t'}\setminus X)$,
	then $g'(D_1)=g'(D_2)$ as $(g',h')$ is a restriction of $(g,h)$.
	By the assumption, 
	$\Inc(\Comp(t',X), \cY)\oplus \Part (G[B_t\setminus X], B_{t'}\setminus X)$ and equivalently,
	$\Part (H', B_{t'}\setminus X)\oplus \Part (G[B_t\setminus X], B_{t'}\setminus X)$ 
	have no cycles.
	Therefore, by Lemma~\ref{lem:gvlauesblock}, if $D_1$ and $D_2$ are contained in the same $(B_{t'}\setminus X)$-block of $H$, 
	then $g'(D_1)=g'(D_2)$.

	Let $B\in \Block (t,X)$ and $F$ be the $(B_t\setminus X)$-block of $H$ containing $B$.
\begin{enumerate}
\item (Coincidence condition)
	
	Let $B'\in \Block (H, B_t\setminus X)$ such that $B\neq B'$ and $B'$ is contained in $F$.
	First assume that $\abs{V(B)\setminus \{v\}}=1$ or $\abs{V(B')\setminus \{v\}}=1$.
	In this case, since $v$ has neighbors on at most one component in each part of $\cY$, 
	$B_t\setminus \{v\}$ and $B_{t'}\setminus \{v\}$ must be contained in the same component of $G[B_{t'}\setminus X]$; call it $C$.
	Then there is a path from $B_t\setminus \{v\}$ to $B_{t'}\setminus \{v\}$ in $C$, 
	and therefore $B$ and $B'$ are contained in the same block of $G[B_t\setminus X]$. 
	This contradicts the assumption that $B$ and $B'$ are distinct blocks in $\Block (H, B_t\setminus X)$. 
	
	Thus, both $B$ and $B'$ contain non-trivial blocks $U$ and $U'$ in $G[B_{t'}\setminus X]$ respectively,
	where $g'(U)=g'(U')$. This implies that $g(B)=g(B')$.

\item (Neighborhood condition)
	
	We need to show that $h(B)=\bigcup_{B_1\in \Block(t', X), V(B_1)\subseteq V(B)} h(B_1)$.
	Note that $(g',h')$ is a restriction of $(g,h)$.	
	If $B$ does not contain $v$, then $B$ is a block of $G[B_{t'}\setminus X]$, and $h(B)=h'(B)$.
	We assume that $B$ contains $v$. 
	
	It is easy to confirm that $h(B)\supseteq \bigcup_{B_1\in \Block(t', X), V(B_1)\subseteq V(B)} h(B_1)$, 
	since the block of $H'$ containing $B_1\in \Block(t', X)$ with $V(B_1)\subseteq V(B)$ has to be contained in $F$.
	To see that $h(B)\subseteq \bigcup_{B_1\in \Block(t', X), V(B_1)\subseteq V(B)} h(B_1)$, 
	let $z\in N_F(V(B))$, and choose a neighbor $w$ of $z$ in $B$.
	Since $v$ is introduced at the current node, $w\neq v$.
	Thus $F$ contains three vertices, and $F$ is $2$-connected.
	Also, if $V(B)=\{v,w\}$, then $w$ is a cut vertex in $F$, a contradiction.
	So $B$ also has at least $3$ vertices, and it is $2$-connected. In particular, $B-v$ is connected.
	
	We take a shortest path $P$ from $z$ to $V(B)$ in $F-w$.
	Let $p$ be the endpoint of $P$ at $V(B)$, and 
    let $Q$ be a path from $p$ to $w$ in $B-v$.
    Then $P\cup Q$ and $wz$ form a cycle and thus there exists an $S$-block $B_1$ of $H'$ with $V(B_1)\subseteq V(B)$, where $z$ is in the neighborhood of this block.
    This implies that 
	$h(B)\subseteq \bigcup_{B_1\in \Block(t', X), V(B_1)\subseteq V(B)} h(B_1)$.

 \item (Label-isomorphism condition)
 
	We prove that $F$ is partially label-isomorphic to $g(B)$.
	Let $F_1:=F\cap H'$, $F_2:=F\cap G[B_t\setminus X]$, and $U=V(F_1)\cap V(F_2)$.
	Since $\Part (H', B_{t'}\setminus X)\oplus \Part (G[B_t\setminus X], B_{t'}\setminus X)$ has no cycles, 
	by Lemma~\ref{lem:nocycle}, $\Part (F_1,U)\oplus \Part (F_2,U)$ has no cycles.
	To apply Proposition~\ref{prop:sumofchordalgraphs2}, 
	we verify that 
	$(F_1, U)$ and $(F_2, U)$ are block-wise $g(B)$-compatible.
	We observed that 
	if two non-trivial blocks $D_1$ and $D_2$ of $G[B_{t'}\setminus X]$ are contained in $F$, 
	then $g'(D_1)=g'(D_2)=g(B)$.
	
	Since $(g',h')$ is a characteristic of $(H', B_{t'}\setminus X)$, 
	$(F_1, U)$ is block-wise partially label-isomorphic to $g(B)$.
	Also, since $(g,h)$ is valid, $(F_2, U)$ is block-wise partially label-isomorphic to $g(B)$.
	As $(g',h')$ is a restriction of $(g,h)$, 
	for $B_1\in \Block (t', X)$ and $B_2\in \Block(t, X)$ with $V(B_1)\subseteq V(B_2)$ and $v\in V(B_2)$, 
	every vertex in $g'(B_1)$ with label in $h'(B_1)$ is not adjacent to the vertex in $g'(B_1)$ with label $L(v)$.
	Because of this condition, the second condition of being block-wise $g(B)$-compatible is also satisfied.
	
	By Proposition~\ref{prop:sumofchordalgraphs2}, $F$ is partially label-isomorphic to $g(B)$.

\item (Completeness condition)

	This follows from  the fact that $(g',h')$ is a restriction of $(g,h)$ and it is a characteristic of $(H', B_{t'}\setminus X)$.
\end{enumerate}
All together we conclude that $(g,h)$ is a characteristic of $(H, B_t\setminus X)$ and therefore $\cX\in c[t,M]$.
\end{clproof}

	When $v\notin X$, we update $r[t,M]$ as follows. 
	Set $\cK:=\emptyset$ at the beginning.
	For every $(g',h')\in \cF(t',X,L_{res})$, we test whether $(g',h')$ is a restriction of $(g,h)$.
	Assume $(g',h')$ is a restriction of $(g,h)$, otherwise, we skip it.
	Now, for each $\cY\in r[t',(X,L_{res}, i, (g',h'))]$,
	we check the two conditions for $(g',h')$ and $\cY$ in Claim~\ref{claim:introduce}, 
	and if they are satisfied, then we add the set $\cX$ described in Claim~\ref{claim:introduce} to $\cK$;
	otherwise, we skip it.
	Since $\abs{\cF(t',X,L_{res})}\le 2^{\mathcal{O}(wd^2)}$ and $\abs{r[t',(X,L_{res}, i, (g',h'))]}\le w\cdot 2^{w-1}$, the whole procedure can be done in time $2^{\mathcal{O}(wd^2)}$.
	After we do this for all possible candidates, we take a representative set of $\cK$ using Proposition~\ref{prop:repalgorithm}, 
	and assign the resulting set to $r[t,M]$. 	
	Since $\abs{\cK}\le 2^{\mathcal{O}(wd^2)}$, we can apply Proposition~\ref{prop:repalgorithm} in time $2^{\mathcal{O}(wd^2)}$. 
	Also, we have $\abs{r[t,M]}\le w\cdot 2^{w-1}$.

	We claim that $r[t,M]\equiv c[t,M]$.
	Let $G_{out}:=G-(V(G_t)\setminus B_t)$. 
	Let $\cX\in c[t,M]$ and
	$(S, L')$ be a partial solution with respect to $(t, M, \cX)$, and
	suppose there exists $S_{out}\subseteq V(G)\setminus V(G_t)$ where 
	\[G-(S\cup X\cup S_{out})=(G_{t}-(X\cup S), B_{t}\setminus X)\oplus (G_{out}-(X\cup S_{out}), B_{t}\setminus X)\] is a $d$-labeled $\cP$-block graph respecting $(g,h)$. 
	Note that every $(B_{t'}\setminus X)$-block of $G-(S\cup X\cup S_{out})$ is chordal as such a block is also a $(B_t\setminus X)$-block of $G-(S\cup X\cup S_{out})$.
	Since $G-(S\cup X\cup S_{out})$ is chordal, by Proposition~\ref{prop:nocircuit},
	$\Part (G_{t'}-(X\cup S), B_{t'}\setminus X)\oplus \Part (G_{out}-(X\cup S_{out}), B_{t'}\setminus X)$ has no cycles.
	Recall that $M_{res}:=(X,L_{res}, i, (g',h'))$.
	As $r[t',M_{res}]\equiv c[t', M_{res}]$, there exist $\cY\in r[t',M_{res}]$ and a partial solution $(S', L'')$ with respect to $(t', M_{res}, \cY)$ such that
	\begin{itemize}
	\item $\Inc(\Comp(t',X), \cY)\sim \Part (G_{t'}-(X\cup S'), B_{t'}\setminus X)$, and
	\item $\Part (G_{t'}-(X\cup S'), B_{t'}\setminus X)\oplus \Part (G_{out}-(X\cup S_{out}), B_{t'}\setminus X)$ has no cycles.
	\end{itemize}
	By \cref{thm:equivalence2}, 
	$G-(S'\cup X\cup S_{out})$ is also a $d$-labeled $\cP$-block graph respecting $(g,h)$. 

	By the update procedure, the partition $\cX_1$ where $\Inc(\Comp(t,X), \cX_1)\sim \Part (G_{t}-(X\cup S'), B_{t}\setminus X)$
	 is added to the set $\cK$,
	and there exist $\cX_2\in r[t,M]$ and a partial solution $(S'', L''')$ with respect to $(t, M, \cX_2)$ such that 
	$G-(S''\cup X\cup S_{out})$ is a $d$-labeled $\cP$-block graph respecting $(g,h)$.
	This shows that $r[t,M]\equiv c[t,M]$.

\medskip\medskip
\noindent\textbf{2) $t$ is a forget node with child $t'$ and $B_{t'}\setminus B_t=\{v\}$:}
\medskip

	A pair $(g', h')\in \cF(t', X, L')$ is called an \emph{extension} of $(g,h)$ if 
    \begin{itemize}
    		\item $(g', h')$ is valid and $L'$ is an extension of $L$ on $B_{t'}\setminus X$, 
        \item for $B_1\in \Block(t,X)$ and $B_2\in \Block (t',X)$ with $V(B_1)\subseteq V(B_2)$, 
            \begin{itemize}
                \item $g'(B_2)=g(B_1)$,
                \item if $v\notin V(B_2)$, then $h(B_1)=h'(B_2)$,
                \item if $v\in V(B_2)$, then $h(B_1)$ is the union of $\{L'(v)\}$ and the set of labels in $N_{g(B_1)}(A)$ that appear in $h'(B_2)$ where $A$ is the set of vertices in $g(B_1)$ with labels in $L(B_1)$.
                \end{itemize}
    \end{itemize}

We show the following.

	\begin{claim}\label{claim:forget}
	 For every $\cX\in \Parti (t,X)$, $\cX\in c[t,M]$ if and only if 
	one of the following holds:
	\begin{enumerate}[(i)]
	\item $\cX\in c[t',(X\cup \{v\},L, i-1, (g,h))]$, or
	\item there exist an extension $L_{ext}$ of $L$ on $B_{t'}\setminus X$, an extension $(g',h')$ of $(g,h)$ in $\cF(t',X, L_{ext})$ with respect to $L_{ext}$, and
	$\cY\in c[t',(X,L_{ext}, i, (g',h'))]$ such that 
	$\cX$ is the partition obtained from $\cY$ by replacing the component $U$ of $G[B_{t'}\setminus X]$ containing $v$ with 
	the components of $G[B_t\setminus X]$ contained in $U$.
	\end{enumerate}
	\end{claim}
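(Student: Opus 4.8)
The plan is to prove the equivalence by distinguishing, for a partial solution witnessing $\cX\in c[t,M]$, whether the forgotten vertex $v$ is deleted; the first subcase yields item~(i) and the second yields item~(ii), and conversely (i) and (ii) each produce a witness at $t$. I use freely that $G_t=G_{t'}$ as graphs and $v\in V(G_t)\setminus B_t$. \textit{Deletion subcase.} If $(S,L_t)$ witnesses $\cX\in c[t,M]$ with $v\in S$, set $X':=X\cup\{v\}$ and $S':=S\setminus\{v\}$; then $G_t-(X'\cup S')=G_t-(X\cup S)$, $B_{t'}\setminus X'=B_t\setminus X$, $\Comp(t',X')=\Comp(t,X)$ and $\Block(t',X')=\Block(t,X)$, so $(S',L_t)$ is a partial solution with respect to $(t',(X',L,i-1,(g,h)),\cX)$; the converse is the same identity. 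Hence a witness deleting $v$ exists if and only if~(i) holds, and I may assume from now on that $v$ is kept.

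\textit{Forward direction, $v$ kept.} Let $(S,L_t)$ witness $\cX\in c[t,M]$ with $v\notin S$, and put $H:=G_t-(X\cup S)=G_{t'}-(X\cup S)$ and $L_{ext}:=L_t|_{B_{t'}\setminus X}$, an extension of $L$ to a $d$-labeling of $G[B_{t'}\setminus X]$. Since $H$ is a $\cP$-block graph, I would produce a characteristic $(g',h')$ of $(H,B_{t'}\setminus X)$ that reuses the data of $(g,h)$: for $B_2\in\Block(t',X)$ that contains some $B_1\in\Block(t,X)$ put $g'(B_2):=g(B_1)$ --- this is well defined, since any two such $B_1$ sit in a common $S$-block of $H$, on which the coincidence condition of $(g,h)$ forces equal $g$-values --- and for the remaining $B_2$ (which must be $K_2$'s through $v$ whose $S$-block has no edge of $G[B_t\setminus X]$) let $g'(B_2)$ be that $S$-block with its $L_t$-labels; define $h'$ from $H$ via the neighborhood condition. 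The substantive step is to verify that $(g',h')$ is an \emph{extension} of $(g,h)$: validity is clear and the $g'$-equation holds by construction, so only the two $h$-equations remain. If $v\notin V(B_2)$ then $B_1=B_2$ as graphs (two blocks sharing $\ge 2$ vertices coincide), their $S$-block $U$ in $H$ is the same under either boundary, and, using that $U$ is chordal together with \cref{lem:chordalseparator}, one argues that $v$ is not a neighbor of $V(B_1)$ in $U$, so $h(B_1)=h'(B_2)$. If $v\in V(B_2)$ then $B_1\subsetneq B_2$, $v$ is interior in their common $S$-block $U$, and combining the label-isomorphism and completeness conditions of $(g,h)$ at $v$ with the chordal structure of $U$ one shows that $L_{ext}\bigl(N_U(V(B_1))\setminus(B_t\setminus X)\bigr)$ equals $\{L_{ext}(v)\}$ together with those neighbor-labels of $B_1$ in $g(B_1)$ that appear in $h'(B_2)$, which is the asserted formula for $h(B_1)$. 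Lastly, pick $\cY$ with $\Inc(\Comp(t',X),\cY)\sim\Part(H,B_{t'}\setminus X)$; since $\Part$ depends only on $H$ and its boundary, and $H$ does not change, the sole difference between $\Comp(t,X)$ and $\Comp(t',X)$ is that the component of $G[B_{t'}\setminus X]$ through $v$, upon deleting $v$, breaks into components that all lie in one part, so $\cX$ is $\cY$ with that component replaced by those pieces --- exactly~(ii).

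\textit{Backward direction, $v$ kept.} Conversely, given $L_{ext}$, an extension $(g',h')$ of $(g,h)$ in $\cF(t',X,L_{ext})$, and $\cY\in c[t',(X,L_{ext},i,(g',h'))]$ witnessed by $(S,L_{t'})$, I would show that $(S,L_{t'})$ witnesses $\cX\in c[t,M]$ for the $\cX$ prescribed in~(ii). The labeling and $\cP$-block-graph conditions are immediate and the partition condition holds by the same $\Part$-versus-$\Comp$ observation as above, so the only real point is that $(g,h)$ is a characteristic of $(H,B_t\setminus X)$, where $H:=G_t-(X\cup S)$: each $S$-block of $(H,B_t\setminus X)$ is also an $S$-block of $(H,B_{t'}\setminus X)$, so the label-isomorphism and coincidence conditions transfer from $(g',h')$ through $g'(B_2)=g(B_1)$, and the neighborhood and completeness conditions (at interior vertices and at one-vertex components of $G[B_t\setminus X]$) are recovered by inverting the extension's $h$-equations, once more invoking chordality of $S$-blocks and \cref{lem:chordalseparator} to track $v$ passing from the boundary into the interior of its $S$-block.

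\textit{Main obstacle.} The hard part is the neighborhood bookkeeping in the $v$-kept case --- above all the correctness of the formula for $h(B_1)$ when $v\in V(B_2)$ --- because forgetting $v$ converts it from a boundary vertex, invisible to the $h$-sets, into an interior vertex of its $S$-block whose incidences must now be recorded. Reconciling this with the stored characteristic rests on the chordal structure of $S$-blocks, in particular the consequence of \cref{lem:chordalseparator} that an interior vertex of a $2$-connected chordal graph is adjacent to every block of the graph obtained by deleting it, which is what makes ``$v$ meets $V(B_1)$ inside its $S$-block'' decidable from $(g,h)$. Carrying out these two $h$-equations and their inverses through all the degenerate ways in which blocks of $G[B_{t'}\setminus X]$ can merge or shrink upon deleting $v$ is the bulk of the work.
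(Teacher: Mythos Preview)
Your overall strategy is exactly the paper's: split on whether $v\in S$, with the deletion case yielding (i) directly, and the kept case yielding (ii) by reading off the natural characteristic of $(H,B_{t'}\setminus X)$ and checking it is an extension of $(g,h)$. The paper's own proof is much terser than yours --- it constructs $(g',h')$ from $H$ and simply asserts that it is an extension, and for the converse writes only ``It is not difficult to verify that $(g,h)$ is the characteristic\ldots''. Your proposal is more explicit about which verifications carry the weight, correctly identifying the $h$-equations as the crux.

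One point in your argument does not go through as stated. In the case $v\notin V(B_2)$ you claim that chordality of $U$ together with \cref{lem:chordalseparator} forces $v\notin N_U(V(B_1))$, hence $h(B_1)=h'(B_2)$. But this can fail: take $G[B_{t'}\setminus X]$ to be the path $a\text{--}v\text{--}b\text{--}c$, so $B_1=B_2=\{b,c\}$ and $v\notin V(B_2)$; now let $H$ contain a vertex $d\in V(G_t)\setminus B_{t'}$ adjacent to $a,b,c,v$ so that the block $U$ of $H$ containing $bc$ is the (chordal, $2$-connected) graph on $\{a,v,b,c,d\}$. Then $v\in N_U(V(B_1))$ via the edge $vb$, so $L(v)\in h(B_1)$ while $L(v)\notin h'(B_1)$, and the extension condition ``$h(B_1)=h'(B_2)$'' cannot be met by any characteristic at $t'$. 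The paper's proof does not address this case either --- it never verifies the $h$-conditions --- so this is a gap you inherit from the paper's framework (and plausibly reflects an imprecision in the paper's definition of ``extension'') rather than a flaw specific to your line of argument.
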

	\begin{clproof}
	We first show the backward direction.
	If $\cX\in c[t',(X\cup \{v\},L, i-1, (g,h))]$, 
	then $\cX\in c[t,M]$, as we can put $v$ into the partial solution.
	Suppose statement (ii) holds.
	Then there exists a partial solution $(S, L')$ with respect to $(t, (X,L_{ext}, i, (g',h')), \cY)$. 
	It is not difficult to verify that $(g,h)$ is the characteristic of $(G_t-(X\cup S), B_{t}\setminus X)$ and 
	$\Inc(\Comp(t,X), \cX)\sim \Part (G_{t}-(X\cup S), B_{t}\setminus X)$. 
	Thus, $\cX\in c[t,M]$.

	For the other direction, suppose $\cX\in c[t,M]$, and 
	let $(S, L')$ be a partial solution with respect to $(t, M, \cX)$. 
	If $v\in S$, then $\cX\in c[t', (X\cup \{v\}, L, i-1, (g,h))]$, and the statement (i) holds.
	Thus, we may assume that $v\notin S$. 
	
	Let $L_{ext}:=L'|_{B_{t'}\setminus X}$ and $\cY\in \Parti (t',X)$ such that
	$\Inc (\Comp(t', X), \cY)\sim \Part (G_t-(X\cup S), B_{t'}\setminus X)$.
	Since $G_t-(X\cup S)=G_{t'}-(X\cup S)$, one can observe that $\cX$ is the partition obtained from $\cY$ 
	by replacing the component $U$ of $G[B_t\setminus X]$ containing $v$ with 
	the components of $B_{t'}\setminus X$ contained in $U$.
	We focus on showing that there exists an extension $(g',h')$ of $(g,h)$ in $\cF(t',X, L_{ext})$
	that is the characteristic of $(G_{t'}-(X\cup S), B_{t'}\setminus X)$.

	We construct $(g', h')$ as follows. 
	\begin{itemize}
	\item Suppose there is a block $B\in \Block (t',X)$ containing $v$. If there exists $B'\in \Block(t,X)$ where $B$ and $B'$ are contained in the same block of $G_t-(X\cup S)$, 
	then we let $g'(B)=g(B')$.
	Otherwise, we know that the block of $G_{t'}-(X\cup S)$ containing $v$ is label-isomorphic to a graph in $\cU_d$; let $g'(B)$ be this graph.
	\item For $B\in \Block (t',X)$ with $v\notin V(B)$, let $g'(B)=g(B)$.
	\item Also, for every $B\in \Block (t', X)$, let $h'(B)$ be the set of labels that appear in the neighbors of vertices of $B$ 
	that are in the block of $G_{t'}-(S\cup X)$ containing $B$ and are not in $B_{t'}\setminus X$.
	\end{itemize}
	Then $(g',h')$ is an extension of $(g,h)$, and $\cY\in c[t',(X, L_{ext}, i, (g',h'))]$.
	\end{clproof}

	We update $r[t,M]$ as follows. 
	Set $\cK:=\emptyset$.
	First, we add all partitions in $r[t',(X\cup \{v\},L, i-1, (g,h))]$ to $\cK$. 
	At the second step,
	for every extension $L_{ext}$ of $L$ on $B_{t'}\setminus X$ and every $(g',h')\in \cF(t',X,L_{ext})$, 
	we test whether $(g',h')$ is an extension of $(g,h)$. 
	Note that the condition for $h$ can be checked in polynomial in $d$ and $w$.
	In the case when $(g',h')$ is an extension of $(g,h)$ with respect to $L_{ext}$,
	for all partitions $\cY\in r[t',(X, L_{ext}, i, (g',h'))]$, we add the set $\cX$ satisfying the second statement in Claim~\ref{claim:forget} to $\cK$, 
	and otherwise, we skip this pair.
	This can be done in time $2^{\mathcal{O}(wd^2)}$.
 	After we do this for all possible candidates, we take a representative set of $\cK$ using Proposition~\ref{prop:repalgorithm}, 
	and assign the resulting set to $r[t,M]$. 
	Notice that $\abs{\cK}\le 2^{\mathcal{O}(wd^2)}$. By Proposition~\ref{prop:repalgorithm}, 
	the procedure of obtaining a representative set can be done in time $2^{\mathcal{O}(wd^2)}$, and we have $\abs{r[t,M]}\le w\cdot 2^{w-1}$.

	We claim that $r[t,M]\equiv c[t,M]$.
	Let $\cX\in c[t,M]$ and $(S, L')$ be a partial solution with respect to $(t, M, \cX)$ and 
	$S_{out}\subseteq V(G)\setminus V(G_t)$ where $G-(S\cup X\cup S_{out})$ is a $d$-labeled $\cP$-block graph respecting $(g,h)$. 
	Let $G_{out}:=G-(V(G_{t'})\setminus B_{t'})$. 
	The graph $G-(S\cup X\cup S_{out})$ can be seen as 
	$(G_{t'}-(X\cup S), B_{t'}\setminus X)\oplus (G_{out}-(X\cup S_{out}), B_{t'}\setminus X)$.

	Note that every $(B_{t'}\setminus X)$-block of $G-(S\cup X\cup S_{out})$ is chordal.
	Since $G-(S\cup X\cup S_{out})$ is chordal, by Proposition~\ref{prop:nocircuit},
	$\Part (G_{t'}-(X\cup S), B_{t'}\setminus X)\oplus \Part (G_{out}-(X\cup S_{out}), B_{t'}\setminus X)$ has no cycles.
	As $r[t',(X, L_{ext}, i, (g',h')]\equiv c[t', (X, L_{ext}, i, (g',h')]$, there exists $\cY\in r[t', (X, L_{ext}, i, (g',h')]$ and a partial solution $(S', L'')$ with respect to $(t', (X, L_{ext}, i, (g',h'), \cY)$ such that
	$\Inc(\Comp(t',X), \cY)\sim \Part (G_{t'}-(X\cup S'), B_{t'}\setminus X)$, and thus
	$\Part (G_{t'}-(X\cup S'), B_{t'}\setminus X)\oplus \Part (G_{out}-(X\cup S_{out}), B_{t'}\setminus X)$ has no cycles.
	By \cref{thm:equivalence2}, 
	$G-(S'\cup X\cup S_{out})$ is a $d$-labeled $\cP$-block graph respecting $(g',h')$ for some extension $(g',h')$ of $(g,h)$. 
	By the procedure, the partition $\cX_1$ where $\Inc(\Comp(t,X), \cX_1)\sim \Part (G_{t}-(X\cup S'), B_{t}\setminus X)$
	 is added to the set $\cK$,
	and there exists $\cX_2\in r[t,M]$ and a partial solution $(S'', L''')$ with respect to $(t, M, \cX_2)$ such that 
	$G-(S''\cup X\cup S_{out})$ is a $d$-labeled $\cP$-block graph respecting $(g,h)$.
	This shows that $r[t,M]\equiv c[t,M]$.

\medskip\medskip
\noindent\textbf{3) $t$ is a join node with two children $t_1$ and $t_2$:}
\medskip

We show the following.

	\begin{claim}\label{claim:join} 
	For every $\cX\in \Parti (t,X)$, $\cX\in c[t,M]$ if and only if 
	there exist integers $i_1, i_2$ with $i_1+i_2=i$, 
	$(g,h_1)\in \cF(t_1,X,L)$, $(g,h_2)\in \cF(t_2,X, L)$, $\cX_1\in c[t_1,(X,L,i_1, (g,h_1))]$, and $\cX_2\in c[t_2,(X,L,i_2, (g,h_2))]$  such that
	\begin{itemize}
	\item $\Inc (\Comp(t,X), \cX_1\cup \cX_2)$ has no cycles, 
	\item $\cX=\cX_1\uplus \cX_2$, and
	\item for each $B\in \Block (t,X)$, $h_1(B)\cap h_2(B)=\emptyset$ and $h(B)=h_1(B)\cup h_2(B)$, 
	and for $\ell_1\in h_1(B)$ and $\ell_2\in h_2(B)$, the vertices with labels $\ell_1$ and $\ell_2$ in $g(B)$ are not adjacent.
	\end{itemize}
	\end{claim}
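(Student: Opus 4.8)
The plan is to prove the equivalence by matching a partial solution at the join node $t$ with a pair of partial solutions at its two children $t_1,t_2$, the matching being forced by the decomposition $G_t=G_{t_1}\cup G_{t_2}$ with $G_{t_1}\cap G_{t_2}=G[B_t]$. Throughout I use that $B_t=B_{t_1}=B_{t_2}$, so $\Block(t,X)=\Block(t_1,X)=\Block(t_2,X)$ and $\Comp(t,X)=\Comp(t_1,X)=\Comp(t_2,X)$; that there is no edge of $G$ between $V(G_{t_1})\setminus B_t$ and $V(G_{t_2})\setminus B_t$; and that a vertex of $V(G_{t_j})\setminus B_t$ has the same closed neighbourhood in $G_{t_j}$ as in $G_t$. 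I write $B_t\setminus X$ for the boundary in all the structural lemmas of Section~\ref{sec:chordalsum} (so ``$(B_t\setminus X)$-block'' is their ``$S$-block''), reserving $S$ for the deletion set.

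For the forward direction, let $(S,L')$ be a partial solution with respect to $(t,M,\cX)$. Put $S_j:=S\cap(V(G_{t_j})\setminus B_t)$ and $i_j:=\abs{S_j}$; then $i_1+i_2=i$ and
\[
G_t-(X\cup S)=\bigl(G_{t_1}-(X\cup S_1),\,B_t\setminus X\bigr)\oplus\bigl(G_{t_2}-(X\cup S_2),\,B_t\setminus X\bigr).
\]
Since $\cP$ is block-hereditary, every block of $G_{t_j}-(X\cup S_j)$ is a biconnected induced subgraph of a block of the $\cP$-block graph $G_t-(X\cup S)$, hence lies in $\cP$ and has at most $d$ vertices; so $G_{t_j}-(X\cup S_j)$ is a $\cP$-block graph and the restriction of $L'$ to it is a block $d$-labeling. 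I then define $h_j$ on $\Block(t_j,X)$ by the neighbourhood condition inside $G_{t_j}-(X\cup S_j)$, and check that $(g,h_j)$ is a characteristic of $(G_{t_j}-(X\cup S_j),B_t\setminus X)$: for $B\in\Block(t,X)$ the $(B_t\setminus X)$-block of $G_{t_j}-(X\cup S_j)$ containing $B$ is an induced subgraph of the $(B_t\setminus X)$-block $F$ of $G_t-(X\cup S)$ containing $B$, and a vertex of $G_{t_j}-(X\cup S_j)$ outside $B_t$ is a cut vertex of it only if it is a cut vertex of $G_t-(X\cup S)$; these let me transport conditions (a)--(d) from $(g,h)$ to $(g,h_j)$. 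Taking $\cX_j$ with $\Inc(\Comp(t_j,X),\cX_j)\sim\Part(G_{t_j}-(X\cup S_j),B_t\setminus X)$ gives $\cX_j\in c[t_j,(X,L,i_j,(g,h_j))]$. Finally, connectivity of components of $G[B_t\setminus X]$ inside $G_t-(X\cup S)$ is the transitive closure of connectivity inside the two sides, so $\cX=\cX_1\uplus\cX_2$; acyclicity of $\Inc(\Comp(t,X),\cX_1\cup\cX_2)$ follows from \cref{prop:nocircuit} applied to the chordal graph $G_t-(X\cup S)$; the disjointness and non-adjacency constraints on $h_1,h_2$ hold because two vertices of $F$ lying strictly inside different sides are distinct, non-adjacent, and carry distinct labels (as $F$ is partially label-isomorphic to $g(B)$), so their labels cannot be adjacent in $g(B)$; and $h(B)=h_1(B)\cup h_2(B)$ follows from \cref{lem:edgesblock} together with chordality of $G_t-(X\cup S)$, which forces every neighbour of $V(B)$ in $F$ lying strictly in the $G_{t_j}$-side to be a neighbour of $V(B)$ already in the $(B_t\setminus X)$-block of $G_{t_j}-(X\cup S_j)$ containing $B$.

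For the backward direction, given $(i_1,i_2,(g,h_1),(g,h_2),\cX_1,\cX_2)$ with the three displayed properties, pick partial solutions $(S_1,L_1')$, $(S_2,L_2')$ realising $\cX_1\in c[t_1,(X,L,i_1,(g,h_1))]$ and $\cX_2\in c[t_2,(X,L,i_2,(g,h_2))]$; as $S_j\subseteq V(G_{t_j})\setminus B_t$ and both labelings restrict to $L$ on $B_t\setminus X$, the two boundaried graphs are compatible, $S:=S_1\cup S_2$ has size $i$, and $G_t-(X\cup S)$ is their sum. From $\cX=\cX_1\uplus\cX_2$ and acyclicity of $\Inc(\Comp(t,X),\cX_1\cup\cX_2)$ we get that $\Part(G_{t_1}-(X\cup S_1),B_t\setminus X)\oplus\Part(G_{t_2}-(X\cup S_2),B_t\setminus X)$ has no cycles. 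From here the argument mirrors the proof of \cref{thm:equivalence2}: for each $B\in\Block(t,X)$ with $F$ the $(B_t\setminus X)$-block of $G_t-(X\cup S)$ containing $B$, write $F$ as the sum of $F\cap(G_{t_1}-(X\cup S_1))$ and $F\cap(G_{t_2}-(X\cup S_2))$, transfer acyclicity by \cref{lem:nocycle}, observe that every non-trivial block of $G[V(F)\cap(B_t\setminus X)]$ belongs to $\Block(t,X)$ and, by \cref{lem:gvlauesblock}, has $g$-value $g(B)$, so that the third displayed condition is precisely block-wise $g(B)$-compatibility; then \cref{prop:sumofchordalgraphs2} yields that $F$ is partially label-isomorphic to $g(B)$, \cref{prop:nocircuit} yields that $G_t-(X\cup S)$ is chordal, and the remaining characteristic conditions (coincidence, neighbourhood with $h(B)=h_1(B)\cup h_2(B)$, completeness) as well as the $\cP$-block property (every non-$(B_t\setminus X)$-block lies in one side) follow as in that proof. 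Since $\cX$ matches $\Part(G_t-(X\cup S),B_t\setminus X)$ by construction, $\cX\in c[t,M]$.

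The main obstacle is not the combinatorial skeleton, which is clean because the bags at $t,t_1,t_2$ coincide, but the bookkeeping for the neighbourhood functions $h$ under taking traces (forward) and gluing (backward). In both directions, relating $N_F(V(B))$ inside the combined $(B_t\setminus X)$-block $F$ to the neighbourhoods inside the two side blocks requires ruling out a ``dangling'' block that is split off at a boundary cut vertex on one side but reconnected through the other side; such a configuration would create a chordless cycle inside $F$, so it is excluded by chordality of $G_t-(X\cup S)$, but making this rigorous needs the $(B_t\setminus X)$-block structure results \cref{lem:edgesblock,lem:inducedpathsblock} rather than a one-line argument. The degenerate case $\abs{V(F)}\le 2$, where an $(B_t\setminus X)$-block reduces to a single edge of $G[B_t\setminus X]$, must also be treated separately, but is routine.
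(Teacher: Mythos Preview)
Your proposal is correct and follows essentially the same route as the paper: in the backward direction you invoke \cref{lem:gvlauesblock}, \cref{lem:nocycle}, and \cref{prop:sumofchordalgraphs2} exactly as the paper does to verify the four characteristic conditions, and then \cref{prop:nocircuit} for chordality. The paper dismisses the forward direction as ``straightforward''; your more detailed treatment there---splitting $S$ and defining $h_j$ by the neighbourhood rule on each side---is the intended expansion, and you correctly flag the one non-trivial point (showing $N_F(V(B))\setminus(B_t\setminus X)$ decomposes into the two side neighbourhoods, and that completeness transfers) as needing the acyclicity of $\Part\oplus\Part$ together with \cref{lem:edgesblock}, which the paper leaves implicit.
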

	\begin{clproof}
	The forward direction is straightforward.
	For the converse direction, suppose there exist integers $i_1, i_2$ with $i_1+i_2=i$, and
	$(g,h_1)$, $(g,h_2)$, and partitions $\cX_1, \cX_2$ as specified in the statement.
	For each $j\in \{1,2\}$, let $M_j:=(X,L,i_j, (g,h_j))$ and $(S_j, L_j)$ be a partial solution with respect to $(t_j,M_j, \cX_j)$.
	Furthermore, let $H_j:=G_{t_j}-(X\cup S_j)$, $H:=H_1\cup H_2$, and $L_H:=L_1\oplus L_2$.

	We claim that $(g,h)$ is a characteristic of $(H, B_t\setminus X)$. 
	
\begin{enumerate}
\item (Coincidence condition)
	
	Let $i\in \{1,2\}$. Since $(g,h_i)$ is a characteristic of $H_i$, 
	if $B_1, B_2\in \Block (t,X)$ are contained in the same $(B_t\setminus X)$-block of $H_i$, 
	$g(B_1)=g(B_2)$.
	Since  
	$\Inc (\Comp(t,X), \cX_1\cup \cX_2)$ and equivalently, $\Part (H_1, B_t\setminus X)\oplus \Part (H_2, B_t\setminus X)$ have no cycles, 
	by Lemma~\ref{lem:gvlauesblock}, 
	if $B_1, B_2\in \Block (t,X)$ are contained in the same $(B_t\setminus X)$-block of $H$, 
	then we have $g(B_1)=g(B_2)$.

\item (Neighborhood condition)

	This follows from the assumption that $h(B)=h_1(B)\cup h_2(B)$ for each $B\in \Block (t, X)$. 

 \item (Label-isomorphism condition)
 
	Let $B\in \Block(t,X)$ and $F$ be the $(B_t\setminus X)$-block of $H$ containing $B$.
	We show that $F$ is partially label-isomorphic to $g(B)$. Let $U:=V(F)\cap (B_t\setminus X)$.
	
Since $\Part(H_1, B_t\setminus X)\oplus \Part(H_2, B_t\setminus X)$ has no cycles, by Lemma~\ref{lem:nocycle}, 
	$\Part (F\cap H_1,U) \oplus \Part (F\cap H_2, U)$ has no cycles.
	Since each $(g,h_j)$ is a characteristic of $(H_j, B_{t_j}\setminus X)$, 
	$(F\cap H_1, U)$ and $(F\cap H_2, U)$ are block-wise partially label-isomorphic to $g(B)$.	
	Moreover, $(F\cap H_1, U)$ and $(F\cap H_2, U)$ are block-wise $g(B)$-compatible, 
	because of the assumption that for each $B\in \Block (t,X)$, $h_1(B)\cap h_2(B)=\emptyset$ and $h(B)=h_1(B)\cup h_2(B)$, 
	and for $\ell_1\in h_1(B)$ and $\ell_2\in h_2(B)$, the vertices with labels $\ell_1$ and $\ell_2$ in $g(B)$ are not adjacent.
	By Proposition~\ref{prop:sumofchordalgraphs2}, $F$ is partially label-isomorphic to $g(B)$.

\item (Completeness condition)
	
	This follows from the fact that each $(g,h_j)$ is a characteristic of $(H_j, B_{t_j}\setminus X)$.

\end{enumerate}
This proves that $(g,h)$ is a characteristic of $(H, B_t\setminus X)$. That is, $(S_1\cup S_2, L_1\oplus L_2)$ is a partial solution with respect to $(t,M,\cX)$, and thus we have $\cX\in c[t,M]$.
\end{clproof}

	We update $r[t,M]$ as follows. Set $\cK:=\emptyset$.
	We fix integers $i_1, i_2$ with $i_1+i_2=i$, $(g, h_1)\in \cF(t_1, X, L)$ and $(g,h_2)\in \cF(t_2, X, L)$.
	We can check in time $\mathcal{O}(wd^2)$ the condition that 
	\begin{itemize}
	\item for each $B\in \Block (t,X)$, $h_1(B)\cap h_2(B)=\emptyset$ and $h(B)=h_1(B)\cup h_2(B)$, 
	and for $\ell_1\in h_1(B)$ and $\ell_2\in h_2(B)$, the vertices with labels $\ell_1$ and $\ell_2$ in $g(B)$ are not adjacent.
	\end{itemize}
	If these pairs do not satisfy this condition, then we skip them. We assume that these pairs satisfy this condition.
	For $\cX_1\in r[t_1, M_1]$ and $\cX_2\in r[t_2, M_2]$, 
	we test whether $\Inc (\Comp(t,X), \cX_1\cup \cX_2)$ has no cycles and $\cX=\cX_1\uplus \cX_2$.
	We can check this in time $\mathcal{O}(w)$.
	If they satisfy the two conditions, then we add the partition $\cX$ to the set $\cK$, and otherwise, we do not add it.
	After we do this for all possible candidates, we take a representative set of $\cK$ using Proposition~\ref{prop:repalgorithm}, 
	and assign the resulting set to $r[t,M]$. 
	The total running time is $k\cdot 2^{\mathcal{O}(wd^2)}$ because $\abs{\cF(t_j, X, L)}\le 2^{\mathcal{O}(wd^2)}$ and $\abs{ r[t_j,M_j]}\le w\cdot 2^{w-1}$ for each $j\in \{1,2\}$.
	We have $\abs{r[t,M]}\le w\cdot 2^{w-1}$.

	We claim that
	$r[t,M]\equiv c[t,M]$.
	Let $\cX\in c[t,M]$ and 
	$(S, L')$ be a partial solution with respect to $(t, M, \cX)$ and 
	$S_{out}\subseteq V(G)\setminus V(G_t)$ where $G-(S\cup X\cup S_{out})$ is a $d$-labeled $\cP$-block graph respecting $(g,h)$. 
	Let $H_{out}:=G-(V(G_{t})\setminus B_{t})-(X\cup S_{out})$, and for each $j\in \{1,2\}$, let $S_j=V(H_j)\cap S$.
	Note that every $(B_{t}\setminus X)$-block of $G-(S\cup X\cup S_{out})$ is chordal.

	We first consider $G-(S\cup X\cup S_{out})$ as the sum
	$(H_1, B_{t}\setminus X)\oplus (H_2\cup H_{out}, B_{t}\setminus X)$.
	Since $G-(S\cup X\cup S_{out})$ is chordal, by Proposition~\ref{prop:nocircuit},
	$\Part (H_1, B_{t}\setminus X)\oplus \Part (H_2\cup H_{out}, B_{t}\setminus X)$ has no cycles.
	As $r[t_1,M_1]\equiv c[t_1, M_1]$, 
	there exists $\cY_1\in r[t_1,M_1]$ and a partial solution $(S_1', L_1)$ with respect to $(t_1, M_1, \cY_1)$ such that
	$\Inc(\Comp(t,X), \cY_1)\sim \Part (G_{t}-(X\cup S_1'), B_{t}\setminus X)$, and
	\[\Part (G_{t_1}-(X\cup S_1'), B_{t}\setminus X)\oplus \Part (H_2\cup H_{out}, B_{t}\setminus X)\] has no cycles.
	By \cref{thm:equivalence2}, 
	$G-(S_1'\cup S_2\cup X\cup S_{out})$ is a $d$-labeled $\cP$-block graph respecting $(g,h)$. 
	Let $H_1':=G_{t_1}-(X\cup S_1')$.
	In a similar manner, 
	we consider $G-(S_1'\cup S_2\cup X\cup S_{out})$  as the sum 
	$(H_1'\cup H_{out}, B_{t}\setminus X)\oplus (H_2, B_{t}\setminus X)$.
	Since $G-(S_1'\cup S_2\cup X\cup S_{out})$ is chordal, by Proposition~\ref{prop:nocircuit},
	$\Part (H_1'\cup H_{out}, B_{t}\setminus X)\oplus \Part (H_2, B_{t}\setminus X)$ has no cycles.
	As $r[t_2,M_2]\equiv c[t_2, M_2]$, 
	there exist $\cY_2\in r[t_2,M_2]$ and a partial solution $(S_2', L_2)$ with respect to $(t_2, M_2, \cY_2)$ such that
	$\Inc(\Comp(t,X), \cY_2)\sim \Part (G_{t}-(X\cup S_2'), B_{t}\setminus X)$, and
	\[\Part (H_1\cup H_{out}, B_{t}\setminus X)\oplus \Part (G_{t_2}-(X\cup S_2'), B_{t}\setminus X)\] has no cycles.
	By \cref{thm:equivalence2}, 
	$G-(S_1'\cup S_2'\cup X\cup S_{out})$ is a $d$-labeled $\cP$-block graph respecting $(g,h)$. 
	Thus the partition $\cX_1=\cY_1\uplus \cY_2$ which satisfies $\Inc(\Comp(t,X), \cX_1)\sim \Part (G_{t}-(X\cup S_1'\cup S_2'), B_{t}\setminus X)$
	 is added to the set $\cK$.
	And there exists $\cX_2\in r[t,M]$ and a partial solution $(S'', L''')$ with respect to $(t, M, \cX_2)$ such that 
	$G-(S''\cup X\cup S_{out})$ is a $d$-labeled $\cP$-block graph respecting $(g,h)$.
	This shows that $r[t,M]\equiv c[t,M]$.

\medskip
	
	\textbf{Total running time.} 
	We denote $\abs{V(G)}$ by $n$. Note that the number of nodes in $T$ is $\mathcal{O}(wn)$ by Lemma~\ref{lem:nicetd}.
	For fixed $t\in V(T)$, there are at most $2^{w+1}$ possible choices for $X\subseteq B_t$, and 
	for fixed $X\subseteq B_t$, there are at most $d^{w+1}$ possible functions $L$.
	Furthermore, the size of $\cF(t,X,L)$ is bounded by $2^{\mathcal{O}(wd^2)}$.
	Thus, there are $\mathcal{O}(n\cdot k\cdot \max(2,d)^{w+1}\cdot 2^{\mathcal{O}(wd^2)})$ tables.
	
	In summary, the algorithm runs in time
	$\mathcal{O}(n\cdot k\cdot \max(2,d)^{w+1})\cdot 2^{\mathcal{O}(wd^2)}\cdot k=2^{\mathcal{O}(wd^2)}k^2n$.
\end{proof}

	We finish this section with a few remarks regarding \textsc{Bounded $\cP$-Component Vertex Deletion}.
	For this problem, we think of graphs as labeled graphs where each component consists of vertices with distinct labels from $1$ to $d$.
	Let $\Comp (G,S)$ be the set of components of $G[S]$.
	For such a graph $(G,S)$, we define a `characteristic' as a pair $(g, h)$ of functions 
	$g:\Comp (G,S) \rightarrow \cU_d$ and $h:\Comp (G,S) \rightarrow 2^{[d]}$
	satisfying the following,
	for $C\in \Comp (G,S)$ and the component $H$ of $G$ containing $C$, 
	\begin{enumerate}[(a)]
	\item (label-isomorphism condition) $H$ is partially label-isomorphic to $g(C)$, 
	\item (coincidence condition) for every $C'\in \Comp (G,S)$ where $C'$ is contained in $H$, $g(C')=g(C)$, 
	\item (neighborhood condition) $h(C)=L(N_{H}(V(C)))$, and
	\item (completeness condition) for every $w\in V(H)\setminus S$, 
	$H[N_H[w]]$ is label-isomorphic to $g(C)[N_{g(C)}[z]]$ where $z$ is the vertex in $g(C)$ with label $L(w)$.
	\end{enumerate}
	Then, by following similar, but simpler, arguments,
	one can also prove that \textsc{Bounded $\cP$-Component Vertex Deletion} can be solved in time $2^{\mathcal{O}(wd^2)}k^2n$.
	We omit the details.

\mainthma*

\section{Lower bound for fixed $d$}\label{sec:lowerbound}
We showed that \BPCVD and \BPBVD admit single-exponential time algorithms parameterized by treewidth, when $\cP$ is a class of chordal graphs. 
We now establish that, assuming the ETH, this is no longer the case when $\cP$ contains a graph that is not chordal.

In the \kkIS problem, one is given a graph $G=([k] \times [k],E)$ over the $k^2$ vertices of a \emph{$k$-by-$k$ grid}.
We denote by $\p{i}{j}$ with $i,j \in [k]$ the vertex of $G$ in the $i$-th \emph{row} and $j$-th \emph{column}.
The goal is to find an independent set of size $k$ in $G$ that contains exactly one vertex in each row.
The \PkkIS problem is similar but with the additional constraint that the independent set should also contain exactly one vertex per column.

\begin{theorem}
\begin{enumerate}[(1)]
\item     Let $\cP$ be a block-hereditary class of graphs that is polynomial-time recognizable.
	If $\cP$ contains the cycle graph on $\ell \ge 4$ vertices, then \BPBVD is not solvable in time $2^{o(w\log w)}n^{\cO(1)}$ on graphs with $n$ vertices and treewidth at most $w$ 
	even for fixed $d=\ell$, unless the ETH fails.
\item 
Let $\cP$ be a hereditary class of graphs that is polynomial-time recognizable.
If $\cP$ contains the cycle graph on $\ell \ge 4$ vertices, then \BPCVD is not solvable in time $2^{o(w\log w)}n^{\cO(1)}$ on graphs with $n$ vertices and treewidth at most $w$ even for fixed $d=\ell$, unless the ETH fails.
\end{enumerate} 
\end{theorem}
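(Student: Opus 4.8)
The plan is to reduce from \PkkIS, which by~\cite{LokshtanovMS11} cannot be solved in time $2^{o(k\log k)}k^{\cO(1)}$ unless the ETH fails: here the input is a graph $G=([k]\times[k],E)$ on the $k$-by-$k$ grid, with $k^2$ vertices and $\cO(k^4)$ edges, and one seeks an independent set of size $k$ meeting every row and every column exactly once. Given such $G$, I would construct in polynomial time an equivalent instance of \BPBVD (for item~(1)), respectively \BPCVD (for item~(2)), with $d$ fixed to $\ell$, on $\cO(k^4)$ vertices, and --- this is the whole point --- of treewidth $w=\cO(k)$. A $2^{o(w\log w)}n^{\cO(1)}$-time algorithm would then decide \PkkIS in time $2^{o(k\log k)}k^{\cO(1)}$, contradicting the ETH.

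The constructed graph is a long chain of $\cO(k^4)$ almost identical \emph{segments}, one per edge $e=\p{i}{j}\p{i'}{j'}$ of $G$, glued in a path-like fashion. Two consecutive segments share a \emph{separator} of $2k$ vertices, designed so that the part of the separator that survives a partial solution encodes a choice of one vertex of $G$ per row and per column, that is, a permutation $\pi\in S_k$; since there are $k!=2^{\Theta(k\log k)}$ such permutations, a separator of size $\Theta(k)$ has to distinguish $2^{\Theta(k\log k)}$ possible behaviours, and this is exactly the phenomenon that forces the $w\log w$ in the exponent. Each segment attaches only a constant number of further vertices (the constant depending only on $\ell$), one of which is the top of a copy of $C_\ell$, wired so that: (a) an intended partial solution deletes a fixed number of vertices inside the segment and is consistent with the \emph{same} $\pi$ at both of its bordering separators, so $\pi$ is propagated unchanged along the whole chain; and (b) the segment can be made legal within its budget if and only if it is \emph{not} the case that $\pi(i)=j$ and $\pi(i')=j'$ simultaneously, i.e.\ $e$ is not ``selected'' by $\pi$. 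The copy of $C_\ell$ is the gadget primitive exploiting $C_\ell\in\cP$: in the intended solutions every surviving block (item~(1)), respectively component (item~(2)), is a single vertex, an edge, or one of these $C_\ell$'s, all of which lie in $\cP$ because $\cP$ is block-hereditary (resp.\ hereditary) and contains $C_\ell$; whereas if the forbidden combination for some $e$ occurs, or if the separator data is altered between two consecutive segments, or if too few vertices are deleted, then some surviving block (resp.\ component) is forced to contain more than $\ell=d$ vertices, which is illegal regardless of which further graphs $\cP$ contains. With the deletion budget $k'$ set to the total number of intended deletions, the constructed instance is a \YES-instance if and only if $G$ has a permutation independent set of size $k$.

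For the treewidth bound, the chain yields a path decomposition whose bag at the interface of two segments is that $2k$-vertex separator together with the constantly many private vertices of the segment currently being processed, so $w=\cO(k)$; one also has $w=\Omega(k)$, but only the upper bound is used. The two items are proved by essentially the same construction; for \BPCVD one additionally keeps the gadgets connected, so that ``oversized block'' in the \NO-direction argument is replaced throughout by ``oversized component'', which is harmless since the surviving pieces of the intended solutions (single vertices, edges, copies of $C_\ell$) still lie in the hereditary class $\cP$.

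The main obstacle is the joint design of the edge gadget and of the propagation mechanism through the $2k$-vertex separators. One has to ensure simultaneously that (i) each segment has size bounded in terms of $\ell$ only, so that the treewidth remains $\cO(k)$ and the reduction is polynomial; (ii) \emph{every} deletion set of size $k'$ --- not merely the intended ones --- is forced, at every separator, to realise a genuine permutation $\pi$ and to carry it over unchanged to the next segment (this is where the ``oversized block/component'' penalty does the work, and where one must be careful that $\cP$ could a priori contain other small biconnected graphs); and (iii) whenever the budget is respected but the \PkkIS constraint is violated, an oversized block, respectively component, is unavoidably created. Establishing (ii), namely that the budget-tight partial solutions are exactly the intended ones, is the technical heart of the reduction.
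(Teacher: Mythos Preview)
Your high-level plan---reduce from \PkkIS, build a chain of $m=\cO(k^4)$ nearly identical pieces (one per edge of $G$) linked by size-$2k$ separators, and argue pathwidth $\cO(k)$---is exactly the paper's plan. But the gadget you sketch cannot work as described. You write that ``each segment attaches only a constant number of further vertices (the constant depending only on $\ell$)'' and that ``the part of the separator that survives a partial solution encodes\dots\ a permutation $\pi\in S_k$''. A permutation carries $\Theta(k\log k)$ bits, yet a segment consisting of the two bordering $2k$-vertex separators plus $O_\ell(1)$ private vertices has only $2^{\cO(k)}$ possible restrictions of a deletion set; it therefore cannot distinguish, let alone faithfully propagate, all $k!$ states. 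More concretely, there is no way for a constant-size edge gadget to enforce $\pi_L=\pi_R$ across two disjoint $2k$-vertex interfaces.

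In the paper's construction the separators $S^e$ are \emph{not} where the permutation lives: all $2k$ vertices of every $S^e$ survive. The permutation is stored inside each piece $H^e$, which has $\Theta(k^2)$ vertices rather than $O_\ell(1)$. Every cell $\p{i}{j}$ of the grid has its own gadget $H^e(\p{i}{j})$ on $3d-2$ vertices (a path on $d-3$ vertices with endpoints $v^e_{-a},v^e_{-b}$, a lone vertex $v^e_{+}$, and \emph{two} disjoint copies of $C_d$), and all gadgets in the same row are pairwise completely joined. The budget is set so that exactly $(3d-2)(k-1)$ deletions fall in each row of each piece; a short extremal argument (this is where the second $C_d$ is used, to kill the ``$3d-3$ from one gadget plus $1$ from another'' loophole) forces the survivor in each row to be one entire $H^e(v)$. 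Propagation then comes for free: the vertices $v^{e_{h}}_{+}$, $r^{e_{h+1}}_i$, $c^{e_{h+1}}_{j}$ together with the path from $v^{e_{h+1}}_{-a}$ to $v^{e_{h+1}}_{-b}$ form a $C_d$ precisely when the same column $j$ is chosen for row $i$ in pieces $h$ and $h{+}1$; otherwise one gets a component (for $G'$) or a block (for $G''$, where the column selectors are strung into a path) of size at least $d+1$. The pathwidth stays $\cO(k)$ because the decomposition sweeps one row $H^{e_h}(i)$ at a time together with three neighbouring selectors $S^{e}$ and the two gadgets touched by the current edge.

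So the missing idea is not the chain or the separator size, but that each piece must itself carry a full $k\times k$ array of cell gadgets to store the permutation; your $O_\ell(1)$-vertex segment and ``surviving part of the separator'' picture does not realise point~(ii) of your own outline.
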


    \begin{proof}
      We reduce from \PkkIS which, like \PkkC, cannot be solved in time $2^{o(k \log k)}k^{\cO(1)}$ unless the ETH fails \cite{LokshtanovMS11}.
Let $G=([k] \times [k],E)$ be an instance of \PkkIS.
We assume that $\forall h,i,j \in [k]$ with $h \neq i$, $\p{i}{j}\p{h}{j} \in E$.
Adding these edges does not change the \YES- and \NO-instances, but has the virtue of making \PkkIS equivalent to \kkIS.
We also assume that $\forall h,i,j \in [k]$, $\p{i}{j}\p{i}{h} \in E$, since at most one of $\p{i}{j}$ and $\p{i}{h}$ can be in a given solution.
Let $m := |E| = \cO(k^4)$ be the number of edges of $G$.

\medskip

\textbf{Outline.}
We build two almost identical graphs $G'=(V',E')$ and $G''=(V',E'')$ with treewidth %
at most $(3d+4)k+6d-5=\cO(k)$, and $((3d-2)k^2+2k)m$ vertices, such that the following three conditions are equivalent:
\begin{enumerate}
  \item $G$ has an independent set of size $k$ with one vertex per row of $G$.\label{cond1}
  \item There is a set $S \subseteq V'$ of size at most $(3d-2)k(k-1)m$ such that each component of $G'-S$ has size at most $d$ and belongs to $\cP$.\label{cond2}
  \item There is a set $S \subseteq V'$ of size at most $(3d-2)k(k-1)m$ such that each block of $G''-S$ has size at most $d$ and belongs to $\cP$.\label{cond3}
\end{enumerate}

The overall construction of $G'$ and $G''$ will display $m$ \emph{almost} copies of the encoding of an \emph{edgeless} $G$ arranged in a cycle.
Each copy embeds one distinct edge of $G$.
The point of having the information of $G$ distilled edge by edge in $G'$ and $G''$ is to control the treewidth.
This general idea originates from a paper of Lokshtanov et al.~\cite{LokshtanovMS11a}.

\begin{figure}[h!]
\centering
\begin{tikzpicture}

\foreach \i in {1,2,3}{
\begin{scope}[xshift=3 * \i cm]
\node (s\i) at (0,0) [draw,rounded corners,thick,minimum width=0.6cm,minimum height=0.6cm] {$S^{e_\i}$} ;
\node (h\i) at (1.5,0) [draw,thick,minimum width=1.2cm,minimum height=1.2cm] {$H^{e_\i}$} ;
\end{scope}
\draw (s\i) -- (h\i) ;
}
\draw (h1) -- (s2) ;
\draw (h2) -- (s3) ;
\draw[dashed] (h3.east) --++(0.4,0) ;

\begin{scope}[xshift=12.5 cm]
\node (sm) at (0,0) [draw,rounded corners,thick,minimum width=0.6cm,minimum height=0.6cm] {$S^{e_m}$} ;
\node (hm) at (1.5,0) [draw,thick,minimum width=1.2cm,minimum height=1.2cm] {$H^{e_m}$} ;
\end{scope}
\draw (sm) -- (hm) ;
\draw[dashed] (sm.west) --++(-0.4,0) ;

\draw[rounded corners] (s1.north) --++(0,0.6) --++(11,0) -- (hm.north) ;

\draw[very thick,red] (4.5,-0.4) --++(0.5,0.3) ; 
\draw[very thick,red] (7,-0.2) --++(0.2,0.6) ; 
\draw[very thick,red] (10.7,-0.4) --++(0.3,0.5) ; 
\draw[very thick,red] (13.5,-0.3) --++(0.5,-0.2) ; 

\end{tikzpicture}
\caption{A high-level schematic of $G'$ and $G''$. The $H^{e_i}$s only differ by a constant number of edges (in red/light gray) that encode their edge $e_i$ of $G$.}
\label{fig:overall-schema}
\end{figure} 

\medskip

\textbf{Construction.}
We first describe $G'$.
As a slight abuse of notation, a gadget (and, more generally, a subpart of the construction) may refer to either a subset of vertices or to an induced subgraph.    
For each $e=\p{i^e}{j^e}\p{i'^e}{j'^e} \in E$, we detail the internal construction of $H^e$ and $S^e$ of \cref{fig:overall-schema} and how they are linked to one another.
Each vertex $v=\p{i}{j}$ of $G$ is represented by a gadget $H^e(v)$ on $3d-2$ vertices in $G'$: a path on $d-3$ vertices whose endpoints are $v^e_{-a}$ and $v^e_{-b}$, an isolated vertex $v^e_+$, and two disjoint cycles of length $d$.
Observe that if $d=4$, then $v^e_{-a}$ and $v^e_{-b}$ is the same vertex.
We add all the edges between $H^e(\p{i}{j})$ and $H^e(\p{i}{j'})$ for $i,j,j' \in [k]$ with $j \neq j'$.
We also add all the edges between $H^e(\p{i^e}{j^e})$ and $H^e(\p{i'^e}{j'^e})$.
We call $H^e$ the graph induced by the union of every $H^e(v)$, for $v \in V(G)$.
The \emph{row/column selector} gadget $S^e$ consists of a set $S^e_r$ of $k$ vertices with one vertex $r^e_i$ for each row index $i \in [k]$, and a set $S^e_c$ of $k$ vertices with one vertex $c^e_j$ for each column index $j \in [k]$.
The gadget $S^e$ forms an independent set of size $2k$.
We arbitrarily number the edges of $G$: $e_1, e_2, \ldots, e_m$.
For each $h \in [m]$ and $v=\p{i}{j} \in V$, we link $v^{e_h}_{-a}$ to $r^{e_h}_i$ (the row index of $v$) and $v^{e_h}_{-b}$ to $c^{e_h}_j$ (the column index of $v$).
We also link, for every $h \in [m-1]$, $v^{e_h}_+$ to $r^{e_{h+1}}_i$ and to $c^{e_{h+1}}_j$, and $v^{e_m}_+$ to $r^{e_1}_i$ and to $c^{e_1}_j$.
That concludes the construction (see \cref{fig:lowerbound-tw-overall}).
To obtain $G''$ from $G'$, we add the edges $c^{e_h}_jc^{e_h}_{j+1}$ for every $h \in [m]$ and $j \in [k-1]$. 
We ask for a deletion set $S$ of size $s := (3d-2)k(k-1)m$.

\begin{figure}
\centering
\resizebox{300pt}{!}{
\begin{tikzpicture}
\def\m{2}
\def\k{3}
\def\o{0.2}

\foreach \l in {1,...,\m}{ 

\begin{scope}[yshift=6.5 * \l cm]
\foreach \i in {1,...,\k}{
\node[draw,circle] (rr\l\i) at (7 + \i,-2) {} ;
\node[draw,circle] (cc\l\i) at (12.5 + \i,-2) {} ;
}
\foreach \i [count=\h from 1] in {2,...,\k}{
\draw[very thick,opacity=0.25] (cc\l\h) -- (cc\l\i) ;
}
\node [draw,rectangle,very thick,rounded corners,fit=(rr\l1)(rr\l\k)] (rrr\l) {} ;
\node [draw,rectangle,very thick,rounded corners,fit=(cc\l1)(cc\l\k)] (ccc\l) {} ;

\node[left of=rr\l1] {row index~~~~~~~} ;
\node[right of=cc\l\k] {~~~~~~~~~~~column index} ;
\node (rsel\l) at (9,-2.6) {$S^{e_\l}_r$} ;
\node (csel\l) at (14.5,-2.6) {$S^{e_\l}_c$} ;

\foreach \j in {1,...,\k}{
\begin{scope}[xshift=5.5 * \j cm]
  \foreach \i in {1,...,\k}{
    \begin{scope}[xshift=1.6 * \i cm]
      \node[draw,circle] (v\i\j\l1a) at (-0.35,0) {} ;
      \node[draw,circle] (v\i\j\l1b) at (0.35,0) {} ;
      
      \node[draw,circle] (v\i\j\l2) at (-0.35,0.45) {} ;
      \node[draw,circle] (v\i\j\l3) at (0.35,0.45) {} ;
      \node[draw,circle] (v\i\j\l4) at (-0.35,0.95) {} ;
      \node[draw,circle] (v\i\j\l5) at (0.35,0.95) {} ;

      \node[draw,circle] (v\i\j\l2b) at (-0.35,1.45) {} ;
      \node[draw,circle] (v\i\j\l3b) at (0.35,1.45) {} ;
      \node[draw,circle] (v\i\j\l4b) at (-0.35,1.95) {} ;
      \node[draw,circle] (v\i\j\l5b) at (0.35,1.95) {} ;

      \node[draw,circle] (v\i\j\l6) at (0,2.4) {} ;

      \draw[dashed, thick] (v\i\j\l1a) -- (v\i\j\l1b) ;    
      \draw[densely dotted, very thick] (v\i\j\l2) -- (v\i\j\l3) ;
      \draw (v\i\j\l3) -- (v\i\j\l5) -- (v\i\j\l4) -- (v\i\j\l2) ;
      \draw[densely dotted, very thick] (v\i\j\l2b) -- (v\i\j\l3b) ;
      \draw (v\i\j\l3b) -- (v\i\j\l5b) -- (v\i\j\l4b) -- (v\i\j\l2b) ;

      \draw (cc\l\i) -- (v\i\j\l1b) ;          
      \draw (v\i\j\l1a) -- (rr\l\j) ;

      \node [draw,rectangle,rounded corners,fit=(v\i\j\l1a)(v\i\j\l1b)(v\i\j\l2)(v\i\j\l3)(v\i\j\l4)(v\i\j\l5)(v\i\j\l2b)(v\i\j\l3b)(v\i\j\l4b)(v\i\j\l5b)(v\i\j\l6)] (r\i\j\l) {} ;
    \end{scope}
}
\draw[very thick] (r1\j\l) to [bend right] (r2\j\l) ;
\draw[very thick] (r2\j\l) to [bend right] (r3\j\l) ;
\draw[very thick] (r1\j\l) to [bend left] (r3\j\l) ;
\node [draw,rectangle,very thick,fit=(r1\j\l)(r\k\j\l)] (b\j\l) {} ;
\end{scope}
}
\node[below of=v11\l1a] {$H^{e_\l}$} ;
\node[draw,rectangle,fit=(b1\l)(b\k\l)] (d\l) {} ;
\end{scope}
}

\node[above of=v2226] {\textbf{$\vdots$}} ;

\foreach \j in {1,...,\k}{
  \foreach \i in {1,...,\k}{
    \draw (v\i\j16) -- (cc2\i) ;
    \draw (v\i\j16) -- (rr2\j) ;
}
}

\draw(r212) edge [very thick,red,bend right=20] (r122) ;
\draw(r211) edge [very thick,red,bend right=20] (r221) ;

\end{tikzpicture}
}
\caption{The overall picture of $G'$ and $G''$ with $k=3$. Dotted edges are subdivided $d-4$ times. In particular, if $d=4$, they are simply edges.
Dashed edges are subdivided $d-5$ times.
In particular, if $d=4$, the two endpoints are in fact a single vertex. 
Edges between two boxes link each vertex of one box to each vertex of the other box. The gray edges in the column selectors $S^{e_h}_c$ are only present in $G''$.}
\label{fig:lowerbound-tw-overall}
\end{figure}

\medskip

\textbf{Treewidth of $G'$ and $G''$.} 
We claim that the pathwidth, and hence treewidth, of $G'$ and $G''$ are bounded by $(3d+4)k+6d-5$.
For any edge $e \in E$, we set $H(e) := H^e(\p{i^e}{j^e}) \cup H^e(\p{i'^e}{j'^e})$.
For any $h \in [m-1]$, we set $\tilde{S}_h := S^{e_1} \cup S^{e_h} \cup S^{e_{h+1}}$, and  $\tilde{S}_m := S^{e_1} \cup S^{e_m}$.
For each $e \in E$, and $i \in [k]$, $H^e(i)$ denotes the union of the $H^e(v)$ for all vertices $v$ of the $i$-th row.
Here is a path decomposition of $G'$ and $G''$ where the bags contain no more than $(3d+4)k+6d-4$ vertices:

\begin{center}
$\tilde{S}_1 \cup H(e_1) \cup H^{e_1}(1) \rightarrow \tilde{S}_1 \cup H(e_1) \cup H^{e_1}(2) \rightarrow \ldots \rightarrow \tilde{S}_1 \cup H(e_1) \cup H^{e_1}(k) \rightarrow$ \\
$\tilde{S}_2 \cup H(e_2) \cup H^{e_2}(1) \rightarrow \tilde{S}_2 \cup H(e_2) \cup H^{e_2}(2) \rightarrow \ldots \rightarrow \tilde{S}_2 \cup H(e_2) \cup H^{e_2}(k) \rightarrow$\\ 
$\vdots$\\
$\tilde{S}_m \cup H(e_m) \cup H^{e_m}(1) \rightarrow \tilde{S}_m \cup H(e_m) \cup H^{e_m}(2) \rightarrow \ldots \rightarrow \tilde{S}_m \cup H(e_m) \cup H^{e_m}(k)$. 
\end{center} 
As, for any $h \in [m]$, $|\tilde{S}_h| \leqslant 6k$, $|H(e_h)|=2(3d-2)$, and $|H^{e_h}(i)|\leqslant (3d-2)k$ for any $i \in [k]$, the size of a bag is bounded by $\max_{h \in [m],i \in [k]}|\tilde{S}_h \cup H(e_h) \cup H^{e_h}(i)| \leqslant 6k+2(3d-2)+(3d-2)k=(3d+4)k+6d-4$.

\medskip

\textbf{Correctness.}
We first show \ref{cond1} $\Rightarrow$ \ref{cond2}.
Let us assume that there is an independent set $I := \{v_1=\p{1}{j_1},v_2=\p{2}{j_2},\ldots,v_k=\p{k}{j_k}\}$ in $G$.
We define the deletion set $S \subseteq V'$ as follows.
For each $e \in E$ and $i \in [k]$, we delete all of $H^e(i)$ except $H^e(v_i)$.
The cardinality of $S$ adds up to a total of $(|H^e(i)|-|H^e(v_i)|)mk=((3d-2)k-3d+2)mk=(3d-2)k(k-1)m=s$ vertices.
We claim that all the components of $G'-S$ are isomorphic to $C_d$, and belong to $\cP$ since $d \geqslant 4$. 
First, we observe that the $C_d$s inside any $H^e(v_i)$, for $e \in E$ and $i \in [k]$, are isolated in $G'-S$.
Indeed, $H^e(v_i)$ is the only remaining $H^e(v)$ from $H^e(i)$.
So, it might only be linked to $H^e(v_j)$ with some $j \neq i \in [k]$.
But this would imply that $v_iv_j \in E$, contradicting that $I$ is an independent set.
Besides those $C_d$s contained in the $H^e(v_i)$s, we claim that the rest of $G'-S$ is $mk$ disjoint $C_d$s formed with the vertices ${v_p}_+^{e_{h-1}}$, $r^{e_h}_p$, $c^{e_h}_{j_p}$, and the path $P_{v_p}^{e_h}$ between ${v_p}_{-a}^{e_h}$ and ${v_p}_{-b}^{e_h}$, for any $h \in [m]$ and $p \in [k]$ (with the convention that $e_0=e_m$).
Indeed, let us recall that $\{j_1,j_2,\ldots,j_k\}=[k]$.
Therefore, $\{{v_p}_+^{e_{h-1}}, r^{e_h}_p, c^{e_h}_{j_p}\} \cup P_{v_p}^{e_h}$ is a family of $mk$ pairwise disjoint sets of size $d$.
The vertices $r^{e_h}_p$ and $c^{e_h}_{j_p}$ have degree $2$ in $G'-S$ since $I$ contains only one vertex in the $p$-th row of $G$, and $I$ contains only one vertex in the $j_p$-th column; and in both cases this vertex is $v_p$. 
The vertex ${v_p}_+^{e_{h-1}}$ and the vertices of $P_{v_p}^{e_h}$ also have degree $2$ in $G'-S$.
Therefore, $G'-S$ is a disjoint union of $C_d$s.
The implication \ref{cond1} $\Rightarrow$ \ref{cond3} is derived similarly.
We now claim that, with the same deletion set $S$, all the blocks of $G''-S$ are isomorphic to $C_d$ or $K_2$.
As $\mathcal P$ is a hereditary class that contains the induced cycle of length $d \geqslant 4$, it holds that $K_2 \in \mathcal P$. 
We still have the property that the $C_d$s within any $H^e(v_i)$ are isolated in $G''-S$.
Now, the slight difference is that $\{{v_p}_+^{e_{h-1}}, r^{e_h}_p, c^{e_h}_{j_p}\} \cup P_{v_p}^{e_h}$ induces $m$ disjoint $\mathcal C_{k,d}$s in $G''-S$, where $\mathcal C_{k,d}$ is the graph obtained by linking each of the $k$ vertices of a path to the two endpoints of a path on $d-1$ vertices.
Informally, $\mathcal C_{k,d}$ corresponds to $k$ $C_d$s attached to different vertices of a path on $k$ vertices. 
In this case, the path consists of the vertices $c^{e_h}_1, c^{e_h}_2, \ldots, c^{e_h}_k$.
Finally, we observe that the blocks of $\mathcal C_{k,d}$ are $k$ $C_d$s and $k-1$ $K_2$s.

We now show that \ref{cond2} $\Rightarrow$ \ref{cond1} and \ref{cond3} $\Rightarrow$ \ref{cond1}.
We assume that there is a set $S \subseteq V'$ of size at most $s$ such that all the blocks of $G''-S$ (resp.~$G'-S$) have size at most $d$.
We note that this is implied by \ref{cond3} (resp.~by a weaker assumption than \ref{cond2}).
The first property we show on $S$ is that, for any $e \in E$ and $i \in [k]$, $|H^e(i) \cap S| \geqslant (3d-2)(k-1)$.
In other words, there are at most $3d-2$ vertices of $H^e(i)$ remaining in $G''-S$ (or $G'-S$).
Assume, for the sake of contradiction, that $H^e(i)-S$ contains at least $3d-1$ vertices.
Observe that $H^e(i)-S$ cannot contain at least one vertex from three distinct $H^e(u)$, $H^e(v)$, and $H^e(w)$ (with $u$, $v$ and $w$ in the $i$-th row of $G$), since then $H^e(i)-S$ would be $2$-connected (and of size $>d$).
For the same reason, $H^e(i)-S$ cannot contain at least two vertices in $H^e(u)$ and at least two vertices in another $H^e(v)$.
Therefore, the only way of fitting $3d-1$ vertices in $H^e(i)-S$ is the $3d-2$ vertices of an $H^e(u)$ plus one vertex from some other $H^e(v)$.
But then, this vertex of $H^e(v)$ would form, together with one $C_d$ of $H^e(u)$, a $2$-connected subgraph of $G''-S$ (or $G'-S$) of size $d+1$.
Now, we know that $|H^e(i) \cap S| \geqslant (3d-2)(k-1)$.
As there are precisely $mk$ sets $H^e(i)$ in $G'$ (and they are disjoint), it further holds that $|H^e(i) \cap S| = (3d-2)(k-1)$, since otherwise $S$ would contain strictly more than $s=(3d-2)k(k-1)m$ vertices.
Thus, $H^e(i)-S$ contains exactly $3d-2$ vertices.
By the previous remarks, $H^e(i)-S$ can only consist of the $3d-2$ vertices of the same $H^e(u)$ or $3d-3$ vertices of $H^e(u)$ plus one vertex from another $H^e(v)$.
In fact, the latter case is not possible, since the vertex of $H^e(v)$ would form, with at least one remaining $C_d$ of the $3d-3$ vertices of $H^e(u)$, a $2$-connected subgraph of $G''-S$ (or $G'-S$) of size $d+1$. 
Note that this is why we needed two disjoint $C_d$s in the construction instead of just one.
So far, we have proved that, assuming \ref{cond2} or \ref{cond3}, for any $e \in E$ and $i \in [k]$, $H^e(i) \cap S=H^e(v(i,e))$ for some vertex $v(i,e)$ of the $i$-th row of $G$, and for any $e \in E$, $S^e \cap S=\emptyset$.

The second part of the proof consists of showing that $v(i,e)$ does not depend on $e$.
Formally, we want to show that there is a $v_i$ such that, for any $e \in E$, $v(i,e)=v_i$.
Observe that it is enough to derive that, for any $h \in [m]$, $v(i,e_h)=v(i,e_{h+1})$ (with $e_{m+1}=e_1$).
Let $j \in [k]$ (resp.~$j' \in [k]$) be the column of $v(i,e_h)$ (resp.~$v(i,e_{h+1})$) in $G$.
We first assume \ref{cond2}.
For any $h \in [m]$, ${v(i,e_h)}_+^{e_h}$, $r^{e_{h+1}}_i$, $c^{e_{h+1}}_{j'}$, $c^{e_{h+1}}_j$ plus the path $P_{v(i,{e_{h+1}})}^{e_{h+1}}$ (between ${v(i,{e_{h+1}})}_{-a}^{e_{h+1}}$ and ${v(i,{e_{h+1}})}_{-b}^{e_{h+1}}$) induces a path (in particular, a connected subgraph) of size $d+1$ in $G''-S$, unless $j=j'$ (with $e_{m+1}=e_1$). 
Therefore, $j=j'$. 
As $v(i,e_h)$ and $v(i,e_{h+1})$ have the same column $j$ and the same row $i$ in $G$, $v(i,e_h)=v(i,e_{h+1})$.

Now, we assume \ref{cond3}.
For any $h \in [m]$, ${v(i,e_h)}_+^{e_h}$, $r^{e_{h+1}}_i, {v(i,e_{h+1})}_{-a}^{e_{h+1}}, {v(i,e_{h+1})}_{-b}^{e_{h+1}}, c^{e_{h+1}}_{j'}, c^{e_{h+1}}_{j'+1},$ $ \dotsc,$ $ c^{e_{h+1}}_{j-1}, c^{e_{h+1}}_j$ if $j \geqslant j'$ (resp.~$c^{e_{h+1}}_{j'-1}, \dotsc,$ $ c^{e_{h+1}}_{j+1}, c^{e_{h+1}}_j$ if $j \leqslant j'$) plus the path between ${v(i,e_{h+1})}_{-a}^{e_{h+1}}$ and ${v(i,e_{h+1})}_{-b}^{e_{h+1}}$ induces a cycle (that is, a $2$-connected subgraph) of length at least $d+1$ in $G''-S$, unless $j=j'$ (with $e_{m+1}=e_1$). 
Again, $j=j'$; and the vertices $v(i,e_h)$ and $v(i,e_{h+1})$ have the same column and the same row in $G$, which implies that $v(i,e_h)=v(i,e_{h+1})$.
In both cases (\ref{cond2} or \ref{cond3}), we can now safely define $v_i := v(i,e)$. 

We finally claim that $\{v_1,v_2,\ldots,v_k\}$ is an independent set in $G$ (and for each $i \in [k]$, $v_i$ is in the $i$-th row).
Indeed, if there were an edge $e=v_iv_j \in E$ for some $i \neq j \in [k]$, then $H^e(v_i) \cup H^e(v_j)$ would induce a $2$-connected subgraph of size $2(3d-2)>d$ (since $d \geqslant 4$) in $G''-S$ (or $G'-S$).

That finishes the proof that \ref{cond1} $\Leftrightarrow$ \ref{cond2} $\Leftrightarrow$ \ref{cond3}.
Therefore, for any fixed integer $d \geqslant 4$, an algorithm running in time $2^{o(w \log w)}|V'|^{\cO(1)}$ for either \BPCVD or \BPBVD on graphs of treewidth $w$ with $C_d \in \mathcal P$ would also solve \PkkIS in time $$2^{o(((3d+4)k+6d-5) \log ((3d+4)k+6d-5))}(((3d-2)k^2+2k)m)^{\cO(1)}=2^{o(k \log k)}k^{\cO(1)},$$ which contradicts the ETH.
\end{proof}

\section{Hardness and lower bounds, when $d$ is not fixed}\label{sec:w1hardness}

In this section, we prove \cref{hardness-intro}.  %
Our first reduction is from the following problem:

\ProblemQ{\MC}{$k$}{A graph $G$, a positive integer $k$, and a partition $(V_1,V_2,\dotsc,V_k)$ of $V(G)$.}
{Is there a $k$-clique $X$ of $G$ such that $|X \cap V_i| = 1$ for each $i \in [k]$?}

\noindent
We call a set $V_i$, for some $i \in [k]$, a \emph{color class}.
The problem \MC is known to be $W[1]$-complete (see, for example, \cite{Cygan15}),
and it is clear that this remains true under the assumption that there are no edges between vertices of the same color class.
Moreover, %
we may assume
that each color class has the same size,
and between every distinct pair of color classes we have the same %
number of edges%
~\cite{Fellows2011}.
We say that $X \subseteq V(G)$ is a \emph{multicolored $k$-clique} if $X$ is a $k$-clique such that $|X \cap V_i| = 1$ for each $i \in [k]$.

\begin{theorem}
  \label{hardness}
  \BPCVD is $W[1]$-hard parameterized by the combined parameter $(w,k)$, when $\cP$ contains all chordal graphs.
\end{theorem}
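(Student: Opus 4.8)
The plan is to give a polynomial-time parameterized reduction from \MC, which is $W[1]$-complete and remains so under the simplifying assumptions recalled above. From an instance $G$ with colour classes $V_1,\dots,V_k$ we will build a graph $G'$, an integer $d$, and a budget $k'$ such that $G$ has a multicoloured $k$-clique if and only if $G'$ has a set $S$ of at most $k'$ vertices with every component of $G'-S$ of size at most $d$ and in $\cP$, where both the treewidth $w$ of $G'$ and $k'$ are bounded by a function of $k$ only, while $d$ is allowed to grow with $|G|$. Since $\cP$ contains every chordal graph, we will make sure that for the deletion sets we care about every component of $G'-S$ is an induced path (hence chordal, hence in $\cP$); thus the only substantive requirement will be the size bound $d$, and \BPCVD collapses to a \COC-type problem.

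Before the construction I would normalise the input (each $V_i$ of the same size $N$; no edges inside a class; exactly $M$ edges between each pair $\{i,j\}$), fix enumerations $V_i=\{v_{i,1},\dots,v_{i,N}\}$ and, for $i<j$, $e^{ij}_1,\dots,e^{ij}_M$ of the $i$--$j$ edges, and record endpoint indices $\alpha_{ij}(t),\beta_{ij}(t)$ with $e^{ij}_t=v_{i,\alpha_{ij}(t)}v_{j,\beta_{ij}(t)}$. For each colour class $i$ I build a \emph{selection gadget}: two long induced paths carrying regularly spaced \emph{marker} vertices, one on a ``fine'' scale and one on a ``coarse'' scale (this is the ``two paths, one of small and one of larger weight'' idea). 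The segment lengths between consecutive markers and the value of $d$ are calibrated so that, within the budget, each of the two paths must lose exactly one marker; so that the two deleted markers are forced to encode the same index $a_i\in[N]$; and so that this index is otherwise unconstrained. For each pair $\{i,j\}$ I build an \emph{edge gadget} with $M$ \emph{candidate} vertices, one per edge $e^{ij}_t$, joined to the selection gadgets of $i$ and $j$ by short separators that transport the segment lengths recording $a_i$ and $a_j$; using a folklore injective encoding of the pair $(\alpha_{ij}(t),\beta_{ij}(t))$ into the ``position'' of candidate $t$, the lengths are set so that deleting candidate $t$ leaves all affected components of size at most $d$ exactly when $\alpha_{ij}(t)=a_i$ and $\beta_{ij}(t)=a_j$. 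Finally, the $\binom{k}{2}$ edge gadgets are laid out along a path so that a sweeping path decomposition keeps in each bag only one constant-size edge gadget together with the $\cO(k)$ separator vertices of all selection gadgets; this bounds the pathwidth, and hence treewidth, by a polynomial in $k$, and $k'$ is set to be exactly the number of deletions forced by the construction, again a function of $k$ only.

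For correctness, the forward direction is routine: from a multicoloured clique $\{v_{1,a_1},\dots,v_{k,a_k}\}$, delete in each selection gadget $i$ the two markers encoding $a_i$ and in each edge gadget $\{i,j\}$ the candidate for the (present, since it is a clique) edge $v_{i,a_i}v_{j,a_j}$; one checks the budget is exactly $k'$ and that $G'-S$ is a disjoint union of paths of length at most $d$. For the reverse direction, given any $S$ with $|S|\le k'$ and all components of size at most $d$, a counting/pigeonhole argument shows $S$ must spend exactly one deletion on the marked positions of each selection gadget and exactly one candidate in each edge gadget (anything else either violates the size bound or exceeds the budget). Reading off the deleted markers yields indices $a_1,\dots,a_k$; the size constraint inside edge gadget $\{i,j\}$ then forces the chosen candidate's edge to be $v_{i,a_i}v_{j,a_j}$, so $\{v_{1,a_1},\dots,v_{k,a_k}\}$ is a multicoloured $k$-clique. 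As $w$ and $k'$ depend only on $k$, this is a parameterized reduction, proving $W[1]$-hardness for the combined parameter $(w,k)$.

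The main obstacle is the arithmetic of the gadgets: choosing the segment lengths and the single global value $d$ so that simultaneously (i) the budget is tight enough to force exactly the intended deletions and to rule out ``cheating'' sets (e.g.\ deleting a separator, skipping a marker, or reusing a deletion for two purposes), (ii) the position of each forced deletion both determines and is determined by the encoded index, and (iii) feasibility of deleting candidate $t$ in an edge gadget genuinely couples \emph{both} endpoint indices while the same $d$ is used everywhere. Making this two-scale encoding propagate a consistent vertex choice across all $\binom{k}{2}$ edge gadgets with a $\cO(k)$-width path decomposition is the delicate part; the treewidth bound and the forward direction are then straightforward bookkeeping.
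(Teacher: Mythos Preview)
Your plan follows essentially the same route as the paper: a reduction from \MC using two-scale selection gadgets (the paper's $H_i$ and $\tilde{H_i}$, with segment lengths $3$ and $3t$ respectively) together with edge-encoding gadgets positioned by the injective map $\phi(a,b)=3t(a-1)+3b$, all assembled so that the resulting components are chordal (in fact interval) and the pathwidth is $\cO(k)$.

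That said, what you have written is a plan, not a proof: you explicitly flag the arithmetic as ``the main obstacle'' and do not carry it out, and the backward direction (``a counting/pigeonhole argument shows\dots'') is exactly where the paper spends most of its effort. One structural point where your sketch would need adjustment: you describe a single global pair of selection paths per colour, with all $\binom{k}{2}$ edge gadgets attached to them. The paper instead uses \emph{many local copies} of each propagator gadget (roughly one per incident edge gadget), chained cyclically, with extra $G_{i,i}$ gadgets to synchronise the fine and coarse scales. This is not merely cosmetic: with a single global path, deleting the marker for $a_i$ splits that path once, and different edge gadgets attached at different positions would see pieces of different sizes, so a uniform value of $d$ cannot simultaneously encode $a_i$ at every edge gadget. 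The paper's local-copies-plus-chaining design is precisely what makes the same deletion pattern (and hence the same $d$) work at every edge gadget while still forcing a single consistent $a_i$ across all of them.
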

    Before proving this theorem, we describe the reduction used in the proof.
Given an instance $(G,k,(V_1,\dotsc,V_k))$ of \MC, where each color class has size~$t$, we construct a graph $G'$ such that $G$ has a multicolored $k$-clique if and only if there exists a set $S \subseteq V(G')$ of size at most $k'$ such that each component of $G'-S$ consists of at most $d$ vertices,
where $k' = 3 \binom{k+1}{2} -6$ and $d = 3t^2+3t+3$, and
the treewidth of $G'$ is bounded above by $54k-69$. 
Each component of $G'-S$ is a chordal graph, so we obtain a reduction to \BPCVD whenever $\cP$ contains all chordal graphs.
We may assume that $k \geq 2$.

Let $V_i = \{v_i^1, v_i^2, \dotsc, v_i^t\}$, for each $i \in [k]$.
For $i,j \in [k]$ with $i < j$, we denote the set of edges in $G[V_i \cup V_j]$ by $E_{i,j}$, and
we may assume that %
$|E_{i,j}| = p$, say. %
We construct $G'$ from several gadgets; namely, an ``edge-encoding gadget'' $G_{i,j}$ for each $i,j \in [k]$ with $i<j$, which represents the set $E_{i,j}$, linked together by copies of one of the ``propagator gadgets'', $H_i$ or $\tilde{H_i}$, which collectively represent the color class~$V_i$ %
for some $i \in [k]$.
We also have a gadget $G_{i,i}$, for each $i \in [2,k-2]$, which ensures that the vertex selection in the $H_i$ gadgets also propagates to the $\tilde{H_i}$ gadgets.

Each gadget encodes a sequence of $z+1$ integers $X=\left<x_0,x_1,\dotsc,x_{z}\right>$, where $x_0 \ge 3$, and $x_s-x_{s-1} \ge 3$ for each $s \in [z]$.
We denote such a gadget $G(X)$ and call it a \emph{gadget of $G'$ of order~$z$}.
It is constructed as follows.
First, set $$(d_0,d_1,d_2,\dotsc,d_z) := (x_0, x_1-x_0, x_2-x_1, \dotsc, x_z-x_{z-1}).$$
Note that $d_q \ge 3$ for every $q \in [0,z]$.
For each $q \in [0,z]$, we now define a graph $P_q$ which resembles a ``thickened path''.
For $q \in [1,z-1]$, let $P_q$ be the graph on the vertex set $\{w_{q,1},w_{q,2},\dotsc,w_{q,d_q-1}\}$ with edges between distinct $w_{q,d}$ and $w_{q,d'}$ if and only if $|d-d'| \in [2]$.
For $q \in \{0,z\}$, let $P_q$ be the graph on the vertex set $\{w_{q,1},w_{q,2},\dotsc,w_{q,d_q}\}$ with edges between distinct $w_{q,d}$ and $w_{q,d'}$ if and only if $|d-d'| \in [3]$.
For each $q \in [z]$, we add a vertex $u_q$ adjacent to $w_{q-1,1}$, $w_{q-1,2}$, $w_{q,1}$, and $w_{q,2}$.
The resulting graph $G(X)$ consists of $d_0 + \big(\sum_{q \in [z-1]} (d_q - 1)\big) + d_z + z  = \big(\sum_{q \in [0,z]} d_q\big)+1 = x_{z}+1$
vertices, and, for $q \in [z]$, the graph obtained by deleting $u_q$ has two components: one of size $x_q$, and the other of size $x_{z}- x_q$.
Let $B := \{w_{0,1}, w_{0,2}, w_{0,3}\}$ and $D := \{w_{z,1}, w_{z,2}, w_{z,3}\}$.
Since we will use several copies of this gadget, 
we usually refer to $P_q$ as $P_q(G(X))$,
a vertex $v \in V(G(X))$ as $v(G(X))$,
and $B$ or $D$ as $B(G(X))$ or $D(G(X))$ respectively;
but we sometimes omit the ``$(G(X))$'' when there is no ambiguity.
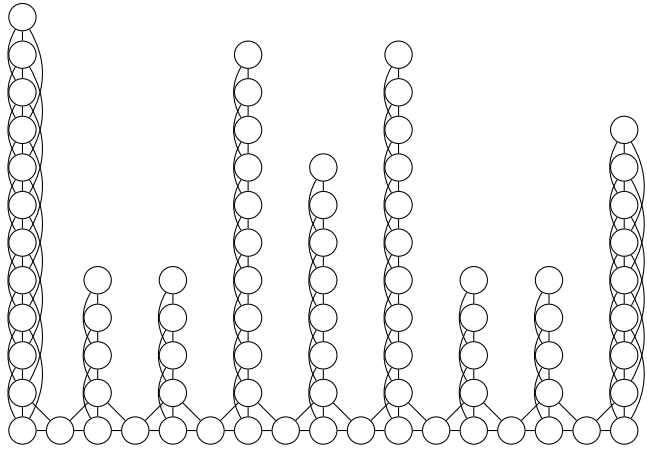
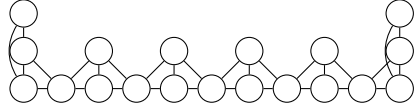
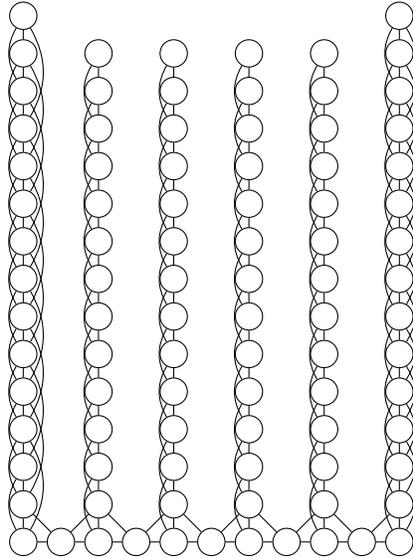
\begin{figure}
\captionsetup[subfigure]{width=0.66\textwidth}
\begin{minipage}{\linewidth}
\centering
  \begin{tikzpicture}
\def\d{9}
\pgfmathsetmacro{\u}{\d-1}
\foreach \i in {1,...,\u}{
\node[draw,circle] (a1\i) at (\i,0) {} ;
\node[draw,circle] (a2\i) at (\i+0.5,0) {} ;
\node[draw,circle] (p\i0) at (\i,0.5) {} ;
\draw (a1\i) -- (a2\i) -- (p\i0) -- (a1\i) ;
}
\node[draw,circle] (a1\d) at (\d,0) {} ;
\node[draw,circle] (p\d0) at (\d,0.5) {} ;
\draw (a1\d) -- (p\d0) ;

\foreach \i [count=\j from 2] in {1,...,\u}{
\draw (a1\j) -- (a2\i) -- (p\j0) ;
}

\foreach \i in {1,...,\u}{
\pgfmathsetmacro{\h}{3 * mod(\i * \i+1,5)+3}
\foreach \s in {1,...,\h}{
\node[draw,circle] (p\i\s) at (\i,0.5 * \s + 0.5) {} ;
}
\foreach \s [count=\j from 0] in {1,...,\h}{
\draw (p\i\s) -- (p\i\j) ;
}
\foreach \s [count=\j from 0] in {2,...,\h}{
\draw (p\i\s) [bend right=30] to (p\i\j) ;
}
\draw(p\i1) [bend right=30] to (a1\i) ;
}

\foreach \s in {1,...,7}{
\node[draw,circle] (p\d\s) at (\d,0.5 * \s + 0.5) {} ;
}
\foreach \s [count=\j from 0] in {1,...,7}{
\draw (p\d\s) -- (p\d\j) ;
}
\foreach \s [count=\j from 0] in {2,...,7}{
\draw (p\d\s) [bend right=30] to (p\d\j) ;
}
\draw(p\d1) [bend right=30] to (a1\d) ;

\foreach \i in {1}{
\pgfmathsetmacro{\h}{3 * mod(\i * \i+1,5)+3}
\foreach \s [count=\j from 0] in {3,...,\h}{
\draw (p\i\s) [bend left=30] to (p\i\j) ;
}
\draw(p\i2) [bend left=30] to (a1\i) ;
}

\foreach \s [count=\j from 0] in {3,...,7}{
\draw (p\d\s) [bend left=30] to (p\d\j) ;
}
\draw(p\d2) [bend left=30] to (a1\d) ;

\node[draw,circle] (p110) at (1,5.5) {} ;
\draw (p110) -- (p19) ;
\draw (p110) [bend right=30] to (p18) ;
\draw (p110) [bend left=30] to (p17) ;

\end{tikzpicture}
\subcaption{The edge encoding gadget $G_{i,j}$ (with $t=5$) for the edges $\{v_i^1v_j^4, v_i^2v_j^1, v_i^2v_j^3, v_i^3v_j^2, v_i^3v_j^5, v_i^4v_j^4, v_i^5v_j^1, v_i^5v_j^3\}$, encoded as $\langle 12, 18, 24, 36, 45, 57, 63, 69, 78\rangle$.}
\label{twlb-buildingblock-edges}
\end{minipage}

\begin{minipage}{\linewidth}
    \centering
\begin{tikzpicture}
\def\d{6}
\pgfmathsetmacro{\u}{\d-1}
\foreach \i in {1,...,\u}{
\node[draw,circle] (a1\i) at (\i,0) {} ;
\node[draw,circle] (a2\i) at (\i+0.5,0) {} ;
\node[draw,circle] (a3\i) at (\i,0.5) {} ;
\draw (a1\i) -- (a2\i) -- (a3\i) -- (a1\i) ;
}
\node[draw,circle] (a1\d) at (\d,0) {} ;
\node[draw,circle] (a3\d) at (\d,0.5) {} ;
\draw (a1\d) -- (a3\d) ;

\foreach \i [count=\j from 2] in {1,...,\u}{
\draw (a1\j) -- (a2\i) -- (a3\j) ;
}

\node[draw,circle] (p11) at (1,1) {} ;
\node[draw,circle] (p61) at (6,1) {} ;

\draw (p11) -- (a31) ;
\draw (p61) -- (a36) ;

\draw (p11) [bend right=30] to (a11) ;
\draw (p61) [bend right=30] to (a16) ;
\end{tikzpicture}
    \subcaption{A propagator gadget $H_j$ (with $t=5$), which will be linked to edge encoding gadgets $G_{i,j}$ with $i \le j$.}
\label{twlb-buildingblock-vertices}
\end{minipage}

\begin{minipage}{\linewidth}
    \centering
\begin{tikzpicture}
\def\d{6}
\def\h{13}
\pgfmathsetmacro{\u}{\d-1}
\pgfmathsetmacro{\v}{\h-1}
\foreach \i in {1,...,\u}{
\node[draw,circle] (a1\i) at (\i,0) {} ;
\node[draw,circle] (a2\i) at (\i+0.5,0) {} ;
\node[draw,circle] (a3\i) at (\i,0.5) {} ;
\draw (a1\i) -- (a2\i) -- (a3\i) -- (a1\i) ;
}
\node[draw,circle] (a1\d) at (\d,0) {} ;
\node[draw,circle] (a3\d) at (\d,0.5) {} ;
\draw (a1\d) -- (a3\d) ;

\foreach \i [count=\j from 2] in {1,...,\u}{
\draw (a1\j) -- (a2\i) -- (a3\j) ;
}

\foreach \r in {1,\d}{
\foreach \s in {1,...,\h}{
\node[draw,circle] (p\r\s) at (\r,0.5 * \s + 0.5) {} ;
}
\draw (a3\r) -- (p\r1) ;

\foreach \s [count=\j from 1] in {2,...,\h}{
\draw (p\r\s) -- (p\r\j) ;
}
}

\foreach \i in {2,...,\u}{ 
\foreach \s in {1,...,\v}{
\node[draw,circle] (p\i\s) at (\i,0.5 * \s + 0.5) {} ;
}
\draw (a3\i) -- (p\i1) ;

\foreach \s [count=\j from 1] in {2,...,\v}{
\draw (p\i\s) -- (p\i\j) ;
}
}

\foreach \i in {1,...,\d}{
\foreach \s [count=\j from 1] in {3,...,\v}{
\draw (p\i\s) [bend right=30] to (p\i\j) ;
}
\draw (p\i1) [bend right=30] to (a1\i) ;
\draw (p\i2) [bend right=30] to (a3\i) ;
}

\foreach \i in {1,\d}{
\foreach \s [count=\j from 1] in {4,...,\h}{
\draw (p\i\s) [bend left=30] to (p\i\j) ;
}
\draw (p\i2) [bend left=30] to (a1\i) ;
\draw (p\i3) [bend left=30] to (a3\i) ;
\draw (p\i13) [bend right=30] to (p\i11) ;
}

\end{tikzpicture}
\subcaption{A propagator gadget $\tilde{H_i}$ (with $t=5$), which will be linked to edge encoding gadgets $G_{i,j}$ with $i \le j$.}
\label{twlb-buildingblock-vertices2}
\end{minipage}
\caption{The different uses of the gadgets: the edge encoding gadget and the two kinds of propagator gadgets.}

\label{twlb-buildingblock}
\end{figure}

We now describe the \emph{edge encoding gadget} $G_{i,j}$, for some $i,j \in [k]$ with $i < j$; an example is given in \cref{twlb-buildingblock-edges}.
We can uniquely describe an edge between a vertex in $V_i$ and a vertex in $V_j$ by an ordered pair $(a,b)$, representing the edge $v_i^av_j^b$, where $a, b \in [t]$.
We define an injective function~$\phi$ %
from such a pair to an integer in $\{3,6,\dotsc,3t^2\}$, as given
by $(a,b) \mapsto 3t(a-1) + 3b$.
Thus, the set $\{\phi(a,b) : v_i^av_j^b \in E_{i,j}\}$ uniquely describes the set $E_{i,j}$.
Let %
$(f_{i,j}^0,f_{i,j}^1,\dotsc,f_{i,j}^{p-1})$
be the sequence obtained after ordering the elements of this set %
in increasing order, and let $f_{i,j}^{p} = 3t^2+3$.
Note that $f_{i,j}^0 \ge 3$, and $f_{i,j}^q-f_{i,j}^{q-1} \ge 3$ for each $q \in [p]$.
Finally, we set $G_{i,j} := G\left(\left<f_{i,j}^0,f_{i,j}^1,\dotsc,f_{i,j}^{p}\right>\right)$.

  \sloppy
We define the \emph{propagator gadgets} as
$H_i := G(\left<3,6,\dotsc,3(t+1)\right>)$ and $\tilde{H_i} := G(\left<3t,6t,\dotsc,3(t+1)t\right>)$; see \cref{twlb-buildingblock-vertices,twlb-buildingblock-vertices2}.
Note that these gadgets have size $3(t+1)+1$ and $3t(t+1)+1$, respectively.
For each color class~$V_i$, where $i \in [2,k-1]$, we will take $i$ copies of the gadget $H_i$, and $k-i+1$ copies of $\tilde{H_i}$; whereas for
$i=1$ (or $i=k$), we take $k-1$ copies of $\tilde{H_i}$ (or $H_i$, respectively) only.
Let $\mathcal{H}_i$ denote the set containing the copies of $H_i$, and let $\tilde{\mathcal{H}_i}$ denote the copies of $\tilde{H_i}$.
Note that $|\mathcal{H}_i \cup \tilde{\mathcal{H}}_i| = k+1$ when $i \in [2,k-1]$, and $|\mathcal{H}_i \cup \tilde{\mathcal{H}}_i| = k-1$ when $i \in \{1,k\}$.

Finally, for each $i \in [2,k-2]$, we have a special gadget $G_{i,i} := G\left(\left<\phi(1,1),\phi(2,2),\dotsc,\phi(t,t)\right>\right)$.  Intuitively, this gadget is used to ensure %
the vertex selected in each $H_i \in \mathcal{H}_i$ is the same as in each $\tilde{H_i} \in \tilde{\mathcal{H}_i}$.
However, we also consider $G_{i,i}$ an edge encoding gadget, since it is treated as one in the construction.

\fussy

\begin{figure}[hbt]
\centering
\resizebox{250pt}{!}{
\begin{tikzpicture}

\node[draw,rectangle] (a11) at (0,0) {$G_{1,2}$} ;
\node[draw,rectangle] (a12) at (2,0) {$\tilde{H}_1$} ;
\node[draw,rectangle] (a13) at (4,0) {$G_{1,3}$} ;
\node[draw,rectangle] (a14) at (6,0) {$\tilde{H}_1$} ;
\node[draw,rectangle] (a15) at (8,0) {$G_{1,4}$} ;
\node[draw,rectangle] (a16) at (10,0) {$\tilde{H}_1$} ;

\node[draw,rectangle] (a21) at (0,-2) {$G_{2,2}$} ;
\node[draw,rectangle] (a22) at (2,-2) {$\tilde{H}_2$} ;
\node[draw,rectangle] (a23) at (4,-2) {$G_{2,3}$} ;
\node[draw,rectangle] (a24) at (6,-2) {$\tilde{H}_2$} ;
\node[draw,rectangle] (a25) at (8,-2) {$G_{2,4}$} ;
\node[draw,rectangle] (a26) at (10,-2) {$\tilde{H}_2$} ;

\foreach \k in {1,2}{
\foreach \i [count=\j from 2] in {1,...,5}{
  \draw (a\k\i) -- (a\k\j) ;
}
}

\node[draw,rectangle] (b11) at (0,-1) {$H_2$} ;
\node[draw,rectangle] (b12) at (4,-1) {$H_3$} ;
\node[draw,rectangle] (b13) at (8,-1) {$H_4$} ;

\draw (a11) -- (b11) -- (a21) ;
\draw (a13) -- (b12) -- (a23) ;
\draw (a15) -- (b13) -- (a25) ;

\node[draw,rectangle] (a33) at (4,-4) {$G_{3,3}$} ;
\node[draw,rectangle] (a34) at (6,-4) {$\tilde{H}_3$} ;
\node[draw,rectangle] (a35) at (8,-4) {$G_{3,4}$} ;
\node[draw,rectangle] (a36) at (10,-4) {$\tilde{H}_3$} ;

\draw (a33) -- (a34) -- (a35) -- (a36) ;

\node[draw,rectangle] (b21) at (0,-3) {$H_2$} ;
\node[draw,rectangle] (b22) at (4,-3) {$H_3$} ;
\node[draw,rectangle] (b23) at (8,-3) {$H_4$} ;

\draw (b21) -- (a21) ;
\draw (a33) -- (b22) -- (a23) ;
\draw (a35) -- (b23) -- (a25) ;

\node[draw,rectangle] (b32) at (4,-5) {$H_3$} ;
\node[draw,rectangle] (b33) at (8,-5) {$H_4$} ;

\draw (a33) -- (b32) ;
\draw (a35) -- (b33) ;

\draw (a11.west) --++(-0.3,0) --++(0,0.4) --++(11.4,0) --++(0,-0.4) -- (a16.east) ;
\draw (a21.west) --++(-0.3,0) --++(0,0.4) --++(11.4,0) --++(0,-0.4) -- (a26.east) ;
\draw (a33.west) --++(-0.3,0) --++(0,0.4) --++(7.4,0) --++(0,-0.4) -- (a36.east) ;

\draw (a11.north) --++(0,0.5) --++(0.6,0) --++(0,-4.4) --++(-0.6,0) -- (b21.south) ;
\draw (a13.north) --++(0,0.5) --++(0.6,0) --++(0,-6.4) --++(-0.6,0) -- (b32.south) ;
\draw (a15.north) --++(0,0.5) --++(0.6,0) --++(0,-6.4) --++(-0.6,0) -- (b33.south) ;

\end{tikzpicture}
}
\caption{The overall picture with $k=4$.}
\label{twlb-overall}
\end{figure}
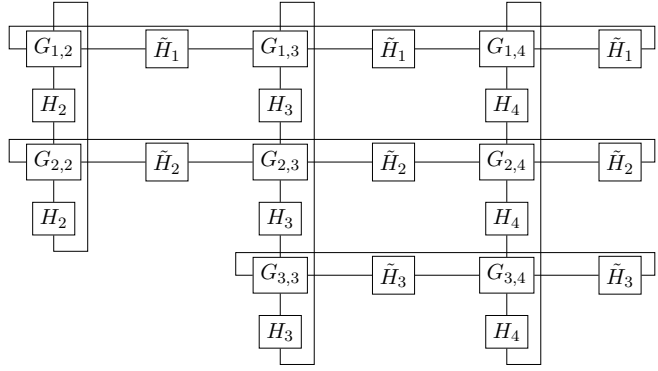

In order to describe how these gadgets are joined together in $G'$, as shown in \cref{twlb-overall}, we require some terminology.
Given some $G_{i,j}$ and $G_{i,j'}$ with $i,j,j' \in [k]$, %
we say we \emph{connect $G_{i,j}$ to $G_{i,j'}$ using $\tilde{H_i}$} to describe adding all nine edges between $D(G_{i,j})$ and $B(\tilde{H_i})$, and all nine edges between $D(\tilde{H_i})$ and $B(G_{i,j'})$.
In this case, we also say $\tilde{H_i}$ \emph{connects from $G_{i,j}$} and \emph{connects to $G_{i,j'}$}.
Given some $G_{i,j}$ and $G_{i',j}$ with $i,i',j \in [k]$, the operation of \emph{connecting $G_{i,j}$ to $G_{i',j}$ using $H_j$} is defined analogously.
We give the following cyclic ordering to the edge encoding gadgets: %
$(G_{1,2},G_{1,3},\dotsc,G_{1,k},G_{2,2},G_{2,3},\dotsc,G_{2,k},\dotsc,G_{k-1,k-1},G_{k-1,k})$.
For each $G_{i,j}$, we connect this gadget to the next gadget $G_{i,j'}$ in the cyclic ordering that matches on the first index using one of the copies of $\tilde{H_i}$, and also connect it to the next gadget $G_{i',j}$ in the ordering that matches on the second index using one of the copies of $H_j$.
For example, we connect $G_{1,3}$ to $G_{1,4}$ using a copy of $\tilde{H_1}$, and connect $G_{1,3}$ to $G_{2,3}$ using a copy of $H_3$.
This completes the construction.
    \begin{proof}[Proof of \cref{hardness}]
Observe that each vertex $v \in V(G')$ is contained in precisely one gadget, and so each vertex of $G'$ inherits either a `$u$' label or a `$w$' label from its gadget.
In what follows, whenever we refer to an edge encoding gadget $G_{i,j}$, or a propagator gadget $\tilde{H_i}$ or $H_j$, it is for some $i \in [1,k-1]$ and $j \in [2,k]$ with $i \leq j$.

\textbf{Treewidth.}
We now describe a path decomposition of $G'$ that illustrates that its pathwidth, and hence treewidth, is at most $54k-69$.

First, observe that for a gadget $H := G(\left<x_0,x_1,\dotsc,x_{z}\right>)$, there is a path decomposition where %
each bag has size at most~$4$.
By adding $B(H) \cup D(H)$ to every bag, we obtain a path decomposition where each bag has size at most~$10$; %
we denote this path decomposition by $\mathbf{P}(H)$.
Note that $H$ is only linked to other gadgets in $G'$ by edges with one end in either $B(H)$ or $D(H)$.

Recall that the edge encoding gadgets are joined together using propagator gadgets with respect to the cyclic ordering
$$(G_{1,2},G_{1,3},\dotsc,G_{1,k},G_{2,2},G_{2,3},\dotsc,G_{2,k},\dotsc,G_{k-1,k-1},G_{k-1,k}).$$
Consider an auxiliary multigraph $F$ on the vertex set $\{G_{i,j} : %
i \in [1,k-1], j \in [2,k], i \le j\}$
where there is an edge between $G_{i,j}, G_{i',j'} \in V(F)$ %
whenever
the gadget $G_{i,j}$ is connected to $G_{i',j'}$ using some propagator gadget in $G'$.
(Formally, there is an edge for $i=i'$ and $|j-j'| \in \{1,k-i,k-2\}$, or $j=j'$ and $|i-i'| \in \{1,j-1,k-2\}$.)

We now show that $F$ has pathwidth at most $3k-5$.
Let $\mathcal{G}_1 = \{G_{1,j} : j \in [2,k]\}$ and, for $i \in [2,k-1]$, let $\mathcal{G}_i = \{G_{i,j} : j \in [i,k]\}$.
Then $(\mathcal{G}_1 \cup \mathcal{G}_2 \cup \mathcal{G}_3, \mathcal{G}_1 \cup \mathcal{G}_3 \cup \mathcal{G}_4, \dotsc, \mathcal{G}_1 \cup \mathcal{G}_{k-2} \cup \mathcal{G}_{k-1})$ is a path decomposition for $F$ where
the largest bag, the first one, has size~$3k-4$.
We denote this path decomposition $\mathbf{P}(F)$.

We extend this to a path decomposition of $G'$ by
replacing each bag of $\mathbf{P}(F)$
with a path, which is in turn constructed from several concatenated ``subpaths'', one for each gadget.
Suppose, for some $i,j \in [k]$ with $i \le j$, we have that $\tilde{H_i}$ and $H_j$ connect to $G_{i,j}$ in $G'$, and $\tilde{H_i'}$ and $H_j'$ connect from $G_{i,j}$ in $G'$; then we denote
$X_{i,j} = D(\tilde{H_i}) \cup D(H_j) \cup B(G_{i,j}) \cup D(G_{i,j}) \cup B(\tilde{H_i'}) \cup B(H_j')$.
Let $Z \subseteq [k] \times [k]$ such that $\bigcup_{(i,j) \in Z} G_{i,j}$ is a bag of the path decomposition of $F$.
From this bag, we construct a path %
where each bag contains $Q=\bigcup_{(i,j) \in Z} X_{i,j}$.  %
The subpaths of this path are as follows.
For each $(i,j) \in Z$ we have a subpath obtained from $\mathbf{P}(G_{i,j})$ by adding $Q$ to each bag.
Every edge of $F$ is contained in some bag of the path decomposition, and corresponds to a propagator gadget $H$ %
of $G'$.
For each such $H$, we have a subpath obtained from $\mathbf{P}(H)$ by adding $Q$ to each bag.
These subpaths are then concatenated together, end to end, to create the path that replaces the bag $\bigcup_{(i,j) \in Z} G_{i,j}$ in $\mathbf{P}(F)$. 
After doing this for each bag, we obtain a path decomposition of $G'$.

Note that $|Z| \leq 3k-4$, %
and $|X_{i,j}| = 18$, for any $(i,j) \in Z$.  So $|Q| \leq 18(3k-4)$.  A path decomposition $\mathbf{P}(H)$, for some gadget $H$, has bags with size at most~$10$, but each bag meets $Q$ in precisely the elements $B(H) \cup D(H)$.  So the pathwidth of $G'$ is at most $18(3k-4)+4-1 = 54k-69$.

\textbf{Correctness ($\Rightarrow$).}
First, let $X$ be a multicolored $k$-clique in $G$; we will show that $G'$ has a set $S \subseteq V(G')$ such that $|S| = %
3 \binom{k+1}{2} -6$
and each component of $G'-S$ has at most $d$ vertices, where $d = 3t^2+3t+3$.
Let $\gamma(i)$ be the index of the unique vertex in $X \cap V_i$ for each $i \in [k]$; that is, $X \cap V_i = \{v_i^{\gamma(i)}\}$.
For each $H \in \mathcal{H}_i \cup \tilde{\mathcal{H}_i}$, we add the vertex $u_{\gamma(i)}(H)$ to $S$; there are %
$(k-2)(k+1)+2(k-1)=k(k+1)-4$ such gadgets, so this many
vertices are added to $S$ so far.
For each pair $i,j \in k$ with $i < j$,
there is some $q \in [p]$ such that $\phi(\gamma(i),\gamma(j)) = f_{i,j}^q$; we add the vertex $u_q(G_{i,j})$ to $S$.
For $i \in [2,k-2]$, we also add the vertex $u_{\gamma(i)}(G_{i,i})$ to $S$.
Now $|S| = k(k+1)-4 + \binom{k}{2} +k-2 = %
3\binom{k+1}{2}-6$.

We now consider the size of the components of $G'-S$.
We %
first analyze the size of the components of a gadget $G_{i,j}$, $\tilde{H_i}$ or $H_j$ after deleting $S$.
Note that $S$ meets the vertex set of one of these gadgets in precisely one vertex, and the deletion of this vertex splits the gadget into two components.
The two components of $G_{i,j} - u_q$ have $f_{i,j}^q = 3t(\gamma(i)-1)+3\gamma(j)$ and $f_{i,j}^{p} - f_{i,j}^q = 3t^2+3 - (3t(\gamma(i)-1)+3\gamma(j))$ vertices.
The two components of $\tilde{H_i} - u_{\gamma(i)}$ have $3t\gamma(i)$ and $3t(t+1-\gamma(i))$ vertices, while
the two components of $H_j - u_{\gamma(j)}$ have $3\gamma(j)$ and $3(t+1-\gamma(j))$ vertices.
These gadgets are joined in such a way that the size of a component of $G'-S$ is 
\begin{align*}
  & \big[3t(\gamma(i)-1)+3\gamma(j)\big] + 3t(t+1-\gamma(i)) + 3(t+1-\gamma(j)) \\
  &= 3t^2+3t+3 \\
  &= \big[3t^2+3 - \big(3t(\gamma(i)-1)+3\gamma(j)\big)\big] + 3t\gamma(i) + 3\gamma(j),
\end{align*}
as required.
Finally, observe that the only cycles in each component are contained in a gadget, and each gadget has no chordless cycles, so each component is a chordal graph.

\textbf{($\Leftarrow$).}
Suppose $G'$ has a set $S \subseteq V(G')$ with $|S| \le %
3 \binom{k+1}{2} -6$
such that each component of $G'-S$ has at most $d$ vertices, where $d = 3t^2+3t+3$.
We call any such set $S$ a \emph{solution}.

First, we show, loosely speaking, that we may assume each vertex in $S$ is a `$u$' vertex of its gadget, not a `$w$' vertex.
Let
$H$ be a gadget of $G'$ of order~$s$.
There are two cases to consider: the first is when, for some $r \in [1,s-1]$, we have that $S \cap V(P_r(H)) \neq \emptyset$.
Suppose %
$P_r(H)$ contains a pair of adjacent vertices $w$ and $w'$ such that $\{w,w'\} \cap S \neq \emptyset$.  If $w \in S$ and $w' \notin S$,
then, in $G'-(S \setminus \{w\})$, only the component containing $w'$ can have size more than $d$, and $|V(P_r(H))| \le 3t^2 < d$, so replacing $w'$ in $S$ with $u_{r-1}(H)$ or $u_r(H)$ also gives a solution.
If $\{w,w'\} \subseteq S$, %
then $(S \setminus \{w,w'\}) \cup \{u_{r-1}(H),u_r(H)\}$ is also a solution.
So we may assume that $V(P_r(H)) \cap S = \emptyset$ for each $r \in [1,s-1]$.

Now we consider the second case;
let $G_{i,j}$ be an edge encoding gadget, %
let $H \in \mathcal{H}_i$ and $\tilde{H} \in \tilde{\mathcal{H}_j}$ connect from $G_{i,j}$, and
let $J$ be the set of vertices $V(P_y(G_{i,j})) \cup V(P_z(H)) \cup V(P_z(\tilde{H}))$, for $(y,z) \in \{(p,0),(0,k+1)\}$.
Observe that $G'[J]$ is connected and $|J| \leq d$; intuitively, these are the vertices involved in the ``join'' of multiple gadgets in $G'$.
We show that if $J \cap S \neq \emptyset$, then there is some solution~$S'$ with $J \cap S' = \emptyset$.
Let $U := N_{G'}(J)$, %
so $|U|=3$.
If $|J \cap S| \geq 3$, then $(S\setminus J) \cup U$ is a solution.
Moreover, if $|U\setminus S| \le |J\cap S|$, then $(S\setminus J) \cup U$ is again a solution.
Assuming otherwise, we can pick $U' \subseteq U\setminus S$ such that $|U'| = |J\cap S|$.
If $G'[(J \cup U)\setminus S]$ is connected, then $S' = (S \setminus J) \cup U'$ is a solution.  But since $|J \cap S| \leq 2$, it follows, by the construction of $G'$, that $G'[J \setminus S]$ is connected.  Thus, in the exceptional case, the deletion of $J \cap S$ disconnects some $u \in U \setminus S$ from $G'[J\setminus S]$.  But in this case, if we ensure that $U'$ is chosen to contain $u$, then we still obtain a solution $S' = (S \setminus J) \cup U'$.

Next, we claim that %
each edge encoding gadget $G_{i,j}$ or propagator gadget $\tilde{H_i} \in \tilde{\mathcal{H}_i}$, has at least one vertex in $S$.
Consider the subgraph $D_{i,j}$ of $G'$ induced by $V(G_{i,j}) \cup V(\tilde{H_i}) \cup V(H_j)$, where $\tilde{H_i}$ and $H_j$ connect from $G_{i,j}$.
Recall that $G_{i,j}$ consists of $3t^2+3+1$ vertices, $\tilde{H_{i}}$ consists of $3t^2+3t+1$ vertices, $H_{j}$ consists of $3t+3+1$ vertices, and hence $D_{i,j}$ has size $2d+3$.
If $V(\tilde{H_{i}}) \cap S$ is empty, then the connected subgraph of $D_{i,j}-S$ containing $V(\tilde{H_i})$ also contains $P_p(G_{i,j})$, which has size at least $3$, so this connected subgraph contains
at least $3t^2 + 3t + 1 + 3 = d+1$ vertices; a contradiction.
Similarly, if $V(G_{i,j}) \cap S$ is empty, then the connected subgraph of $D_{i,j}-S$ containing $V(G_{i,j})$ also contains at least $3t$ vertices of $V(\tilde{H_i})$, so at least $d + 1$ in total; a contradiction.
So $|V(\tilde{H_{i}}) \cap S|, |V(G_{i,j}) \cap S| \geq 1$, as claimed.

Now we claim that each component of $G'-S$ has size exactly~$d$.
Pick $S' \subseteq S$ such that %
$|V(G_{i,j}) \cap S'| = 1$ for each edge encoding gadget $G_{i,j}$, 
and $|V(\tilde{H_i}) \cap S'| = 1$ for each $\tilde{H_i} \in \tilde{\mathcal{H}_i}$.
It follows that $|S'| = 2\big(\binom{k+1}{2}-2\big)$, and %
$|S \setminus S'| = \binom{k+1}{2}-2$.
Now, for distinct propagator gadgets $H, H' \in \bigcup \mathcal{H}_i$, there is no path in $G'-S'$ between a vertex in $H$ and a vertex in $H'$, so $G'-S'$ has at least $\binom{k+1}{2}-2$ components, one for each $H \in \bigcup \mathcal{H}_i$.
In fact, for every vertex $v$ of $G'-S$, there exists a vertex $v' \in V(H)$ for some $H \in \bigcup \mathcal{H}_i$ such that there is a path from $v$ to $v'$, so 
$G'-S'$ has precisely $\binom{k+1}{2}-2$ components.
Moreover, since $S$ consists only of `$u$' vertices, the deletion of each vertex in $S \setminus S'$ further increases the number of components by one.
As $|V(G')| = (2d+3) \big(\binom{k+1}{2}-2\big)$, so
$|V(G'-S)| = 2d \big(\binom{k+1}{2}-2\big)$,
and each of the $2\big(\binom{k+1}{2}-2\big)$ components of $G'-S$ has size at most $d$,
these components must have size precisely $d$, as claimed.

Next we show that each gadget $H_j \in \mathcal{H}_j$ %
also has at least one vertex in $S$.
Suppose we have some $H_j$ for which $S \cap V(H_j) = \emptyset$.
We calculate the size, modulo $3$, of the component $C$ of $G'-S$ that contains $H_j$.  Since the size of $V(C) \cap V(\tilde{H_i})$ or $V(C) \cap V(G_{i,j})$
is congruent to $0 \pmod 3$, and $|V(H_j)| \equiv 1 \pmod 3$,
we deduce that $|V(C)| \equiv 1 \pmod 3$; a contradiction.
So $|S \cap V(H_j)| \geq 1$ for every $H_j \in \mathcal{H}_j$ with $j \in [2,k]$.
Since $|S| = 3\binom{k}{2}$, it follows that each gadget meets $S$ in precisely one vertex.

Finally, suppose $u_{q}(G_{i,j}) \in S$, for some %
$q \in [p]$.
Then $\phi(a,b) = f_{i,j}^q$, for some $a,b \in [t]$.
Let $\tilde{H_i} \in \mathcal{H}_i$ and $H_j \in \mathcal{H}_j$ be the propagators that connect from $G_{i,j}$.
Now, the component of $G'-S$ containing $3t^2+3 - (3t(a-1)+3b)$ vertices of $G_{i,j} - u_q$ also contains $3ta'$ vertices of $\tilde{H_i}$, and $3b'$ vertices of $H_j$, for some $a',b' \in [t]$.
So %
$$3t^2 + 3ta' - 3t(a-1) + 3b' -3b +3 = 3t^2+3t+3.$$
Working modulo $t$, we deduce that $3(b'-b+1) \equiv 3 \pmod t$, hence $b = b'$.
It then follows that $3t(a'-(a-1)) = 3t$, so $a = a'$.
Thus $u_a(\tilde{H_i}), u_b(H_j) \in S$.

On the other hand, if for some $a,b \in [t]$ we have $u_a(\tilde{H_i}), u_b(H_{j}) \in S$, where $\tilde{H_i}$ and $H_{j}$ connect to $G_{i,j}$, then the component of $G'-S$ containing vertices from these three gadgets contains $3t(t+1-a)$ vertices from $\tilde{H_i}$,
as well as $3(t+1-b)$ vertices from $H_j$, and
$3t(a'-1) + 3b'$ from $G_{i,j}$ for some $a',b' \in [t]$.
Since this component has a total of $3t^2+3t+3$ vertices, working modulo $t$ we deduce that $3b' + 3 - 3b \equiv 3 \pmod t$, so $b=b'$.
It follows that 
$3t(a-a'+1) = 3t$, so $a=a'$.
Thus, $u_{q}(G_{i,j}) \in S$ for $q \in [p]$ such that $\phi(a,b) = f_{i,j}^q$.

We deduce that for every $l \in [k]$, there exists some $\gamma(l)$ such that
$V(\tilde{H}) \cap S = \{u_{\gamma(i)}\}$ for every $\tilde{H} \in \tilde{\mathcal{H}_{i}}$, 
$V(H) \cap S = \{u_{\gamma(j)}\}$ for every $H \in \mathcal{H}_j$, and
$V(G_{i,j}) \cap S = \{u_q\}$ for $q \in [p]$ such that $f_{i,j}^q=\phi(\gamma(i),\gamma(j))$.
It follows that each $v_i^{\gamma(i)}v_j^{\gamma(j)}$ is an edge of $G$, and $X=\{v_i^{\gamma(i)} : i \in [k]\}$ is a multicolored $k$-clique in $G$, as required.
\end{proof}

\Cref{hardness} implies that \BPCVD has no algorithm running in time $f(w)n^{\cO(1)}$,
assuming $\textrm{FPT} \neq W[1]$.
However, we can say something stronger, assuming the ETH holds.
Since, in the parameterized reduction in the previous proof, the treewidth of the reduced instance $G'$ has linear dependence on $k$, a $f(w)n^{o(w)}$-time algorithm for %
this problem
would lead to a $f(k)n^{o(k)}$-time algorithm for \MC.
But, assuming the ETH holds, no such algorithm for \MC exists~\cite{Lokshtanov2011a}.
So we have the following:

\begin{theorem}
  Unless the ETH fails, there is no %
  $f(w)n^{o(w)}$-time algorithm for %
  \BPCVD when $\cP$ contains all chordal graphs.
\end{theorem}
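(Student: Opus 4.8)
The plan is to reuse, essentially verbatim, the polynomial-time reduction from \MC to \BPCVD constructed in the proof of \cref{hardness}, and to track the parameters carefully. Recall that from an instance $(G,k,(V_1,\dots,V_k))$ of \MC with colour classes of size $t$, that reduction builds a graph $G'$ together with integers $k'=3\binom{k+1}{2}-6$ and $d=3t^2+3t+3$ such that $G$ has a multicoloured $k$-clique if and only if $G'$ has a set of at most $k'$ vertices whose removal leaves only components of size at most $d$; moreover each such component is chordal, so this really is an equivalence of instances of \BPCVD whenever $\cP$ contains all chordal graphs. The two facts I would extract from that proof are: (i) $|V(G')|=(2d+3)\bigl(\binom{k+1}{2}-2\bigr)$, which is bounded by a polynomial in the size of the \MC instance (and $G'$ is produced in polynomial time); and (ii) the treewidth of $G'$ is at most $54k-69$, that is, $O(k)$ and, crucially, independent of $t$.

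Next I would argue by contradiction. Suppose \BPCVD admits an algorithm running in time $f(w)\,n^{o(w)}$ on $n$-vertex graphs of treewidth $w$, for some function $f$, whenever $\cP$ contains all chordal graphs. Running this algorithm on $(G',k',d)$, and using $w\leqslant 54k-69$ together with $o(54k-69)=o(k)$, its running time is at most $f(54k-69)\cdot |V(G')|^{o(k)}$. Since $|V(G')|$ is polynomial in the size $N$ of the original \MC instance, $|V(G')|^{o(k)}=N^{o(k)}$; adding the polynomial cost of building $G'$, this yields an algorithm solving \MC in time $g(k)\,N^{o(k)}$ for some function $g$.

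Finally I would invoke the known lower bound: \MC (equivalently, \textsc{Clique} parameterized by the size of the sought clique) has no $g(k)\,N^{o(k)}$-time algorithm unless the ETH fails~\cite{Lokshtanov2011a}. This contradiction completes the proof. There is no real obstacle here beyond bookkeeping, which is why the statement is presented as a consequence of \cref{hardness} rather than proved from scratch; the one point that deserves a sentence of care is item (ii), namely confirming that the path decomposition produced in the proof of \cref{hardness} has width linear in $k$ with no hidden dependence on $t$. This is exactly what the construction guarantees: the (potentially very large) edge-encoding and propagator gadgets are attached to the rest of $G'$ only through their size-$3$ boundary sets $B(\cdot)$ and $D(\cdot)$, so at any point of the sweep only a bounded-in-$k$ number of such boundaries, plus a single gadget's own width-$4$ decomposition, need to be kept in a bag.
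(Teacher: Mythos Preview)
Your proposal is correct and follows essentially the same approach as the paper: both simply observe that the reduction from \MC in \cref{hardness} produces an instance whose treewidth is linear in $k$ (and independent of $t$), so an $f(w)n^{o(w)}$-time algorithm for \BPCVD would yield an $f(k)n^{o(k)}$-time algorithm for \MC, contradicting the ETH. Your version is a bit more explicit about bookkeeping (the size of $G'$ and the independence of the pathwidth bound from $t$), but the argument is the same.
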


Furthermore, Marx~\cite{Marx2010} showed that, assuming the ETH holds, \SI has no $f(k)n^{o(k / \log k)}$-time algorithm, where $k$ is the number of edges in the smaller graph.
By reducing from \SI, instead of \MC, we obtain a lower bound %
with the combined parameter treewidth and solution size.

\begin{theorem}
    \label{sireduc}
  Unless the ETH fails, there is no 
  $f(k')n^{o(k'/\log k')}$-time algorithm for %
  \BPCVD,
  where $k' = w + k$,
  when $\cP$ contains all chordal graphs.
\end{theorem}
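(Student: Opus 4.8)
The plan is to reduce from \SI, reusing---with a single structural change---the reduction built just before \cref{hardness}. As the \MC reduction already used the colored form of \textsc{Clique}, we likewise use the colored form of \SI: the host graph $G$ carries a partition $V(G)=W_1\cup\dots\cup W_\ell$, the pattern $H$ has vertex set $[\ell]$ with $|E(H)|=k$ edges, and one seeks an injective homomorphism $\phi\colon H\to G$ with $\phi(i)\in W_i$ for all $i$; this is the setting for which Marx~\cite{Marx2010} rules out $f(k)n^{o(k/\log k)}$-time algorithms under the ETH. As in \cref{hardness} we pad so that $|W_i|=t$ for every $i$, and we fix an arbitrary linear order on $[\ell]$.

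The only modification to the construction is this: instead of an edge-encoding gadget $G_{i,j}$ for \emph{every} pair $i<j$, we create one edge-encoding gadget $G_e$ \emph{only for each} $e=\{i,j\}\in E(H)$ (with $i<j$), encoding via the same injective map $\phi$ the set of edges of $G$ between $W_i$ and $W_j$; pattern vertex $i$ plays the ``row'' role in $G_{\{i,j\}}$ when $i<j$ and the ``column'' role when $j<i$. For each pattern vertex $i$ we chain the edge-gadgets incident to $i$ exactly as before, using copies of $H_i$ between consecutive column-incident gadgets, copies of $\tilde{H_i}$ between consecutive row-incident gadgets, and, for pattern vertices incident to edges of both types, one coherence gadget $G_{i,i}$ linking the two sub-chains. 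There are $k$ edge-gadgets and $\sum_i \cO(\deg_H(i))=\cO(k)$ propagator and coherence gadgets, and since the sought deletion set removes exactly one ``$u$''-vertex from each gadget, its size is $k_{\mathrm{sol}}=\cO(k)$.

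Correctness follows verbatim from the argument for \cref{hardness}: an injective color-respecting homomorphism $\phi$ gives a deletion set by removing from each gadget the $u$-vertex indexed by the relevant value of $\phi$, after which every component of $G'-S$ is a chordal graph of size exactly $d=3t^2+3t+3$; conversely, any valid deletion set of size at most $k_{\mathrm{sol}}$ is first normalized so that it consists of one $u$-vertex per gadget, and then the same modular arithmetic forces the chosen values to agree along every edge of $H$, which defines $\phi$. Since $\cP$ contains all chordal graphs this is a genuine instance of \BPCVD. For the treewidth, form the auxiliary multigraph $F$ on the $\cO(k)$ gadgets with an edge between two gadgets joined by a connecting operation; $F$ has a path decomposition of width $<|V(F)|=\cO(k)$, and replacing each of its bags by the concatenation of the constant-width path decompositions of the gadgets it contains---exactly as in the proof of \cref{hardness}, using that every gadget meets the rest of $G'$ only through its constant-size interface sets $B(\cdot)$ and $D(\cdot)$---yields a path decomposition of $G'$ of width $\cO(k)$. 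Hence $w=\cO(k)$, so $k'=w+k_{\mathrm{sol}}=\cO(k)$, and as $|V(G')|$ is polynomial in $n:=|V(G)|+|V(H)|$, an $f(k')\,|V(G')|^{o(k'/\log k')}$-time algorithm for \BPCVD would decide \SI in time $g(k)\,n^{o(k/\log k)}$, contradicting the ETH.

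The main obstacle, and the only real departure from the \MC reduction, is to force \emph{both} the treewidth and the solution size of $G'$ to be $\cO(k)$ for an arbitrary pattern $H$: the \MC construction had $\Theta(k^2)$ gadgets arranged in a grid and needed only $w=\cO(k)$, whereas here we must use that \SI is parameterized by the number of edges so as to have just $\cO(k)$ gadgets, and must check that the less regular gadget-incidence graph of an arbitrary pattern still admits a path decomposition of width $\cO(k)$ that is compatible with the constant-size gadget interfaces---which it does, if only because it has $\cO(k)$ vertices.
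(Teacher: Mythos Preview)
Your proposal is correct and follows essentially the same approach as the paper: reduce from \SI by reusing the \MC gadgetry but creating an edge-encoding gadget $G_{i,j}$ only for pairs $\{i,j\}\in E(H)$, so that the number of gadgets, the solution size, and the treewidth all become $\cO(|E(H)|)$. The paper phrases the coloring slightly differently (it builds $|V(H)|$ disjoint copies of $V(G)$ rather than starting from a pre-colored host) and keeps $G_{i,i}$ for every $i\in[2,k-1]$ rather than only when needed, but these are cosmetic; your treewidth argument via the trivial bound on the auxiliary multigraph $F$ is coarser than the paper's ``routine adaptation'' yet entirely sufficient, since only $\cO(|E(H)|)$ is required.
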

\begin{proof}
  Let $(G,H)$ be a \SI instance where the task is to find if $G$ has a subgraph isomorphic to $H$.
Let $k:=|V(H)|$ and $t:=|V(G)|$, and suppose $V(G) = \{v^a : a \in [t]\}$ and $V(H) = \{v_i : i \in [k]\}$.
Let $V_i = \{v_i^a : a \in [t]\}$ for each $i \in [k]$, and
let $G^+$ be the graph on the vertex set $\bigcup_{i \in [k]} V_i$
with an edge $v_i^av_j^b$ if and only if $i \neq j$ and $v^av^b$ is an edge of $G$.
Now the task is to select $|E(H)|$ edges of $G^+$ that induce a \emph{multicolored} subgraph of $G^+$; that is, the vertex set of this edge-induced subgraph meets each $V_i$ in exactly one vertex.

\sloppy
We construct $G'$ from $G^+$ using a similar construction as in the proof of \cref{hardness}, but we only have an edge encoding gadget $G_{i,j}$ for $1 \leq i < j \leq k$ when $v_iv_j$ is an edge in $H$.
More specifically, we take the subsequence of $(G_{1,2},G_{1,3},\dotsc,G_{1,k},G_{2,2},G_{2,3},\dotsc,G_{2,k},\dotsc,G_{k-1,k-1},G_{k-1,k})$ consisting of each $G_{i,j}$ for which $v_iv_j \in E(H)$, as well as $G_{i,i}$ for all $i \in [2,k-1]$, and, as before, connect
each $G_{i,j}$ to the next $G_{i,j'}$ in the cyclic ordering that matches on the first index using a copy of $\tilde{H_i}$, and also connect it to the next gadget $G_{i',j}$ in the ordering that matches on the second index using a copy of $H_j$.
Note that $p=|E_{i,j}| = 2|E(G)|$. %

\fussy
By a routine adaptation of \cref{hardness}, it is easy to see that %
$\tw(G') = \cO(k)$, and that $G$ has a subgraph isomorphic to $H$ if and only if $G'$ has a set $S \subseteq V(G')$ of size at most $k'$ such that each component of $G'-S$ has size at most~$d$.
Now the parameter in the reduced instance is $k'' := \tw(G') + k' = \cO(|V(H)|) + \cO(|V(H)|^2) = \cO(|E(H)|)$.
Thus, an $f(k'')n^{o(k''/\log k'')}$-time algorithm for %
\BPCVD
would lead to an algorithm for \SI running in time $f(|E(H)|)n^{o(|E(H)|/\log |E(H)|)}$. %
But there is no algorithm for \SI with this running time unless the ETH fails~\cite{Marx2010}.
\end{proof}

\def\ocirc#1{\ifmmode\setbox0=\hbox{$#1$}\dimen0=\ht0 \advance\dimen0
  by1pt\rlap{\hbox to\wd0{\hss\raise\dimen0
  \hbox{\hskip.2em$\scriptscriptstyle\circ$}\hss}}#1\else {\accent"17 #1}\fi}

\end{document}